\DeclarePairedDelimiter{\ceil}{\lceil}{\rceil}
\let\mathscr\mathbscr
\newcolumntype{x}[1]{>{\centering\arraybackslash}p{#1}}
\declaretheoremstyle[headfont=\bfseries, 
    bodyfont=\normalfont]{normalhead}
\declaretheorem[style=normalhead]{Example}
\newtheorem{Theorem}{Theorem}
\newtheorem{Proposition}{Proposition}
\newtheorem{Lemma}{Lemma}
\newtheorem{Corollary}{Corollary}
\newtheorem{Remark}{Remark}
\newtheorem{Definition}{Definition}
\newtheorem{Claim}{Claim}
\begin{document}

\title{On the Sub-optimality of Single-Letter Coding over Networks}
\author{Farhad Shirani  and S. Sandeep Pradhan\footnote{This work was supported by NSF grant CCF-1422284. This work will be presented in part at IEEE International Symposium on Information Theory (ISIT), July 2017.} \\
Dept. of Electrical Engineering and Computer Science \\
 Univ. of  Michigan, Ann Arbor, MI. \\\date{} }

\maketitle \thispagestyle{empty} \pagestyle{plain}

\begin{abstract}

In this paper, we establish a new bound tying together the effective length and the maximum correlation between the outputs of an arbitrary pair of Boolean functions which operate on two sequences of correlated random variables. We derive a new upper bound on the correlation between the outputs of these functions. The upper bound may find applications in problems in many areas which deal with common information. We build upon Witsenhausen's result \cite{ComInf2} on maximum correlation.  The present upper bound takes into account the effective length of the Boolean functions in characterizing the correlation. 

We use the new bound to characterize the communication-cooperation tradeoff in multi-terminal communications. We investigate binary block-codes (BBC). A BBC is defined as a vector of Boolean functions. We consider an ensemble of BBCs which is randomly generated using single-letter distributions. We characterize the vector of dependency spectrums of these BBCs. We use this vector to bound the correlation between the outputs of two distributed BBCs. Finally, the upper bound is used to show that the large blocklength single-letter coding schemes studied in the literature are sub-optimal in various multi-terminal communication settings.
\end{abstract}


\section{Introduction}

In his paper, "A Mathematical Theory of Communications" \cite{Shannon} - often regarded as the Magna Carta of digital communications - Shannon pointed out that in order to exploit the redundancy of the source in data compression, it is necessary to compress large blocks of the source simultaneously. More precisely, optimality is only approached as the  blocklength of the encoding functions approaches infinity. The same observation was made in the case of point-to-point (PtP) channel coding. As a result, a common feature of the coding schemes used in PtP communication is that they have large blocklengths.  In the source coding problem, by compressing large blocks at the same time, one can exploit the redundancy in the source. In the channel coding problem, transmitting the input message over large blocks allows the decoder to exploit the typicality of the noise vector.

The blocklength of an encoding function is defined as the length of its input sequence. An additional parameter of interest is the \textit{effective-length} of the encoding function. The effective-length can be interpreted as the average number of input elements necessary to determine an output element. It can be shown that in order to achieve optimality in PtP communications, the effective-length of the encoding function must be asymptotically large. Loosely speaking, this means that each output element of the encoding function must be a function of the entire input sequence, and the length of the input sequence must increase asymptotically. As a result, in addition to having large blocklengths, the encoding functions generated by Shannon's coding scheme have large {effective-lengths}.

There is a subtle difference between the blocklength of an encoding function and its effective-length. As an example, assume that $\underline{e}(X^n)$ is an encoding function with blocklength and effective-length both equal to $n$. This means that the encoding function takes input sequences of length $n$ and its output elements are functions of the entire input sequence. Define a new encoding function $\underline{f}(X^{2n})$ as the concatenation of $\underline{e}$ with itself (i.e. $\underline{f}(X^{2n})=(\underline{e}(X^n),\underline{e}(X_{n+1}^{2n}))$. Then, $\underline{f}(X^{2n})$ has blocklength equal to $2n$ and effective-length equal to $n$. Generally, concatenation of encoding functions increases blocklength, but does not affect effective length. 

 Similar to PtP communications, in multi-terminal communications, it is well-known that the performance of block-encoders is super-additive with respect to blocklength, i.e., the best performance of block-encoders of a certain length is an increasing function of the blocklength. However, in this work we show that the performance as a function of the effective-length of the encoder is not super-additive. This phenomenon can be explained as follows. In multiterminal communication it is often desirable to maintain correlation among the output sequences at different nodes. This requirement can be due to explicit constraints in the problem statement such as joint distortion measures in multi-terminal source coding, or it can be due to implicit factors such as the need for interference reduction, or the nature of the shared communication medium in multi-terminal channel coding. In the latter case, the correlation between the outputs is necessary as a means to further the cooperation among the transmitters. In this paper, we show that pairs of encoding functions with large effective-lengths are inefficient in coordinating their outputs. This is due to the fact that such encoding functions are unable to produce highly correlated outputs from highly correlated inputs. The loss of correlation undermines the encoders' ability to cooperate and take advantage of the multi-terminal nature of the problem. In PtP communication problems, where there is only one transmitter, the necessity for cooperation does not manifest itself. For this reason, although encoders with asymptotically large effective-lengths are optimal in PtP communications, they are sub-optimal in various network communication settings.

 In this work we make the following contributions. First, we consider two Boolean functions of two correlated discrete, memoryless sources  (DMS). We {quantify} the correlation between these functions {using} the probability that their outputs are equal. We derive an upper-bound on this correlation. The bound is presented as a function of the effective-length of the two Boolean functions. Next, we consider two arbitrary binary block codes (BBC) as defined in \cite{ComInf1}. A BBC is a vector of Boolean functions.  The two encoding functions are applied to two correlated DMSs. We quantify the correlation between these encoding functions using the average probability that any two output-bits are equal, where the average is over the elements of the output vector.  We use the bound from the previous step, to show that  the code ensembles generated by the standard large blocklength single-letter coding schemes used in multi-terminal communication problems do not produce highly correlated outputs from highly correlated inputs. We prove that there is a discontinuity in the correlation preserving abilities of encoding functions produced using single-letter coding schemes. The single-letter coding schemes considered in this work are general and include Shannon's point-to-point source coding scheme \cite{Shannon}, the Berger-Tung scheme \cite{Markov}, the Zhang-Berger scheme \cite{ZB}, the Cover-El Gamal- Salehi scheme \cite{CES}, and the Salehi-Kurtas scheme \cite{ICcorr}. Lastly, we show that this discontinuity leads to the sub-optimality of single-letter coding schemes in various multi-terminal communication problems.

%
%
%

In the first step, we characterize the maximum correlation between the outputs of a pair of Boolean functions of random sequences. This is a fundamental problem of broad theoretical and practical interest. Consider the two distributed agents shown in Figure \ref{fig:agents}.
A pair of correlated discrete memoryless sources (DMS) are fed to the two agents. These agents are to each make a binary decision. The goal of the problem is to maximize the correlation between the outputs of these agents subject to specific constraints on the decision functions. The study of this setup has had impact on a variety of disciplines, for instance, by taking the agents to be two encoders in the distributed source coding problem \cite{FinLen}, or two transmitters in the interference channel problem, or Alice and Bob in a secret key-generation problem \cite{security2, security3}, or two agents in a distributed control problem \cite{control}. 

A special case of the problem is the study of common-information (CI) generated by the two agents. As an example, consider two encoders in a Slepian-Wolf (SW) distributed compression setup \cite{SW}. Let $U_1,U_2$, and $V$ be independent, binary random variables.
\begin{figure}[!t]
\centering
\includegraphics[height=1.2in, draft=false]{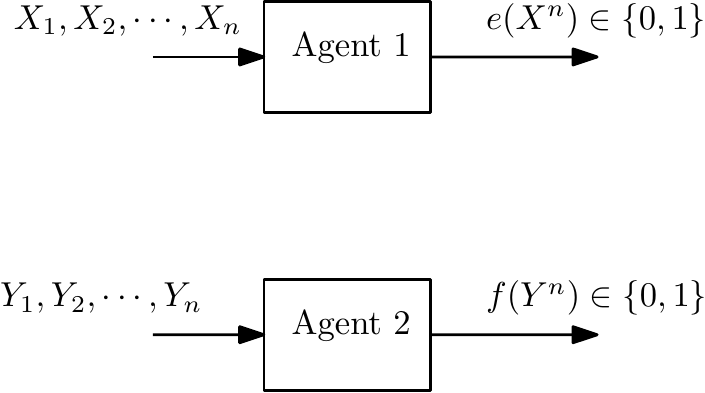}
 \caption{Correlated Boolean decision functions.}
\label{fig:agents}
\end{figure}
Then, an encoder observing the DMS $X=(V,U_1)$, and an encoder observing $Y=(V,U_2)$ agree on the value of $V$ with probability one. The random variable $V$ is called the CI observed by the two encoders. These encoders can transmit the sources to the joint decoder with a sum-rate equal to $H(V)+H(U_1)+H(U_2)$. This gives a reduction in rate equal to the entropy of $V$, as compared to the independent compression of the sources. The gain in performance is directly related to the entropy of the CI. So, it is desirable to maximize the entropy of the CI between the encoders.
 
  In \cite{ComInf1}, the authors investigated multi-letterization as a method for increasing the CI. They showed that multi-letterization does not lead to an increase in the CI. More precisely, they prove the following statement: 
  
{ \textit{Let $X$ and $Y$ be two DMSs. Let $(X^n,Y^n)$ denote $n$ independent copies of the sources. Let $f_{n}(X^n)$ and $g_{n}(Y^n)$ be two sequences of functions which converge to one another in probability. Then, the normalized entropies $\frac{1}{n}H(f_{n}(X^n))$, and $\frac{1}{n}H(g_{n}(Y^n))$ are less than or equal to the entropy of the CI between $X$ and $Y$ for large $n$. }}
 
 A stronger version of the result was proved by Witsenhausen \cite{ComInf2}, where the maximum correlation between the outputs was bounded from above subject to the restriction that the entropy of the binary output is fixed. 
 It was shown that the maximum correlation is achieved if both users output a single element of the sequence without further processing (e.g. each user outputs the first element of its corresponding sequence). This was used to conclude that common information can not be increased by multi-letterization. While, the result has been used extensively in a variety of areas such as information theory, security, and control \cite{security2,security3, control, dictator_fun}, in many problems, there are additional constraints on the set of admissible decision functions. For example, one can consider constraints on the effective-length of the decision functions. These are natural constraints, for instance, in the case of communication systems, the users have lower bounds on their effective-lengths due to the rate-distortion requirements in the problem. 
 
 In this paper, the problem under these additional constraints is considered. 
Initially, we define the effective-length for additive Boolean functions. For non-additive functions, we use a method similar to \cite{ComInf2}, and map the Boolean functions to the set of real-valued functions. Using tools in functional analysis, we find an additive decomposition of these functions. The decomposition components have well-defined effective-lengths. Using the decomposition we find the dependency spectrum of the Boolean function. The dependency spectrum is a generalization of the effective-length and is defined for non-additive Boolean functions. Next, a new upper bound on the correlation between the outputs of arbitrary pairs of Boolean functions is derived. The bound is presented as a function of the dependency spectrum of the Boolean functions.  

In the next step, we show that the loss in correlation caused by the application of large effective-length codes causes a discontinuity in the performance of schemes using such codes in some multi-terminal problems. This was first observed in the Berger-Tung achievable rate-distortion region for the problem of distributed source coding \cite{Jchendisc} \cite{wagner}. It was noted that when the common information is available to the two encoders in the distributed source coding problem, the performance is discontinuously better than when the common information is replaced with highly correlated components. In \cite{FinLen}, we argued that the discontinuity in performance is due to the fact that the encoding functions in the Berger-Tung scheme preserve common information, but are unable to preserve correlation between highly correlated components. We proposed a new coding scheme, and derived an improved achievable rate-distortion region for the two user distributed source coding problem \cite{FinRD}. The new strategy uses a concatenated coding scheme which consists of one layer of codes with finite effective-length, and one layer of codes with asymptotically large effective-lengths.

{In this paper, we identify the underlying phenomenon which leads to the sub-optimality of the Berger-Tung coding scheme. We show that the phenomenon is not restricted to the distributed source coding problem, rather it is observed in a variety of coding schemes which are used for the distributed transmission of sources over channels, and the multiple descriptions source coding problem. The communication setups include the transmission of correlated sources over the multiple-access channel, as well as the interference channel. Furthermore, we derive a new necessary condition for optimality of coding schemes used in these problems. The necessary condition is that the encoding functions used for the distributed transmission of correlated sources must have effective-lengths which are bounded from above (i.e. their lengths are not asymptotically large.). It is shown that the standard coding strategies used in the literature have sub-optimal performance since they do not satisfy this necessary condition.  }

%
%
%
%

%
%
%
The rest of the paper is organized as follows: In Section \ref{sec:not}, we explain the notation used in this paper. Section \ref{sec:eff} develops the tools used for analyzing Boolean functions. In Section \ref{sec:corr}, we present the bound which relates the effective length and the maximum correlation between two Boolean functions of sequences of random variables. In Section \ref{sec:SLCS}, it is shown that encoding functions generated using single-letter coding ensembles have large effective lengths. Section \ref{sec:ex} contains three examples where finite effective length is necessary to achieve optimal performance in multi-terminal communications. Finally, Section \ref{sec:conc} concludes the paper. 
\section{Notation}\label{sec:not}
In this section, we introduce the notation used in this paper. We represent random variables by capital letters such as $X, U$. Sets are denoted by calligraphic letters such as $\mathcal{X}, \mathcal{U}$.  Particularly, the set of natural numbers and real numbers are shown by $\mathbb{N}$, and $\mathbb{R}$, respectively. For a random variable $X$, the corresponding probability space is $(\mathcal{X}, \mathbf{F}_{X}, P_X)$, where $\mathbf{F}$ is the underlying $\sigma$-field. The set of all subsets of $\mathcal{X}$ is written as $2^{\mathcal{X}}$. There are three different notations used for different classes of vectors. For random variables, the $n$-length vector $(X_1,X_2,\cdots,X_n), X_i\in \mathcal{X}$ is denoted by $X^n\in \mathcal{X}^n$. For the vector of functions $(e_1(X),e_2(X),\cdots, e_n(X))$ we use the notation $\underline{e}(X)$. The binary string $(i_1,i_2,\cdots,i_n), i_j\in \{0,1\}$ is written as $\mathbf{i}$.  As an example, the set of functions $\{\underline{e}_{\mathbf{i}}(X^n)| \mathbf{i}\in \{0,1\}^n\}$ is the set of $n$-length vectors of functions $(e_{1,\mathbf{i}},e_{2,\mathbf{i}},\cdots,e_{n,\mathbf{i}})$ operating on the vector $(X_1,X_2,\cdots,X_n)$ each indexed by an $n$-length binary string $(i_1,i_2,\cdots,i_n)$. The vector of binary strings $(\mathbf{i}_1,\mathbf{i}_2,\cdots, \mathbf{i}_n)$ denotes the standard basis for the $n$-dimensional space (e.g. $\mathbf{i}_1=(0,0,\cdots,0,1)$). The vector of random variables $(X_{j_1},X_{j_2},\cdots, X_{j_k}), j_i\in [1,n], j_i\neq j_k$, is denoted by $X_{\mathbf{i}}$, where $i_{j_l}=1, \forall l\in [1,k]$. For example, take $n=3$, the vector $(X_1,X_3)$ is denoted by $X_{101}$, and the vector $(X_1,X_2)$ by $X_{110}$.  Particularly, $X_{\mathbf{i}_j}=X_j , j\in [1,n]$. Also, for $\mathbf{t}=\underline{1}$, the all-ones vector, $X_{\mathbf{t}}=X^n$. For two binary strings $\mathbf{i},\mathbf{j}$, we write $\mathbf{i}\leq\mathbf{j}$ if and only if $i_k\leq j_k, \forall k\in[1,n]$. Also, we write $\mathbf{i}<\mathbf{j}$ if $\mathbf{i}\leq\mathbf{j}$ and $\mathbf{i}\neq\mathbf{j}$.
 For a binary string $\mathbf{i}$ we define $N_{\mathbf{i}}\triangleq w_H(\mathbf{i})$, where $w_H$ denotes the Hamming weight. Lastly, the vector $\sim \mathbf{i}$ is the element-wise complement of $\mathbf{i}$.


\section{The \textit{effective-length} of an Encoder}\label{sec:eff}
In this section, we define a set of parameters which measure the {effective-length} of an encoding function. This is done in several steps. First, we characterize the effective-length of the additive Boolean functions. Then, for general Boolean functions, we find a decomposition of these functions into additive functions, and define a generalization of the effective-length called the `\textit{dependency spectrum}' of a Boolean function. 
\subsection{Additive Binary Block Encoders}
We only consider encoders with binary outputs\footnote{The analysis provided in this paper can be generalized to arbitrary finite output alphabets. The interested reader can refer to Section 7 in \cite{ComInf2}. }. The encoder can be viewed as a vector of Boolean functions. The following provides the definition of the effective-length of an additive Boolean function:
\begin{Definition}
 Let $\mathcal{X}$ be a finite set, and $n$ an integer. For a Boolean function $e:\mathcal{X}^n\to \{0,1\}$ defined by $e(X^n)=\sum_{i\in \mathsf{J}}X_i, \mathsf{J}\subset [1,n] $, where the addition operator is the binary addition, the effective-length is defined as the cardinality of the set $\mathsf{J}$.
\end{Definition}
The effective-length of an encoding function - which is a vector of additive Boolean functions - is defined as the average of the effective-lengths of each of its elements. In order to clarify this definition, let us look at the following example.
\begin{Example}
 Let $\mathcal{X}=\{0,1\}$. Consider an encoding function $\underline{e}$ which takes $n$-length blocks of the input $X$ and produces output elements $e_{i}(X^n), i\in[1,n]$ using the following mapping:
\begin{align}
{e}_i(X^n)=    \begin{cases}
      X_i\oplus_2 X_{i+1}, & i\neq n, \\
      X_n\oplus_2X_1. & i=n.
    \end{cases}
\end{align}
Clearly, each output element is completely determined by the value of $X_i$ and $X_{i\oplus_n 1}$. Based on our interpretation, each component of $\underline{e}$ has  {effective-length} equal to two. 
\end{Example}
In the above example, the effective-length has a scalar value, and the value is equal to the minimum number of input elements required to compute the output element. It is tempting to extend this definition to arbitrary functions. 
For a generic encoding function, every output $e_i(X^n), i\in [1,n]$ depends on most of the elements in $X^n$. Hence, 
the value of the effective length as defined above would be essentially trivial and of little use. For an arbitrary encoding function, it would be more meaningful to ask questions such as how strongly does the first element $X_1$ affect the output of $e_i(X^n)$? Is this effect amplified when we take $X_2$ into account as well? 
Is there a subset of random variables that (almost) determines the value of the output? In this section, we formulate these questions in mathematical terms. We define the {dependency spectrum} as a vector which captures the correlation between different subsets of the input elements with each element of the output. The dependency spectrum can be viewed as a generalization of the effective-length.

   
   We proceed by formally defining the problem. We assume that two correlated DMS's are being fed to two arbitrary encoders, and analyze the correlation between the outputs of these encoders. The following gives the formal definition for DMS's.
 \begin{Definition}
  $(X,Y)$ is called a pair of DMS's if we have $P_{X^n,Y^n}(x^n,y^n)=\prod_{i\in[1,n]}P_{X_i,Y_i}(x_i,y_i), \forall n\in \mathbb{N}, x^n\in \mathcal{X}^n,y^n\in \mathcal{Y}^n$, where $P_{X_i,Y_i}=P_{X,Y}, \forall i\in [1,n]$, for some joint distribution $P_{X,Y}$.
\end{Definition}
 Akin to the results presented in \cite{ComInf2} and \cite{ComInf1}, we restrict our attention to the binary block encoders (BBE), which are defined below. 
\begin{Definition}
 A Binary-Block-Encoder is characterized by the triple $(\underline{e},\mathcal{X},n)$, where $\underline{e}$ is a mapping $\underline{e}:\mathcal{X}^n\to \{0,1\}^n$, $\mathcal{X}$ is a finite set, and $n$ is an integer.  
\end{Definition}
We refer to a BBE by its corresponding mapping $\underline{e}$. The mapping $\underline{e}$ can be viewed as a vector of functions $(e_i)_{i\in [1,n]}$, where $e_i:\mathcal{X}^n\to\{0,1\}$. 
We convert the problem of analyzing a BBE into one where the encoder is a binary real-valued function. Converting the discrete-valued encoding function into a real-valued one is crucial since it allows us to use the rich set of tools available in functional analysis. We present a summary of the functional analysis apparatus used in this work. 

\subsection{Real Transformations of Boolean Functions}
Fix a discrete memoryless source $X$, and a BBE defined by $\underline{e}:\mathcal{X}^n\to \{0,1\}^n$.  Let $P\left(e_i(X^n)=1\right)=q_i$. The real-valued function corresponding to $e_i$ is represented by $\tilde{e}_i$, and is defined as follows:
\begin{align}
\tilde{e}_i(X^n)=    \begin{cases}
      1-q_i, & \qquad  \text{if } e_i(X^n)=1, \\
      -q_i. & \qquad\text{otherwise}.
    \end{cases}
\end{align}
\begin{Remark}
 Note that $\tilde{e}_i, i\in [1,n]$ has zero mean and variance $q_i(1-q_i)$.
\label{Rem:exp_0}
\end{Remark}
The random variable $\tilde{e}_i(X^n)$ has finite variance on the probability space $(\mathcal{X}^n, 2^{\mathcal{X}^n}, P_{X^n})$. The set of all such functions is denoted by $\mathcal{H}_{X,n}$. More precisely,  we define $\mathcal{H}_{X,n}\triangleq L_2(\mathcal{X}^n, 2^{\mathcal{X}^n}, P_{X^n})$ as the separable Hilbert space of all measurable functions $\tilde{h}:\mathcal{X}^n\to \mathbb{R}$ with inner product given by $\tilde{h}\cdot\tilde{g}=\sum_{x^n}\tilde{h}(x^n)\tilde{g}(x^n)P_{X^n}(x^n)$.  Since X is a DMS, the isomorphism relation 
 \begin{equation}
 \mathcal{H}_{X,n}= \mathcal{H}_{X,1}\otimes \mathcal{H}_{X,1}\cdots \otimes \mathcal{H}_{X,1}
\label{eq:Hil_Dec1}
 \end{equation}
 holds \cite{Reed_and_Simon}, where $\otimes$ indicates the tensor product.

\begin{Example}
Let $n=1$. Let $\mathcal{X}=\{0,1\}$. The Hilbert space $\mathcal{H}_{X,1}$ is the space of all measurable functions $\tilde{h}:\mathcal{X}\to \mathbb{R}$. The space is spanned by the two linearly independent functions  $\tilde{h}_1(X)=\mathbbm{1}_{\{X=1\}}$ and $\tilde{h}_2(X)=\mathbbm{1}_{\{X=0\}}$. We conclude that the space is two-dimensional.
\label{Ex:ex1}  
\end{Example}

As a reminder, the following defines the tensor product of vector spaces.

\begin{Definition}[\cite{Reed_and_Simon}]
 Let $\mathcal{H}_{i},i \in[1,n]$ be vector spaces over a field $F$. Also, let $\mathcal{B}_{i}=\{v_{i,j}|j\in [1,d_i]\}$ be the basis for $\mathcal{H}_i$ where $d_i$ is the dimension of $\mathcal{H}_i$. Then, the tensor product space $ \otimes_{i\in [1,n]}\mathcal{H}_i$ is defined as the set of elements $v=\sum_{j_1\in [1,d_1]}\sum_{j_2\in [1,d_2]}\cdots \sum_{j_n\in [1,d_n]} c_{j_{1}, j_2,\cdots,j_n} v_{j_1}\otimes v_{j_2}\cdots \otimes v_{j_n}$.
 \label{Lem:tensor_dec}
\end{Definition}

 \begin{Remark}
  The tensor product operation in $\mathcal{H}_{X,n}$ is real multiplication (i.e. $f_1, f_2\in \mathcal{H}_{X,1}: f_1(X_1)\otimes f_2(X_2)\triangleq f_1(X_1)f_2(X_2)$). So, if $\{f_i(X)|i\in [1,d]\}$ is a basis for  $\mathcal{H}_{X,1}$ when $|\mathcal{X}|=d$, a basis for  $\mathcal{H}_{X,n}$ would be the set of all the real multiplications of these basis elements: $\{\Pi_{j\in [1,n]}f_{i_j}(X_j), i_j\in [1,d]\}$. 
\end{Remark}

 Example \ref{Ex:ex1} gives a decomposition of the space $\mathcal{H}_{X,1}$ for binary input alphabets. Next, we introduce a decomposition of $\mathcal{H}_{X,1}$ for general alphabets which turns out to be very useful. Let $\mathcal{I}_{X,1}$ be the subset of all measurable functions of $X$ which have 0 mean, and let $\gamma_{X,1}$ be the set of constant real functions of $X$.  $\mathcal{I}_{X,1}$ and  $\gamma_{X,1}$  are linear subspaces of $\mathcal{H}_{X,1}$. $\mathcal{I}_{X,1}$ is the null space of the functional which takes an arbitrary function $\tilde{f}\in \mathcal{H}_{X,1}$ to its expected value $\mathbb{E}_{X}(\tilde{f})$. The null space of any non-zero linear functional is a hyper-space in $\mathcal{H}_{X,1}$. So, $\mathcal{I}_{X,1}$ is a $(|\mathcal{X}-1|)$-dimensional subspace of $\mathcal{H}_{X,1}$. 
 On the other hand, $\gamma_{X,1}$ is a one dimensional subspace which is not contained in $\mathcal{I}_{X,1}$. It is spanned by the function $\tilde{g}(X)\equiv 1$.
   Consider an arbitrary element $\tilde{f}\in \mathcal{H}_{X,1}$. One can write $\tilde{f}= \tilde{f}_1+\tilde{f}_2$ where $\tilde{f}_1=\tilde{f}-\mathbb{E}_{X}(\tilde{f})\in \mathcal{I}_{X,1}$, and $\tilde{f}_2= \mathbb{E}_{X}(\tilde{f})\in  \gamma_{X,1}$.  Hence, $\mathcal{H}_{X,1}=\mathcal{I}_{X,1}\oplus \gamma_{X,1}$ gives a decomposition of $\mathcal{H}_{X,1}$.
 \label{Ex:one_dim}
Replacing $\mathcal{H}_{X,1}$ with $\mathcal{I}_{X,1}\oplus \gamma_{X,1}$ in \eqref{eq:Hil_Dec1}, we have:
 \begin{align}
 \mathcal{H}_{X,n}&=\otimes_{i=1}^n \mathcal{H}_{X,1}=\otimes_{i=1}^n (\mathcal{I}_{X,1}\oplus \gamma_{X,1})\stackrel{(a)}{=} \oplus_{\mathbf{i}\in \{0,1\}^n} (\mathcal{G}_{i_1}\otimes \mathcal{G}_{i_2}\otimes\dotsb \otimes \mathcal{G}_{i_n}),
 \label{eq:Hil_Dec2}
\end{align}
where
\begin{align*}
 \mathcal{G}_j=
\begin{cases}
 \gamma_{X,1}  \qquad \ j=0,\\
 \mathcal{I}_{X,1} \qquad \ j=1,
\end{cases}
\end{align*}
and, in (a), we have used the distributive property of tensor products over direct sums. 
Using equation \eqref{eq:Hil_Dec2} we can define the following:
\begin{Definition}
 For any $\tilde{e}\in \mathcal{H}_{X,n},  n\in \mathbb{N}$, define the decomposition \footnote{It turns out that for a symmetric source $X$, this decomposition is similar to the Fourier transform of Boolean functions \cite{Odonnel}.}
  $\tilde{e}=\sum_{\mathbf{i}}\tilde{e}_{\mathbf{i}}$, where $\tilde{e}_{\mathbf{i}}\in \mathcal{G}_{i_1}\otimes \mathcal{G}_{i_2}\otimes\dotsb \otimes \mathcal{G}_{i_n}$. $\tilde{e}_{\mathbf{i}}$ is the component of $\tilde{e}$ which is only a function of $\{X_{i_j}|i_j=1\}$. In this sense, the collection $\{\tilde{e}_{\mathbf{i}}|\sum_{j\in[1,n]}i_j=k\}$, is the set of k-letter components of $\tilde{e}$.
\label{Rem:Dec}
\end{Definition}
In order clarify the notation, we provide the following two examples.

\begin{Example}
 Let $(X_1,X_2)$ be two independent symmetric binary random variables. Assume $e(X_1,X_2)=X_1\oplus X_2$ is the binary addition function. In this example $P(e=1)=\frac{1}{2}$. The corresponding real function is given as follows:
\begin{align*}
 \tilde{e}(X_1,X_2)=
\begin{cases}
- \frac{1}{2}  \qquad \ X_1+X_2\in \{0,2\},\\
\frac{1}{2} \qquad \ X_1+X_2=1,
\end{cases}
\end{align*} 
 Using Lagrange interpolation \cite{lagrange}, we can write $\tilde{e}$ as follows:
 \begin{align*}
 \tilde{e}&=-\frac{1}{2}(X_1+X_2-2)(X_1+X_2)-\frac{1}{4}(X_1+X_2-1)(X_1+X_2-2)-\frac{1}{4}(X_1+X_2)(X_1+X_2-1)
 \\&=-X_1^2-X_2^2-2X_1X_2+2X_1+2X_2-\frac{1}{2}. 
\end{align*}
The decomposition of $\tilde{e}$ in the form given in \eqref{eq:Hil_Dec2} is
\begin{align*}
 &\tilde{e}_{1,1}= X_1+X_2-2X_1X_2-\frac{1}{2}=-\frac{1}{2}(1-2X_1)(1-2X_2),
 \\& \tilde{e}_{1,0}=-X_1^2+X_1=X_1(1-X_1)\stackrel{(a)}{=}0,  \quad \tilde{e}_{0,1}= -X_2^2+X_2=X_2(1-X_2)\stackrel{(a)}{=}0,
\\&\tilde{e}_{0,0}=0.
\end{align*} 
where (a) holds since the input is chosen from $\{0,1\}$.  
Note that $\tilde{e}$ has a single non-zero component in its decomposition. This component is the two-letter function $\tilde{e}_{1,1}\in \mathcal{I}_{X,1}\otimes \mathcal{I}_{X,1}$. This is to be expected since the binary addition of two symmetric variables is independent of each variable. So there are no single-letter components. In fact one can verify this directly as follows:
\begin{align*}
 \mathbb{E}_{X_2|X_1}(\tilde{e}|X_1)=X_1-X_1=0,\quad  \mathbb{E}_{X_1|X_2}(\tilde{e}|X_2)=X_2-X_2=0. 
\end{align*}
\begin{Remark}
 In the previous example, we found that the binary summation of two independent binary symmetric variables is a two-letter function (i.e. it only has a two-letter component). 
  However, this is not true when the source is not symmetric. When $P(X=1)\neq P(X=0)$, the output of the summation is not independent of each of the inputs. One can show that the single-letter components of the summation are non-zero in this case. 
\end{Remark}
\end{Example}
\begin{Example}
 Let $e(X_1,X_2)=X_1\wedge X_2$ be the binary logical AND function. The corresponding real function is:
 \begin{align*}
 \tilde{e}(X_1,X_2)=
\begin{cases}
 -\frac{1}{4}  \qquad \ (X_1,X_2)\neq (1,1),\\
\frac{3}{4} \qquad \ (X_1,X_2)=(1,1).
\end{cases}
\end{align*} 
Lagrange interpolation gives $\tilde{e}=X_1X_2-\frac{1}{4}$. The decomposition is given by:
\begin{align*}
 &\tilde{e}_{1,1}= (X_1-\frac{1}{2})(X_2-\frac{1}{2}),\qquad \tilde{e}_{1,0}=\frac{1}{2}(X_1-\frac{1}{2}), \qquad \tilde{e}_{0,1}= \frac{1}{2}(X_2-\frac{1}{2}),\qquad\tilde{e}_{0,0}=0.
\end{align*} 
The variances of these functions are given below:
\begin{align*}
 Var(\tilde{e})=\frac{3}{16}, \qquad Var(\tilde{e}_{0,1})=Var(\tilde{e}_{1,0})=Var(\tilde{e}_{1,1})=\frac{1}{16}. 
\end{align*}
As we shall see in the next sections, these variances play a major role in determining the correlation preserving properties of $\tilde{e}$. In the perspective of the effective-length, the function $\tilde{e}$ has $\frac{2}{3}$ of its variance distributed between $\tilde{e}_{0,1}$, and $\tilde{e}_{1,0}$ which are single-letter functions, and $\frac{1}{3}$ of the variance is on $\tilde{e}_{1,1}$ which is a two-letter function. 
 
\end{Example}
Similar to the previous examples, for arbitrary $\tilde{e}\in \mathcal{H}_{X,n},  n\in \mathbb{N}$, we can find a decomposition $\tilde{e}=\sum_{\mathbf{i}}\tilde{e}_{\mathbf{i}}$, where $\tilde{e}_{\mathbf{i}}\in \mathcal{G}_{i_1}\otimes \mathcal{G}_{i_2}\otimes\dotsb \otimes \mathcal{G}_{i_n}$. We can characterize $\tilde{e}_{\mathbf{i}}$ in terms of products of the basis elements of $\mathcal{G}_{i_1}\otimes \mathcal{G}_{i_2}\otimes\dotsb \otimes \mathcal{G}_{i_n}$ as follows.

\begin{Lemma}
\label{lem:dec_non_bin}
  For an arbitrary input alphabet $\mathcal{X}$, let $\tilde{h}_l(X), l\in \{1,2,\cdots, |\mathcal{X}|-1\}$ be an orthogonal basis for $\mathcal{I}_{X,1}$, such that $E(\tilde{h}^2_l(X))=q(1-q), \forall  l\in \{1,2,\cdots, |\mathcal{X}|-1\}$, where $q=P(X\neq 0)$. Let $\tau= \{t:i_t=1\}$, then:
\begin{equation}
\tilde{e}_{\mathbf{i}}(X^n)=\sum_{\forall t\in \tau: l_{t}\in [1,|\mathcal{X}|-1]} c_{\mathbf{i},(l_{t})_{t\in\tau}}\prod_{t\in\tau}\tilde{h}_{l_t}(X_{{t}}),
\end{equation}
where $c_\mathbf{i},(l_{t})_{t\in\tau}\in \mathbb{R}$, and $(l_t)_{t\in\tau}$ is the sequence of $l_t$'s for $t\in \tau$.
\label{lem:one}
\end{Lemma}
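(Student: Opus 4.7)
The plan is to exploit the tensor product decomposition in equation~\eqref{eq:Hil_Dec2} together with the Remark identifying the tensor product in $\mathcal{H}_{X,n}$ with pointwise real multiplication. Once those two ingredients are in place, the lemma reduces to expanding an element in a subspace against an orthogonal basis built coordinate-by-coordinate, so the whole argument is essentially linear-algebraic bookkeeping.

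First, I would record the one-coordinate picture. The space $\gamma_{X,1}$ is one-dimensional, spanned by the constant function $\tilde{g}(X)\equiv 1$, while $\mathcal{I}_{X,1}$ is $(|\mathcal{X}|-1)$-dimensional and, by assumption, admits the orthogonal basis $\{\tilde{h}_l(X):l\in[1,|\mathcal{X}|-1]\}$ with $E(\tilde{h}_l^2(X))=q(1-q)$. Thus a basis for $\mathcal{G}_j$ is $\{1\}$ when $j=0$ and $\{\tilde{h}_l\}_l$ when $j=1$. Next, I would invoke Definition~\ref{Lem:tensor_dec}: a basis for the tensor product space $\mathcal{G}_{i_1}\otimes\mathcal{G}_{i_2}\otimes\cdots\otimes\mathcal{G}_{i_n}$ is obtained by taking all products of one basis vector from each factor. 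Combining this with the Remark that $\otimes$ is implemented as real multiplication (with $f_1(X_1)\otimes f_2(X_2)=f_1(X_1)f_2(X_2)$), a generic basis element has the form
\begin{equation*}
\prod_{t\notin \tau}1\cdot\prod_{t\in\tau}\tilde{h}_{l_t}(X_t)\;=\;\prod_{t\in\tau}\tilde{h}_{l_t}(X_t),\qquad l_t\in[1,|\mathcal{X}|-1],
\end{equation*}
where $\tau=\{t:i_t=1\}$, because each factor with $i_t=0$ contributes the constant $1$ and each factor with $i_t=1$ contributes one of the $\tilde{h}_{l_t}$'s.

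Now I would conclude. By Definition~\ref{Rem:Dec}, the component $\tilde{e}_{\mathbf{i}}$ lies in $\mathcal{G}_{i_1}\otimes\cdots\otimes\mathcal{G}_{i_n}$, so it expands uniquely against the above basis:
\begin{equation*}
\tilde{e}_{\mathbf{i}}(X^n)=\sum_{\forall t\in\tau:\,l_t\in[1,|\mathcal{X}|-1]}c_{\mathbf{i},(l_t)_{t\in\tau}}\prod_{t\in\tau}\tilde{h}_{l_t}(X_t),
\end{equation*}
for some real coefficients $c_{\mathbf{i},(l_t)_{t\in\tau}}\in\mathbb{R}$, which is exactly the claimed form. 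The uniqueness of the coefficients is inherited from the orthogonality of the $\tilde{h}_l$'s together with the constant $1$ being orthogonal to $\mathcal{I}_{X,1}$; explicit formulas, if one wanted them, would follow by taking inner products, e.g.\ $c_{\mathbf{i},(l_t)_{t\in\tau}}=\bigl(\prod_{t\in\tau}q(1-q)\bigr)^{-1}\,E\bigl[\tilde{e}_{\mathbf{i}}(X^n)\prod_{t\in\tau}\tilde{h}_{l_t}(X_t)\bigr]$.

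The only real subtlety, and the step I would be most careful about, is making the identification of the abstract tensor product with pointwise multiplication across independent coordinates rigorous. This uses the fact that $X$ is a DMS, so $P_{X^n}$ factorizes, and $\mathcal{H}_{X,n}=L_2(P_{X^n})$ is isometrically isomorphic to the $n$-fold tensor product of $\mathcal{H}_{X,1}$ as in~\eqref{eq:Hil_Dec1}; under this isomorphism, orthogonal product functions on distinct coordinates really are orthogonal in $\mathcal{H}_{X,n}$ because the expectation factorizes. Once that is invoked, the proof of Lemma~\ref{lem:one} is a straightforward basis expansion with no further computation required.
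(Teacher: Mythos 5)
Your argument is correct and is exactly the route the paper takes: the paper's proof is the one-line remark ``Follows from Definition \ref{Lem:tensor_dec},'' and you have simply filled in the intended details, namely that a basis of $\mathcal{G}_{i_1}\otimes\cdots\otimes\mathcal{G}_{i_n}$ is given by products of the constant function on coordinates with $i_t=0$ and the $\tilde{h}_{l_t}$'s on coordinates with $i_t=1$, against which $\tilde{e}_{\mathbf{i}}$ expands with real coefficients. Your added care about the isomorphism in \eqref{eq:Hil_Dec1} and the factorization of $P_{X^n}$ is a welcome, if strictly optional, elaboration.
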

\begin{proof}
 Follows from Definition \ref{Lem:tensor_dec}.
\end{proof}

\begin{Example}
  Let $\mathcal{X}=\{0,1\}$. Since $\mathcal{G}_{i_j}$'s, $j\in [1,n]$ take values from the set $\{\mathcal{I}_{X,1}, \gamma_{X,1}\}$, they are all one-dimensional. Let $\tilde{h}$ be defined as follows:
\begin{align}
\tilde{h}(X)=    \begin{cases}
      1-q, & \text{if } X=1, \\
      -q. & \text{if } X=0,
    \end{cases}
    \label{eq:basis} 
\end{align}
where $q\triangleq P(X=1)$. Then, the single element set $\{\tilde{h}(X)\}$ is a basis for $\mathcal{I}_{X,1}$. Hence, using the previous lemma:
\begin{equation}
\tilde{e}_{\mathbf{i}}(X^n)= c_{\mathbf{i}}\prod_{t:i_t=1}\tilde{h}(X_{{t}}),
\end{equation}
where $c_{\mathbf{i}}\in \mathbb{R}$.

\end{Example}

\subsection{Properties of the Real Decomposition}
 In the following we derive the main properties of the real decomposition proposed above. We are interested in the variance of $\tilde{e}_{\mathbf{i}}$'s. In the next proposition, we show that the $\tilde{e}_{\mathbf{i}}$'s are uncorrelated and we derive an expression for the variance of $\tilde{e}_{\mathbf{i}}$ using the notation in Lemma \ref{lem:one}.
\begin{Proposition}
\label{pr:partfun}
Define $\mathbf{P}_{\mathbf{i}}$ as the variance of $\tilde{e}_{\mathbf{i}}$. The following hold:
 \\ 1) $\mathbb{E}(\tilde{e}_{\mathbf{i}}\tilde{e}_{\mathbf{j}})=0, \mathbf{i}\neq \mathbf{j}$, in other words $\tilde{e}_{\mathbf{i}}$'s are uncorrelated.
 \\ 2)  $\mathbf{P}_{\mathbf{i}}=\mathbb{E}(\tilde{e}_{\mathbf{i}}^2)=\sum_{\forall t\in \tau: l_{t}\in [1,|\mathcal{X}|-1]}c_{\mathbf{i},(l_{t})_{t\in\tau}}^2\left(q(1-q)\right)^{w_H(\mathbf{i})}$, where $w_H$ is the Hamming weight function.. Particularly, if $\mathcal{X}=\{0,1\}$, then 
 $\mathbf{P}_{\mathbf{i}}=\mathbb{E}(\tilde{e}_{\mathbf{i}}^2)=c_{\mathbf{i}}^2\left(q(1-q)\right)^{w_H(\mathbf{i})}$. 
\end{Proposition}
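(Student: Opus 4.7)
The plan is to reduce both parts to properties of the orthogonal basis $\{\tilde{h}_l\}_{l=1}^{|\mathcal{X}|-1}$ of $\mathcal{I}_{X,1}$ and the independence of the $X_t$'s under a DMS. Recall that by construction any $\tilde{h}_l$ has zero mean (since it lies in $\mathcal{I}_{X,1}$), and the basis is orthogonal with $E[\tilde{h}_l(X)\tilde{h}_{l'}(X)]=0$ for $l\neq l'$ and $E[\tilde{h}_l(X)^2]=q(1-q)$. Using Lemma \ref{lem:one}, write
\[
\tilde{e}_{\mathbf{i}}(X^n)=\sum_{(l_t)_{t\in\tau_{\mathbf{i}}}} c_{\mathbf{i},(l_t)_{t\in\tau_{\mathbf{i}}}}\prod_{t\in\tau_{\mathbf{i}}}\tilde{h}_{l_t}(X_t),\qquad \tau_{\mathbf{i}}=\{t:i_t=1\},
\]
and similarly for $\tilde{e}_{\mathbf{j}}$, with the understanding that $\tilde{e}_{\mathbf{0}}$ is the constant $\mathbb{E}[\tilde{e}]=0$ (cf.\ Remark \ref{Rem:exp_0}), so the case $\mathbf{i}=\mathbf{0}$ or $\mathbf{j}=\mathbf{0}$ is trivial and can be set aside.

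For part 1, fix $\mathbf{i}\neq\mathbf{j}$ (both non-zero) and pick a coordinate $k$ where they disagree; without loss of generality $i_k=1$ and $j_k=0$. Expanding the product $\tilde{e}_{\mathbf{i}}\tilde{e}_{\mathbf{j}}$ as a double sum over basis products, every summand contains the isolated factor $\tilde{h}_{l_k}(X_k)$ coming from $\tilde{e}_{\mathbf{i}}$, with no matching factor at coordinate $k$ from $\tilde{e}_{\mathbf{j}}$. Since $X^n$ is a DMS, $X_k$ is independent of $\{X_t:t\neq k\}$, so the expectation factors and we get a multiplicative $E[\tilde{h}_{l_k}(X_k)]=0$. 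Summing over the expansion yields $\mathbb{E}[\tilde{e}_{\mathbf{i}}\tilde{e}_{\mathbf{j}}]=0$.

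For part 2, expand
\[
\tilde{e}_{\mathbf{i}}^{\,2}=\sum_{(l_t),(l'_t)} c_{\mathbf{i},(l_t)}\,c_{\mathbf{i},(l'_t)}\prod_{t\in\tau_{\mathbf{i}}}\tilde{h}_{l_t}(X_t)\tilde{h}_{l'_t}(X_t),
\]
take expectations, and again use the DMS property to factor the expectation across $t\in\tau_{\mathbf{i}}$. By orthogonality of $\{\tilde{h}_l\}$, each factor $E[\tilde{h}_{l_t}(X_t)\tilde{h}_{l'_t}(X_t)]$ vanishes unless $l_t=l'_t$, in which case it equals $q(1-q)$. Only the diagonal terms survive, giving
\[
\mathbb{E}[\tilde{e}_{\mathbf{i}}^{\,2}]=\sum_{(l_t)_{t\in\tau_{\mathbf{i}}}} c_{\mathbf{i},(l_t)_{t\in\tau_{\mathbf{i}}}}^{2}\,(q(1-q))^{w_H(\mathbf{i})},
\]
as claimed. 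Since for $\mathbf{i}\neq\mathbf{0}$ each summand in $\tilde{e}_{\mathbf{i}}$ contains a zero-mean factor $\tilde{h}_{l_t}(X_t)$, independence forces $\mathbb{E}[\tilde{e}_{\mathbf{i}}]=0$, so $\mathbb{E}[\tilde{e}_{\mathbf{i}}^{\,2}]$ is indeed the variance $\mathbf{P}_{\mathbf{i}}$. The binary specialization $\mathbf{P}_{\mathbf{i}}=c_{\mathbf{i}}^{2}(q(1-q))^{w_H(\mathbf{i})}$ is immediate since the sum collapses to a single term.

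No step presents a real obstacle; the only thing to watch is bookkeeping with the index set $\tau_{\mathbf{i}}$ and the edge case $\mathbf{i}=\mathbf{0}$, which is handled by the mean-zero property of $\tilde{e}$.
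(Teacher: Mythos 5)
Your proof is correct and is essentially the same argument the paper has in mind: the paper's own proof is just the one-line remark that part 1 ``follows by direct calculation'' and part 2 ``holds from the independence of the $X_i$'s,'' and you have carried out exactly that calculation, expanding $\tilde{e}_{\mathbf{i}}$ in the orthogonal zero-mean basis of Lemma \ref{lem:one} and factoring expectations across coordinates via the DMS property. Your explicit handling of the $\mathbf{i}=\mathbf{0}$ case and of the identification of the second moment with the variance fills in details the paper omits, but introduces no new idea.
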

\begin{proof}
 1) follows by direct calculation. 2) holds from the independence of $X_i$'s.
\end{proof}

In the next lemma we find a characterization of $\tilde{e}_{\mathbf{i}}, \mathbf{i}\in\{0,1\}^n$ for general $\tilde{e}$.

\begin{Proposition}
$\tilde{e}_{\mathbf{i}}=\mathbb{E}_{X^n|X_{\mathbf{i}}}(\tilde{e}|X_{\mathbf{i}})-\sum_{\mathbf{j}< \mathbf{i}} \tilde{e}_{\mathbf{j}}$ gives the unique orthogonal decomposition of $\tilde{e}$ into the Hilbert spaces $\mathcal{G}_{i_1}\otimes \mathcal{G}_{i_2}\cdots\otimes \mathcal{G}_{i_n}, \mathbf{i}\in \{0,1\}^n$.
\label{Lem:unique}
\end{Proposition}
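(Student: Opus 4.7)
My plan is to deduce the stated formula from the orthogonal direct-sum decomposition in equation \eqref{eq:Hil_Dec2} by combining two facts: (i) the $L^2$ orthogonal projection of $\tilde{e}$ onto the subspace of functions depending only on $X_{\mathbf{i}}$ is exactly the conditional expectation $\mathbb{E}_{X^n|X_{\mathbf{i}}}(\tilde{e}|X_{\mathbf{i}})$, and (ii) this subspace can be identified with the direct sum $\bigoplus_{\mathbf{j}\le \mathbf{i}}(\mathcal{G}_{j_1}\otimes\cdots\otimes \mathcal{G}_{j_n})$. Uniqueness will fall out immediately from the orthogonality of the constituent subspaces.

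First I would verify that the subspaces $\mathcal{G}_{i_1}\otimes\cdots\otimes \mathcal{G}_{i_n}$ are pairwise orthogonal in $\mathcal{H}_{X,n}$. This reduces to the single-letter fact that $\mathcal{I}_{X,1}\perp \gamma_{X,1}$ (zero-mean functions are orthogonal to constants under the inner product $\tilde h\cdot\tilde g=\mathbb{E}(\tilde h\tilde g)$), then lifts to the tensor product since for any $\mathbf{i}\neq\mathbf{j}$ there is a coordinate $k$ with $\{i_k,j_k\}=\{0,1\}$, making the $k$-th factors orthogonal and hence the product of factors orthogonal under the tensor inner product. Together with \eqref{eq:Hil_Dec2}, this establishes that every $\tilde{e}\in\mathcal{H}_{X,n}$ admits a unique decomposition $\tilde e=\sum_{\mathbf{i}}\tilde e_{\mathbf{i}}$ with $\tilde e_{\mathbf{i}}\in \mathcal{G}_{i_1}\otimes\cdots\otimes\mathcal{G}_{i_n}$: if $\sum_{\mathbf{i}}(\tilde e_{\mathbf{i}}-\tilde e_{\mathbf{i}}')=0$ then inner-producting with any $\tilde e_{\mathbf{k}}-\tilde e_{\mathbf{k}}'$ kills all cross terms, forcing each difference to be zero.

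Next I would identify, for each fixed $\mathbf{i}\in\{0,1\}^n$, the subspace $\mathcal{S}_{\mathbf{i}}\triangleq \bigoplus_{\mathbf{j}\le \mathbf{i}}(\mathcal{G}_{j_1}\otimes\cdots\otimes \mathcal{G}_{j_n})$ with the subspace $\mathcal{M}_{\mathbf{i}}$ of $\mathcal{H}_{X,n}$ consisting of functions that depend only on $X_{\mathbf{i}}$. Indeed, since $\gamma_{X,1}$ is spanned by the constant function, on any coordinate $k$ with $i_k=0$ the factor $\mathcal{G}_{j_k}$ with $j_k\le i_k$ must be $\gamma_{X,1}$, so the tensor components are functions of $(X_k)_{i_k=1}$ only; conversely, each $X_{\mathbf{i}}$-measurable function lies in $\otimes_{k: i_k=1}\mathcal{H}_{X,1}\otimes \otimes_{k: i_k=0}\gamma_{X,1}$, and using $\mathcal{H}_{X,1}=\mathcal{I}_{X,1}\oplus \gamma_{X,1}$ on each of the first factors and distributing the tensor product over the direct sum, exactly as in the derivation of \eqref{eq:Hil_Dec2}, one obtains $\mathcal{M}_{\mathbf{i}}=\mathcal{S}_{\mathbf{i}}$.

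To finish, I would invoke the standard fact from $L^2$ probability that the conditional expectation $\mathbb{E}_{X^n|X_{\mathbf{i}}}(\tilde e|X_{\mathbf{i}})$ is the orthogonal projection of $\tilde e$ onto $\mathcal{M}_{\mathbf{i}}$. Under the unique decomposition $\tilde e=\sum_{\mathbf{k}}\tilde e_{\mathbf{k}}$, this projection is simply $\sum_{\mathbf{j}\le\mathbf{i}}\tilde e_{\mathbf{j}}$, because the summands with $\mathbf{j}\le \mathbf{i}$ already lie in $\mathcal{S}_{\mathbf{i}}=\mathcal{M}_{\mathbf{i}}$ while the remaining summands lie in $\mathcal{S}_{\mathbf{i}}^{\perp}$ by the orthogonality established above. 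Rearranging
\begin{equation*}
\mathbb{E}_{X^n|X_{\mathbf{i}}}(\tilde e|X_{\mathbf{i}})=\sum_{\mathbf{j}\le \mathbf{i}}\tilde e_{\mathbf{j}}=\tilde e_{\mathbf{i}}+\sum_{\mathbf{j}<\mathbf{i}}\tilde e_{\mathbf{j}}
\end{equation*}
yields the stated recursion. The main conceptual obstacle is step three, namely the clean identification $\mathcal{M}_{\mathbf{i}}=\mathcal{S}_{\mathbf{i}}$; once the tensor-product/constants-on-silent-coordinates bookkeeping is made precise, the rest reduces to the routine fact that conditional expectation is the $L^2$ projection and that projections commute with orthogonal direct sums.
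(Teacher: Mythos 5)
Your proof is correct, but it takes a genuinely different route from the paper's. The paper proceeds "from the recursion outward": it first checks $\tilde{e}=\sum_{\mathbf{i}}\tilde{e}_{\mathbf{i}}$ by evaluating the recursion at the all-ones vector, then characterizes membership in $\mathcal{G}_{i_1}\otimes\cdots\otimes\mathcal{G}_{i_n}$ (Lemma \ref{prop:belong1}) and verifies, by Noetherian induction on the partial order of binary strings, four explicit properties of the $\tilde{e}_{\mathbf{i}}$ (Lemma \ref{prop:belong2}) that imply membership. You instead proceed "from the Hilbert-space geometry inward": you establish pairwise orthogonality of the summands in \eqref{eq:Hil_Dec2}, identify $\bigoplus_{\mathbf{j}\leq\mathbf{i}}(\mathcal{G}_{j_1}\otimes\cdots\otimes\mathcal{G}_{j_n})$ with the subspace of $X_{\mathbf{i}}$-measurable functions, and invoke the fact that conditional expectation is the $L^2$ orthogonal projection, so that $\mathbb{E}_{X^n|X_{\mathbf{i}}}(\tilde{e}|X_{\mathbf{i}})=\sum_{\mathbf{j}\leq\mathbf{i}}\tilde{e}_{\mathbf{j}}$ and the recursion follows by rearrangement. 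Your argument is shorter and avoids induction entirely; its only real burden is the identification $\mathcal{M}_{\mathbf{i}}=\mathcal{S}_{\mathbf{i}}$, which you handle correctly by distributing the tensor product over $\mathcal{H}_{X,1}=\mathcal{I}_{X,1}\oplus\gamma_{X,1}$ on the active coordinates. What the paper's longer route buys is that the intermediate facts of Lemma \ref{prop:belong2} (zero means, uncorrelatedness of the $\tilde{e}_{\mathbf{i}}$, vanishing of $\mathbb{E}_{X^n|X_{\mathbf{k}}}(\tilde{e}_{\mathbf{i}}|X_{\mathbf{k}})$ for $\mathbf{k}\leq\mathbf{i}$) are stated and proved explicitly, and they are reused verbatim later, e.g.\ in the proofs of Proposition \ref{Lem:power} and Theorem \ref{th:sec3}; with your approach those facts would still need to be extracted (they do follow from the orthogonality and the projection identity, but would have to be recorded separately).
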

\begin{proof}
 
The uniqueness of such a decomposition follows from the isomorphism relation stated in equation \eqref{eq:Hil_Dec2}. We prove that the $\tilde{e}_{\mathbf{i}}$ given in the lemma are indeed the decomposition into the components of the direct sum. Equivalently, we show that $1)$ $\tilde{e}=\sum_{\mathbf{i}}\tilde{e}_{\mathbf{i}}$, and $2)$ $\tilde{e}_{\mathbf{i}}\in \mathcal{G}_{i_1}\otimes \mathcal{G}_{i_2}\otimes\dotsb \otimes \mathcal{G}_{i_n}, \forall \mathbf{i}\in \{0,1\}^n$.

First we check the equality $\tilde{e}=\sum_{\mathbf{i}}\tilde{e}_{\mathbf{i}}$. Let $\mathbf{t}$ denote the n-length vector whose elements are all ones. We have:
\begin{align*}
\tilde{e}_{\mathbf{t}}=\mathbb{E}_{X^n|X_{\mathbf{t}}}(\tilde{e}|X_{\mathbf{t}})-\sum_{\mathbf{i}<\mathbf{t}}\tilde{e}_{\mathbf{i}}
\stackrel{(a)}{\Rightarrow} \tilde{e}_{\mathbf{t}}+\sum_{\mathbf{i}<\mathbf{t}}\tilde{e}_{\mathbf{i}}=\tilde{e}\stackrel{(b)}{\Rightarrow} \tilde{e}=\sum_{\mathbf{i}\in \{0,1\}^n}\tilde{e}_{\mathbf{i}},
\end{align*}
where in (a) we have used 1) $X_{\mathbf{t}}=X^n$ and 2) for any function $\tilde{f}$ of $X^n$, $\mathbb{E}_{X^n|X^n}(\tilde{f}|X^n)=\tilde{f}$, and (b) holds since $\mathbf{i}<\mathbf{t}\Leftrightarrow \mathbf{i}\neq \mathbf{t}$.
It remains to show that $\tilde{e}_{\mathbf{i}}\in \mathcal{G}_{i_1}\otimes \mathcal{G}_{i_2}\otimes\dotsb \otimes \mathcal{G}_{i_n}, \forall \mathbf{i}\in \{0,1\}^n$. The next proposition provides a means to verify this property.
\begin{Lemma}
 Fix $\mathbf{i}\in \{0,1\}^n$, define $\mathcal{A}_0\triangleq\{s|i_s=0\}$, and $\mathcal{A}_1\triangleq\{s|i_s=1\}$. $\tilde{f}$ is an element of $\mathcal{G}_{i_1}\otimes \mathcal{G}_{i_2}\otimes\dotsb \otimes \mathcal{G}_{i_n}$ if and only if $(1)$ it is constant in all $X_s$, $s\in \mathcal{A}_0$, and $(2)$ it has $0$ mean on all $X_s$, when $s\in \mathcal{A}_1$. 
\label{prop:belong1}\end{Lemma}
\begin{proof}
Please refer to the appendix.
\end{proof}
 Returning to the original problem, it is enough to show that $\tilde{e}_{\mathbf{i}}$'s satisfy the conditions in Lemma \ref{prop:belong1}.  We prove the stronger result presented in the next proposition.
\begin{Lemma}
\label{prop:belong2}
The following hold:
\nonumber\\1) $\forall{\mathbf{i}},\mathbb{E}_{X^n}(\tilde{e}_{\mathbf{i}})$=0.\\
 2) $\forall \mathbf{i}\leq \mathbf{k}$, we have $\mathbb{E}_{X^n|X_{\mathbf{k}}}(\tilde{e}_{\mathbf{i}}|X_{\mathbf{k}})=\tilde{e}_{\mathbf{i}}$.\\
 3) $\mathbb{E}_{X^n}(\tilde{e}_{\mathbf{i}}\tilde{e}_{\mathbf{k}})=0$, for $\mathbf{i}\neq \mathbf{k}$.\\
 4) $\forall \mathbf{k}\leq \mathbf{i}: \mathbb{E}_{X^n|X_{\mathbf{k}}}(\tilde{e}_{\mathbf{i}}|X_{\mathbf{k}})=0.$
 \end{Lemma}
\begin{proof}
 Please refer to the appendix.
\end{proof}
 The second condition in Lemma \ref{prop:belong2} is equivalent to condition (2) in Lemma \ref{prop:belong1}. The fourth condition in Lemma \ref{prop:belong2} is equivalent to condition (1) in Lemma \ref{prop:belong1}.
 Using Lemma \ref{prop:belong1} and \ref{prop:belong2}, we conclude that $\tilde{e}_{\mathbf{i}}\in \mathcal{G}_{i_1}\otimes \mathcal{G}_{i_2}\otimes\dotsb \otimes \mathcal{G}_{i_n}, \forall \mathbf{i}\in \{0,1\}^n$. This completes the proof of Proposition \ref{Lem:unique}.
\end{proof}
The following example clarifies the notation used in Proposition \ref{Lem:unique}.

\begin{Example}
 Consider the case where $n=2$. We have the following decomposition of $\mathcal{H}_{X,2}$:
\begin{equation}
 \mathcal{H}_{X,2}=(\mathcal{I}_{X,1} \otimes \mathcal{I}_{X,1}) \oplus (\mathcal{I}_{X,1} \otimes \gamma_{X,1})\oplus (\gamma_{X,1}\otimes\mathcal{I}_{X,1}) \oplus (\gamma_{X,1}\otimes\gamma_{X,1}).
 \label{eq:2dimdec}
 \end{equation}

Let $\tilde{e}(X_1,X_2)$ be an arbitrary function in $\mathcal{H}_{X,2}$. The unique decomposition of $\tilde{e}$ in the form given in \eqref{eq:2dimdec} is as follows:
\begin{align*}
 \tilde{e}&= \tilde{e}_{1,1}+\tilde{e}_{1,0}+\tilde{e}_{0,1}+\tilde{e}_{0,0},\\ 
 &\tilde{e}_{1,1}= \tilde{e}-\mathbb{E}_{X_2|X_1}(\tilde{e}|X_1)-\mathbb{E}_{X_1|X_2}(\tilde{e}|X_2)+\mathbb{E}_{X_1,X_2}(\tilde{e})\in\mathcal{I}_{X,1}\otimes\mathcal{I}_{X,1},
 \\& \tilde{e}_{1,0}=\mathbb{E}_{X_2|X_1}(\tilde{e}|X_1)-\mathbb{E}_{X_1,X_2}(\tilde{e})\in\mathcal{I}_{X,1}\times \gamma_{X,1},
\\& \tilde{e}_{0,1}= \mathbb{E}_{X_1|X_2}(\tilde{e}|X_2)-\mathbb{E}_{X_1,X_2}(\tilde{e})\in\gamma_{X,1}\otimes\mathcal{I}_{X,1},
\\&\tilde{e}_{0,0}=\mathbb{E}_{X_1,X_2}(\tilde{e})\in \gamma_{X,1}\otimes\gamma_{X,1}.
\end{align*}
It is straightforward to show that each of the $\tilde{e}_{i,j}$'s, $i,j\in \{0,1\}$, belong to their corresponding subspaces. For instance, $ \tilde{e}_{0,1}$ is constant in $X_1$, and is a $0$ mean function of $X_2$ (i.e. $\mathbb{E}_{X_2}\left(\tilde{e}_{0,1}(x_1,X_2)\right)=0, x_1\in \mathcal{X}$), so  $\tilde{e}_{0,1}\in \gamma_{X,1}\otimes\mathcal{I}_{X,1}$.
\end{Example}
Lastly, we derive an expression for $\mathbf{P}_{\mathbf{i}}$ using Proposition \ref{Lem:unique}:
\begin{Proposition}

For arbitrary $e:\mathcal{X}^n\to \{0,1\}$, let $\tilde{e}$ be the corresponding real-valued function, and let $\tilde{e}=\sum_{\mathbf{i}}\tilde{e}_{\mathbf{i}}$ be the decomposition in the form of Equation \eqref{eq:Hil_Dec2}. The variance of each component in the decomposition is given by the following recursive formula $\mathbf{P}_{\mathbf{i}} =\mathbb{E}_{X_{\mathbf{i}}}(\mathbb{E}_{X^n|X_{\mathbf{i}}}^2(\tilde{e}|X_{\mathbf{i}}))-\sum_{\mathbf{j}< \mathbf{i}}\mathbf{P}_{\mathbf{j}}, \forall \mathbf{i}\in \{0,1\}^n$, where $\mathbf{P}_{\underline{0}}\triangleq 0$. 
\label{Lem:power}
\end{Proposition}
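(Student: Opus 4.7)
The plan is to derive the recursion directly from the characterization of $\tilde{e}_{\mathbf{i}}$ given by Proposition \ref{Lem:unique}, combined with the orthogonality relation already established in Proposition \ref{pr:partfun}. The key observation is that the right-hand side of the claimed identity is an $L_2$-norm of a conditional expectation, which (once expanded in the orthogonal decomposition) is a sum of the variances $\mathbf{P}_{\mathbf{j}}$ over all $\mathbf{j}\le\mathbf{i}$.

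First, I would rearrange the formula from Proposition \ref{Lem:unique} to read
\begin{equation*}
\mathbb{E}_{X^n|X_{\mathbf{i}}}(\tilde{e}\,|\,X_{\mathbf{i}}) \;=\; \sum_{\mathbf{j}\le \mathbf{i}} \tilde{e}_{\mathbf{j}}.
\end{equation*}
This identity is valid because every $\tilde{e}_{\mathbf{j}}$ with $\mathbf{j}\le\mathbf{i}$ depends only on the coordinates $X_{\mathbf{j}}\subseteq X_{\mathbf{i}}$, so it is already $X_{\mathbf{i}}$-measurable, while for $\mathbf{j}\not\le\mathbf{i}$ the component $\tilde{e}_{\mathbf{j}}$ has a coordinate $X_t$ with $i_t=0$ on which it has zero mean (by part 4 of Lemma \ref{prop:belong2}), so these terms vanish under the conditional expectation.

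Next, I would square both sides and take expectation over $X_{\mathbf{i}}$. Since the inner function is already measurable with respect to $X_{\mathbf{i}}$, this equals the full expectation over $X^n$. Expanding the square and using the uncorrelatedness asserted in part 1 of Proposition \ref{pr:partfun} kills the cross terms $\mathbb{E}(\tilde{e}_{\mathbf{j}}\tilde{e}_{\mathbf{k}})=0$ for $\mathbf{j}\ne\mathbf{k}$. This leaves
\begin{equation*}
\mathbb{E}_{X_{\mathbf{i}}}\!\left(\mathbb{E}_{X^n|X_{\mathbf{i}}}^2(\tilde{e}\,|\,X_{\mathbf{i}})\right) \;=\; \sum_{\mathbf{j}\le \mathbf{i}} \mathbb{E}(\tilde{e}_{\mathbf{j}}^2) \;=\; \sum_{\mathbf{j}\le \mathbf{i}} \mathbf{P}_{\mathbf{j}}.
\end{equation*}
Isolating the $\mathbf{j}=\mathbf{i}$ term on the right yields the claimed recursion, with the base case $\mathbf{P}_{\underline{0}}=0$ following from Remark \ref{Rem:exp_0} (the real-valued function $\tilde{e}$ has zero mean, so its projection onto $\gamma_{X,1}^{\otimes n}$ vanishes).

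I do not anticipate a substantive obstacle here: the whole argument is essentially Parseval's identity applied to the orthogonal decomposition of $\mathbb{E}_{X^n|X_{\mathbf{i}}}(\tilde{e}|X_{\mathbf{i}})$. The only point worth handling carefully is verifying that the conditional expectation indeed equals $\sum_{\mathbf{j}\le\mathbf{i}}\tilde{e}_{\mathbf{j}}$, and this is where Lemma \ref{prop:belong2} is invoked to discard the components indexed by $\mathbf{j}\not\le\mathbf{i}$.
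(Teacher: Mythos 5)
Your proof is correct and uses the same ingredients as the paper's: the identity $\mathbb{E}_{X^n|X_{\mathbf{i}}}(\tilde{e}|X_{\mathbf{i}})=\sum_{\mathbf{j}\leq\mathbf{i}}\tilde{e}_{\mathbf{j}}$ (Equation \eqref{eq:sum_int}, a rearrangement of Proposition \ref{Lem:unique}) together with the zero-mean and uncorrelatedness properties of the components from Lemma \ref{prop:belong2}. The paper reaches the same conclusion by expanding $\mathbb{E}\bigl(\bigl(\mathbb{E}_{X^n|X_{\mathbf{i}}}(\tilde{e}|X_{\mathbf{i}})-\sum_{\mathbf{j}<\mathbf{i}}\tilde{e}_{\mathbf{j}}\bigr)^2\bigr)$ term by term; your Parseval-style computation of $\mathbb{E}\bigl(\bigl(\sum_{\mathbf{j}\leq\mathbf{i}}\tilde{e}_{\mathbf{j}}\bigr)^2\bigr)$ is an algebraically equivalent but more economical organization of the identical argument.
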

\begin{proof}
 Please refer to the appendix.
\end{proof}
\begin{Corollary}
 For an arbitrary $e:\mathcal{X}^n\to \{0,1\}$ with corresponding real function $\tilde{e}$, and decomposition $\tilde{e}=\sum_{\mathbf{i}}\tilde{e}_{\mathbf{i}}$. Let the variance of $\tilde{e}$ be denoted by $\mathbf{P}$. Then, $\mathbf{P}=\sum_{\mathbf{i}}\mathbf{P}_{\mathbf{i}}$. 
 \label{Cor:power}
\end{Corollary}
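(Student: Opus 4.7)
The plan is to recognize this as the Pythagorean identity in the Hilbert space $\mathcal{H}_{X,n}$, applied to the orthogonal decomposition $\tilde{e} = \sum_{\mathbf{i}} \tilde{e}_{\mathbf{i}}$. The key ingredients are already available from the earlier results: the orthogonality of the components and the zero-mean property of $\tilde{e}$.

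First I would observe that by Remark \ref{Rem:exp_0}, the real-valued function $\tilde{e}$ corresponding to a Boolean function $e$ satisfies $\mathbb{E}(\tilde{e}) = 0$. Therefore the variance of $\tilde{e}$ simplifies to $\mathbf{P} = \mathbb{E}(\tilde{e}^2)$. Substituting the decomposition and expanding the square gives
\begin{equation*}
\mathbf{P} = \mathbb{E}\left(\Big(\sum_{\mathbf{i}\in\{0,1\}^n} \tilde{e}_{\mathbf{i}}\Big)^2\right) = \sum_{\mathbf{i}} \mathbb{E}(\tilde{e}_{\mathbf{i}}^2) + \sum_{\mathbf{i}\neq \mathbf{j}} \mathbb{E}(\tilde{e}_{\mathbf{i}}\tilde{e}_{\mathbf{j}}).
\end{equation*}

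Next I would invoke Proposition \ref{pr:partfun} part 1 (equivalently, Lemma \ref{prop:belong2} part 3), which states that $\mathbb{E}(\tilde{e}_{\mathbf{i}}\tilde{e}_{\mathbf{j}}) = 0$ for $\mathbf{i}\neq \mathbf{j}$. This kills all the cross terms, leaving only the diagonal sum. Then, by Lemma \ref{prop:belong2} part 1, each component satisfies $\mathbb{E}(\tilde{e}_{\mathbf{i}}) = 0$, so $\mathbb{E}(\tilde{e}_{\mathbf{i}}^2)$ coincides with the variance $\mathbf{P}_{\mathbf{i}}$. Combining these two facts yields $\mathbf{P} = \sum_{\mathbf{i}} \mathbf{P}_{\mathbf{i}}$, which is the claim.

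There is essentially no obstacle here: the corollary is the Pythagorean identity for a sum of pairwise orthogonal zero-mean random variables, and both orthogonality and zero-mean have already been proved for the components of the decomposition. As an alternative, one could derive the identity recursively from Proposition \ref{Lem:power} by specializing to $\mathbf{i} = \underline{1}$: then $X_{\mathbf{i}} = X^n$ and $\mathbb{E}_{X^n|X^n}(\tilde{e}|X^n) = \tilde{e}$, giving $\mathbf{P}_{\underline{1}} = \mathbb{E}(\tilde{e}^2) - \sum_{\mathbf{j}<\underline{1}} \mathbf{P}_{\mathbf{j}}$, and rearranging together with $\mathbf{P} = \mathbb{E}(\tilde{e}^2)$ produces the same conclusion. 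I would present the Hilbert-space/orthogonality version since it is slightly cleaner and makes the role of the previously established properties explicit.
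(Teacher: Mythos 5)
Your proof is correct. Your primary argument is the direct Pythagorean expansion: write $\mathbf{P}=\mathbb{E}(\tilde{e}^2)$ (valid since $\tilde{e}$ has zero mean by Remark \ref{Rem:exp_0}), expand the square of $\sum_{\mathbf{i}}\tilde{e}_{\mathbf{i}}$, kill the cross terms by the orthogonality in Proposition \ref{pr:partfun} (equivalently Lemma \ref{prop:belong2}, part 3), and identify $\mathbb{E}(\tilde{e}_{\mathbf{i}}^2)$ with $\mathbf{P}_{\mathbf{i}}$ via the zero-mean property of each component. The paper instead dispatches the corollary in one line by specializing the recursion of Proposition \ref{Lem:power} to $\mathbf{i}=\underline{1}$, using $\mathbb{E}_{X^n|X^n}(\tilde{e}|X^n)=\tilde{e}$ — which is exactly the alternative you sketch in your last paragraph. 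The two routes are essentially equivalent in content, since the recursion in Proposition \ref{Lem:power} was itself derived from the orthogonality and zero-mean properties; your version makes the Hilbert-space structure explicit at the cost of re-deriving what the recursion already packages, while the paper's version is shorter but leans on the reader having internalized that proposition. Either is acceptable.
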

The corollary is a special case of Proposition \ref{Lem:power}, where we have taken $\mathbf{i}$ to be the all ones vector.
So far we have found the decomposition of $\underline{\tilde{e}}$ into components $\underline{\tilde{e}_{\mathbf{i}}}$. 
 We can restate the claims made in the Introduction based on our new understanding of the effective-length: In various network communication problems, for an optimal encoding function $\underline{e}$,  the variance of $\underline{\tilde{e}}_{\mathbf{i}}$ is small when $\underline{\tilde{e}}_{\mathbf{i}}$ is a function of a large subset of the input. We prove this claim in the next three sections.
\section{Correlation Preservation in Arbitrary Encoders}\label{sec:corr}
%

Our goal is to bound the correlation preserving properties of general $n$-length encoding functions. As a first step, we derive bounds on the correlation between the outputs of two arbitrary Boolean functions (i.e. functions whose output is a binary scalar). The result that follows can be viewed as the solution to a more fundamental problem than what we discussed so far. We explain a summary of the setup here. Consider the two distributed agents shown in figure \ref{fig:agents}. 
For simplicity assume that the strings are produced based on a memoryless distribution. The strings are correlated with each other. The first agent is to make a binary decision based on its input. The second aims to guess the other's decision. 
We assume that the users have constraints on the effective-length of their decision functions. For instance, in the case of communication systems, the users have restrictions on their effective-lengths due to the rate-distortion requirements in the problem. For pedagogical reasons we present the results of this section in two parts. First, we consider binary input alphabets, and derive bounds on the probability of agreement of Boolean functions. Then, we extend these results to the case of non-binary input alphabets.
\subsection{Binary Input alphabets}

We proceed with presenting the main result of this section. Let $(X,Y)$ be a pair of binary DMS's. Consider two arbitrary Boolean functions $e:\{0,1\}^n\to \{0,1\}$ and $f:\{0,1\}^n\to \{0,1\}$. The following theorem provides an  upper-bound on the probability of equality between the functions $e(X^n)$ and $f(Y^n)$. 

\begin{Theorem}
 Let $\epsilon\triangleq P(X\neq Y)$, the following bound holds:
 \begin{equation*}
2\sqrt{\sum_{\mathbf{i}}\mathbf{P}_{\mathbf{i}}}\sqrt{\sum_{\mathbf{i}}\mathbf{Q}_{\mathbf{i}}}-2\sum_{\mathbf{i}}C_\mathbf{i}\mathbf{P}_{\mathbf{i}}^{\frac{1}{2}}\mathbf{Q}_{\mathbf{i}}^{\frac{1}{2}} 
\leq  P(e(X^n)\neq f(Y^n))\leq 1- 2\sqrt{\sum_{\mathbf{i}}\mathbf{P}_{\mathbf{i}}}\sqrt{\sum_{\mathbf{i}}\mathbf{Q}_{\mathbf{i}}}+2\sum_{\mathbf{i}}C_\mathbf{i}\mathbf{P}_{\mathbf{i}}^{\frac{1}{2}}\mathbf{Q}_{\mathbf{i}}^{\frac{1}{2}} 
,
\end{equation*}
 where $C_{\mathbf{i}}\triangleq  (1-2\epsilon)^{N_\mathbf{i}}$, $\mathbf{P}_{\mathbf{i}}$ is the variance of $\tilde{e}_{\mathbf{i}}$, and $\underline{\tilde{e}}$ is the real function corresponding to $\underline{e}$, and $\mathbf{Q}_{\mathbf{i}}$ is the variance of $\tilde{f}_{\mathbf{i}}$, and finally, $N_{\mathbf{i}}\triangleq w_H(\mathbf{i})$.
 \label{th:sec3}
\end{Theorem}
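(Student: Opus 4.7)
The plan is to convert the disagreement probability $P(e(X^n)\neq f(Y^n))$ into an expectation of the product $\tilde{e}(X^n)\tilde{f}(Y^n)$ and then exploit the orthogonal decomposition of $\tilde{e}$ and $\tilde{f}$ established in Proposition \ref{Lem:unique}. Setting $q_e=P(e=1)$ and $q_f=P(f=1)$ and observing that $\tilde{e}=e-q_e$ and $\tilde{f}=f-q_f$, a direct calculation gives the identity
\begin{align*}
P(e(X^n)\neq f(Y^n)) = q_e+q_f-2q_eq_f-2\,\mathbb{E}[\tilde{e}(X^n)\tilde{f}(Y^n)],
\end{align*}
so the task reduces to finding sharp upper and lower bounds on $\mathbb{E}[\tilde{e}(X^n)\tilde{f}(Y^n)]$ in terms of the spectra $\{\mathbf{P}_{\mathbf{i}}\}$ and $\{\mathbf{Q}_{\mathbf{i}}\}$.

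Next, expand $\tilde{e}=\sum_{\mathbf{i}}\tilde{e}_{\mathbf{i}}$ and $\tilde{f}=\sum_{\mathbf{j}}\tilde{f}_{\mathbf{j}}$ and invoke the basis representation of Lemma \ref{lem:one}, writing $\tilde{e}_{\mathbf{i}}=c_{\mathbf{i}}\prod_{t:i_t=1}\tilde{h}(X_t)$ and $\tilde{f}_{\mathbf{j}}=d_{\mathbf{j}}\prod_{t:j_t=1}\tilde{h}(Y_t)$. Because the pairs $(X_t,Y_t)$ are i.i.d. across $t$, the expectation factorizes coordinate-wise:
\begin{align*}
\mathbb{E}[\tilde{e}_{\mathbf{i}}(X^n)\tilde{f}_{\mathbf{j}}(Y^n)] = c_{\mathbf{i}}d_{\mathbf{j}}\prod_{t=1}^{n}\mathbb{E}[\tilde{h}(X_t)^{i_t}\tilde{h}(Y_t)^{j_t}],
\end{align*}
where $\tilde{h}(\cdot)^{0}\equiv 1$. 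Any coordinate with $i_t\neq j_t$ contributes a factor of $\mathbb{E}[\tilde{h}(X)]=0$ or $\mathbb{E}[\tilde{h}(Y)]=0$, killing the product, so only the diagonal terms $\mathbf{i}=\mathbf{j}$ survive, each equal to $c_{\mathbf{i}}d_{\mathbf{i}}\rho^{N_{\mathbf{i}}}$, where $\rho=\mathbb{E}[\tilde{h}(X)\tilde{h}(Y)]=q(1-q)(1-2\epsilon)$ is evaluated directly from the joint marginal of $(X,Y)$. Combined with the identity $\mathbf{P}_{\mathbf{i}}=c_{\mathbf{i}}^{2}(q(1-q))^{N_{\mathbf{i}}}$ (and similarly for $\mathbf{Q}_{\mathbf{i}}$) from Proposition \ref{pr:partfun}, this gives $|c_{\mathbf{i}}d_{\mathbf{i}}|\rho^{N_{\mathbf{i}}}=(1-2\epsilon)^{N_{\mathbf{i}}}\sqrt{\mathbf{P}_{\mathbf{i}}\mathbf{Q}_{\mathbf{i}}}=C_{\mathbf{i}}\sqrt{\mathbf{P}_{\mathbf{i}}\mathbf{Q}_{\mathbf{i}}}$, and the triangle inequality then produces $|\mathbb{E}[\tilde{e}\tilde{f}]|\leq \sum_{\mathbf{i}}C_{\mathbf{i}}\sqrt{\mathbf{P}_{\mathbf{i}}\mathbf{Q}_{\mathbf{i}}}$.

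To finish, I would convert this bound into the stated form via two elementary AM-GM inequalities. Applied to the pair $(q_e(1-q_f),\,q_f(1-q_e))$ one obtains $q_e+q_f-2q_eq_f\geq 2\sqrt{q_e(1-q_e)q_f(1-q_f)}$; applied to the pair $((1-q_e)(1-q_f),\,q_eq_f)$ one obtains $q_e+q_f-2q_eq_f\leq 1-2\sqrt{q_e(1-q_e)q_f(1-q_f)}$. Plugging these into the identity of the first paragraph, together with the identification $\sum_{\mathbf{i}}\mathbf{P}_{\mathbf{i}}=q_e(1-q_e)$ and $\sum_{\mathbf{i}}\mathbf{Q}_{\mathbf{i}}=q_f(1-q_f)$ from Corollary \ref{Cor:power}, produces both inequalities asserted by the theorem.

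The main obstacle is establishing the vanishing of the cross-terms $\mathbb{E}[\tilde{e}_{\mathbf{i}}\tilde{f}_{\mathbf{j}}]$ for $\mathbf{i}\neq\mathbf{j}$. A naive conditional-expectation argument fails: although $\tilde{e}_{\mathbf{i}}$ has zero mean in each $X_s$ with $i_s=1$, the pair $(X_s,Y_s)$ is correlated, so conditioning on $Y^n$ (which $\tilde{f}_{\mathbf{j}}$ depends on) destroys that zero-mean property. Passing to the tensor-product basis of Lemma \ref{lem:one} sidesteps this difficulty by reducing the multivariate expectation to a product of univariate ones where the zero-mean property of $\tilde{h}$ can be invoked coordinate by coordinate.
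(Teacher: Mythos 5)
Your proposal follows essentially the same route as the paper: decompose $\tilde e$ and $\tilde f$ into the tensor-product basis, observe that only the diagonal terms $\mathbf{i}=\mathbf{k}$ survive because off-diagonal coordinates contribute a zero-mean factor, bound the diagonal term by $C_{\mathbf{i}}\mathbf{P}_{\mathbf{i}}^{1/2}\mathbf{Q}_{\mathbf{i}}^{1/2}$, and convert $\mathbb{E}[\tilde e\tilde f]$ into $P(e\neq f)$ via the covariance identity and an AM--GM step. Your derivation of the upper bound directly from the second AM--GM inequality is a small, legitimate simplification of the paper's complementation trick ($h=1\oplus_2 f$); the identification $\sum_{\mathbf{i}}\mathbf{P}_{\mathbf{i}}=q_e(1-q_e)$ and the cross-term argument are the same as in the paper.

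There is, however, one genuine gap, and it sits at the mathematical core of the theorem: the claim that $\rho=\mathbb{E}[\tilde h(X)\tilde h(Y)]=q(1-q)(1-2\epsilon)$ can be "evaluated directly." This identity is false in general. Writing $p=P(X=1)$, $r=P(Y=1)$, the joint pmf is pinned down by $(p,r,\epsilon)$ and one finds $\mathbb{E}[\tilde h(X)\tilde g(Y)]=P(X=1,Y=1)-pr=\tfrac{p+r-2pr-\epsilon}{2}$; even in the symmetric-marginal case $p=r=q$ this equals $q(1-q)-\epsilon/2$, which matches $q(1-q)(1-2\epsilon)$ only when $q=\tfrac12$. (Your proposal also silently uses the same basis function $\tilde h$ for both sources, i.e.\ assumes $p=r$, which is not given.) What is actually true, and what the theorem's constant $C_{\mathbf{i}}=(1-2\epsilon)^{N_{\mathbf{i}}}$ rests on, is the \emph{inequality}
\begin{equation*}
\mathbb{E}[\tilde h(X)\tilde g(Y)]\;\le\;(1-2\epsilon)\,\mathbb{E}^{1/2}[\tilde h^2(X)]\,\mathbb{E}^{1/2}[\tilde g^2(Y)],
\end{equation*}
i.e.\ that the normalized correlation $\tfrac{p+r-2pr-\epsilon}{2\sqrt{pr(1-p)(1-r)}}$ is maximized over $(p,r)$ at $p=r=\tfrac12$, where it equals $1-2\epsilon$. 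This is exactly the paper's Lemma \ref{Lem:init}, proved by an explicit optimization over the marginals, and it cannot be bypassed: your argument happens to reach a correct final bound only because the true per-coordinate correlation is dominated by the value you asserted, and establishing that domination \emph{is} the missing lemma. To repair the proof, replace the asserted identity with this maximal-correlation bound (applied coordinate-by-coordinate, yielding $\mathbb{E}[\tilde e_{\mathbf{i}}\tilde f_{\mathbf{i}}]\le(1-2\epsilon)^{N_{\mathbf{i}}}\mathbf{P}_{\mathbf{i}}^{1/2}\mathbf{Q}_{\mathbf{i}}^{1/2}$) and prove it; the rest of your outline then goes through.
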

\begin{proof}
 Please refer to the appendix.
\end{proof}

\begin{Remark}
 The value $C_{\mathbf{i}}=(1-2\epsilon)^{N_\mathbf{i}}$ is decreasing with $N_{\mathbf{i}}$. So, in order to increase $ P(e(X^n)\neq f(Y^n))$, most of the variance $\mathbf{P}_{\mathbf{i}}$ should be distributed on $\tilde{e}_\mathbf{i}$ which have lower $N_{\mathbf{i}}$ (i.e. operate on smaller blocks). Particularly, the lower bound is minimized by setting
 \begin{align*}
 \mathbf{P}_{\mathbf{i}}=
\begin{cases}
 1  \qquad  & \mathbf{i}=\mathbf{i}_1,\\
0 \qquad & otherwise.
\end{cases}
\end{align*}
This recovers the result in \cite{ComInf2}.
 \end{Remark}
\begin{Remark}
 For fixed $\mathbf{P}_{\mathbf{i}}$, the lower-bound is minimized by taking $\tilde{e}$, and $\tilde{f}$ to be the same functions. 
\end{Remark}

\begin{Corollary}
We can simplify the bound in Theorem \ref{th:sec3} as follows:
 \begin{equation*}
2\sum_{\mathbf{i}}(1-C_\mathbf{i})\mathbf{P}_{\mathbf{i}}^{\frac{1}{2}}\mathbf{Q}_{\mathbf{i}}^{\frac{1}{2}} 
\leq  P(e(X^n)\neq f(Y^n))\leq 1-2\sum_{\mathbf{i}}(1-C_\mathbf{i})\mathbf{P}_{\mathbf{i}}^{\frac{1}{2}}\mathbf{Q}_{\mathbf{i}}^{\frac{1}{2}} 
\end{equation*}

\end{Corollary}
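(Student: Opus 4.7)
The plan is to derive the simplified bounds directly from Theorem \ref{th:sec3} by applying the Cauchy--Schwarz inequality to the term $\sqrt{\sum_{\mathbf{i}}\mathbf{P}_{\mathbf{i}}}\sqrt{\sum_{\mathbf{i}}\mathbf{Q}_{\mathbf{i}}}$ that appears in both the lower and upper bounds of the theorem.

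First, I would observe that by Cauchy--Schwarz applied to the sequences $\{\mathbf{P}_{\mathbf{i}}^{1/2}\}_{\mathbf{i}}$ and $\{\mathbf{Q}_{\mathbf{i}}^{1/2}\}_{\mathbf{i}}$ (indexed over $\mathbf{i}\in\{0,1\}^n$), we have
\begin{equation*}
\sum_{\mathbf{i}}\mathbf{P}_{\mathbf{i}}^{1/2}\mathbf{Q}_{\mathbf{i}}^{1/2} \;\leq\; \sqrt{\sum_{\mathbf{i}}\mathbf{P}_{\mathbf{i}}}\,\sqrt{\sum_{\mathbf{i}}\mathbf{Q}_{\mathbf{i}}}.
\end{equation*}
This is the one inequality that drives the entire argument.

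Next, for the lower bound, I would substitute this Cauchy--Schwarz inequality into the lower bound of Theorem \ref{th:sec3}: since $2\sqrt{\sum_{\mathbf{i}}\mathbf{P}_{\mathbf{i}}}\sqrt{\sum_{\mathbf{i}}\mathbf{Q}_{\mathbf{i}}} \geq 2\sum_{\mathbf{i}}\mathbf{P}_{\mathbf{i}}^{1/2}\mathbf{Q}_{\mathbf{i}}^{1/2}$, I obtain
\begin{equation*}
P(e(X^n)\neq f(Y^n)) \;\geq\; 2\sqrt{\sum_{\mathbf{i}}\mathbf{P}_{\mathbf{i}}}\sqrt{\sum_{\mathbf{i}}\mathbf{Q}_{\mathbf{i}}} - 2\sum_{\mathbf{i}}C_{\mathbf{i}}\mathbf{P}_{\mathbf{i}}^{1/2}\mathbf{Q}_{\mathbf{i}}^{1/2} \;\geq\; 2\sum_{\mathbf{i}}\mathbf{P}_{\mathbf{i}}^{1/2}\mathbf{Q}_{\mathbf{i}}^{1/2} - 2\sum_{\mathbf{i}}C_{\mathbf{i}}\mathbf{P}_{\mathbf{i}}^{1/2}\mathbf{Q}_{\mathbf{i}}^{1/2} = 2\sum_{\mathbf{i}}(1-C_{\mathbf{i}})\mathbf{P}_{\mathbf{i}}^{1/2}\mathbf{Q}_{\mathbf{i}}^{1/2},
\end{equation*}
which is exactly the stated lower bound. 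For the upper bound, I would apply the same inequality to the negative term: since $-2\sqrt{\sum_{\mathbf{i}}\mathbf{P}_{\mathbf{i}}}\sqrt{\sum_{\mathbf{i}}\mathbf{Q}_{\mathbf{i}}} \leq -2\sum_{\mathbf{i}}\mathbf{P}_{\mathbf{i}}^{1/2}\mathbf{Q}_{\mathbf{i}}^{1/2}$, the upper bound of Theorem \ref{th:sec3} can be weakened to $1 - 2\sum_{\mathbf{i}}(1-C_{\mathbf{i}})\mathbf{P}_{\mathbf{i}}^{1/2}\mathbf{Q}_{\mathbf{i}}^{1/2}$ by the same rearrangement.

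There is really no obstacle here; the corollary is a clean consequence of Cauchy--Schwarz and is meant to replace the mixed-form theorem bound by a single sum over $\mathbf{i}$ involving the weighting $(1-C_{\mathbf{i}})$, which makes the dependency-spectrum interpretation transparent: each component $\tilde{e}_{\mathbf{i}}$ contributes to correlation preservation in proportion to $\mathbf{P}_{\mathbf{i}}^{1/2}\mathbf{Q}_{\mathbf{i}}^{1/2}$, discounted by the factor $1-(1-2\epsilon)^{N_{\mathbf{i}}}$ that decays with the effective length $N_{\mathbf{i}}$.
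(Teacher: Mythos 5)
Your proof is correct and follows exactly the paper's argument: both apply the Cauchy--Schwarz inequality $\sum_{\mathbf{i}}\mathbf{P}_{\mathbf{i}}^{1/2}\mathbf{Q}_{\mathbf{i}}^{1/2}\leq\sqrt{\sum_{\mathbf{i}}\mathbf{P}_{\mathbf{i}}}\sqrt{\sum_{\mathbf{i}}\mathbf{Q}_{\mathbf{i}}}$ to the bounds of Theorem \ref{th:sec3} and regroup the terms. Your treatment is in fact slightly more complete, since the paper writes out only the lower-bound direction explicitly while you handle the upper bound as well.
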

\begin{proof}
  \begin{align*}
 &\sigma \geq 2\sqrt{\sum_{\mathbf{i}}\mathbf{P}_{\mathbf{i}}}\sqrt{\sum_{\mathbf{i}}\mathbf{Q}_{\mathbf{i}}}-2\sum_{\mathbf{i}}C_\mathbf{i}\mathbf{P}_{\mathbf{i}}^{\frac{1}{2}}\mathbf{Q}_{\mathbf{i}}^{\frac{1}{2}} 
 \\&\stackrel{(a)}{\Rightarrow} \sigma
 \geq 2\sum_{\mathbf{i}}{{\mathbf{P}^{\frac{1}{2}}_{i}}}{\mathbf{Q}^{\frac{1}{2}}_{i}}-2\sum_{\mathbf{i}}C_\mathbf{i}\mathbf{P}_{\mathbf{i}}^{\frac{1}{2}}\mathbf{Q}_{\mathbf{i}}^{\frac{1}{2}}
 \\&\Rightarrow \sigma\geq2\sum_{\mathbf{i}}(1-C_{\mathbf{i}})\mathbf{P}_{\mathbf{i}}^{\frac{1}{2}}\mathbf{Q}_{\mathbf{i}}^{\frac{1}{2}}.
\end{align*}
In (a) we have used the Cauchy-Schwarz inequality. 
\end{proof}

\subsection{Arbitrary input alphabets}
So far, we have only considered Boolean functions with binary input alphabets. Next, we extend Theorem \ref{th:sec3}; and derive a new bound on the correlation between the outputs of Boolean functions with arbitrary (finite) input alphabets. Similar to the previous part, let $(X,Y)$ be a pair of DMS's with joint distribution $P_{X,Y}$. Assume that the alphabets $\mathcal{X}$ and $\mathcal{Y}$ are finite sets. 
 Consider two arbitrary Boolean functions $e:\mathcal{X}^n\to \{0,1\}$ and $f:\mathcal{Y}^n\to \{0,1\}$. We prove the following extension of Theorem \ref{th:sec3}.

\begin{Theorem}
 Let $\psi\triangleq \sup(E(h(X)g(Y))$, where the supremum is taken over all single-letter functions $h:\mathcal{X}\to \mathbb{R}$, and $g:\mathcal{Y}\to\mathbb{R}$ such that $h(X)$ and $g(Y)$ have unit variance and zero mean. 
  the following bound holds:
 \begin{equation*}
2\sqrt{\sum_{\mathbf{i}}\mathbf{P}_{\mathbf{i}}}\sqrt{\sum_{\mathbf{i}}\mathbf{Q}_{\mathbf{i}}}-2\sum_{\mathbf{i}}C_\mathbf{i}\mathbf{P}_{\mathbf{i}}^{\frac{1}{2}}\mathbf{Q}_{\mathbf{i}}^{\frac{1}{2}} 
\leq  P(e(X^n)\neq f(Y^n))\leq 1- 2\sqrt{\sum_{\mathbf{i}}\mathbf{P}_{\mathbf{i}}}\sqrt{\sum_{\mathbf{i}}\mathbf{Q}_{\mathbf{i}}}+2\sum_{\mathbf{i}}C_\mathbf{i}\mathbf{P}_{\mathbf{i}}^{\frac{1}{2}}\mathbf{Q}_{\mathbf{i}}^{\frac{1}{2}} 
,
\end{equation*}
 where 1) $C_{\mathbf{i}}\triangleq  \psi^{N_\mathbf{i}}$, 2) $\mathbf{P}_{\mathbf{i}}$ is the variance of $\tilde{e}_{\mathbf{i}}$, 3) $\underline{\tilde{e}}$ is the real function corresponding to $\underline{e}$, 4) $\mathbf{Q}_{\mathbf{i}}$ is the variance of $\tilde{f}_{\mathbf{i}}$, and 5) $N_{\mathbf{i}}\triangleq w_H(\mathbf{i})$.
 \label{th:sec3_non_bin}
\end{Theorem}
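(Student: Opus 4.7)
The plan is to mimic the proof of Theorem \ref{th:sec3}, replacing the binary-specific constant $1-2\epsilon$, which was the largest correlation achievable by any zero-mean unit-variance pair of single-letter functions when $|\mathcal{X}|=|\mathcal{Y}|=2$, by the general quantity $\psi$ defined in the statement. Since the outer steps (reducing $P(e(X^n)\neq f(Y^n))$ to an inner product of the real-valued transforms, and then applying Cauchy--Schwarz to assemble the final bound) do not depend on the alphabet, the entire extension reduces to a single key estimate: for every $\mathbf{i}\in\{0,1\}^n$,
\[
\bigl|\mathbb{E}\bigl(\tilde{e}_{\mathbf{i}}(X^n)\,\tilde{f}_{\mathbf{i}}(Y^n)\bigr)\bigr|\le \psi^{N_{\mathbf{i}}}\sqrt{\mathbf{P}_{\mathbf{i}}\mathbf{Q}_{\mathbf{i}}}.
\]

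First, I would translate $P(e\neq f)$ into an expression in $\mathbb{E}(\tilde{e}\tilde{f})$ as in the proof of Theorem \ref{th:sec3}, then expand $\tilde{e}=\sum_{\mathbf{i}}\tilde{e}_{\mathbf{i}}$ and $\tilde{f}=\sum_{\mathbf{j}}\tilde{f}_{\mathbf{j}}$ using Definition \ref{Rem:Dec}. All cross terms $\mathbb{E}(\tilde{e}_{\mathbf{i}}\tilde{f}_{\mathbf{j}})$ with $\mathbf{i}\neq\mathbf{j}$ must vanish: pick any coordinate $s$ where $\mathbf{i}$ and $\mathbf{j}$ differ, say $i_s=1$ and $j_s=0$; the memoryless structure of $(X^n,Y^n)$ lets us integrate the single-letter pair $(X_s,Y_s)$ out against $\tilde{e}_{\mathbf{i}}$, which is zero-mean in $X_s$ by Lemma \ref{prop:belong2}. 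Hence only the diagonal terms $\mathbb{E}(\tilde{e}_{\mathbf{i}}\tilde{f}_{\mathbf{i}})$ remain to be bounded.

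Second, I would establish the key tensorized maximal-correlation estimate. The cleanest route is via the conditional-expectation operator $T:L^2(\mathcal{Y},P_Y)\to L^2(\mathcal{X},P_X)$ defined by $(Tg)(x)=\mathbb{E}[g(Y)\mid X=x]$. By the variational definition of $\psi$, $T$ fixes constants and, restricted to $\mathcal{I}_{Y,1}$, has operator norm exactly $\psi$. Because $(X^n,Y^n)$ is a product distribution, the tensor operator $T^{\otimes n}$ preserves the subspace $\mathcal{G}_{i_1}\otimes\cdots\otimes\mathcal{G}_{i_n}$ and, on that subspace, has operator norm $\psi^{N_{\mathbf{i}}}$ (it acts as the identity on the $n-N_{\mathbf{i}}$ trivial factors $\gamma_{Y,1}$ and as a norm-$\psi$ contraction on each of the remaining $N_{\mathbf{i}}$ factors $\mathcal{I}_{Y,1}$). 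Since $\tilde{f}_{\mathbf{i}}$ lies in this subspace by Proposition \ref{Lem:unique}, Cauchy--Schwarz yields
\[
\bigl|\mathbb{E}(\tilde{e}_{\mathbf{i}}\tilde{f}_{\mathbf{i}})\bigr|=\bigl|\langle\tilde{e}_{\mathbf{i}},T^{\otimes n}\tilde{f}_{\mathbf{i}}\rangle\bigr|\le\psi^{N_{\mathbf{i}}}\sqrt{\mathbf{P}_{\mathbf{i}}\mathbf{Q}_{\mathbf{i}}}.
\]
An equivalent inductive route on $N_{\mathbf{i}}$ using the basis representation of Lemma \ref{lem:dec_non_bin} is also available: condition on the sources at all coordinates except a chosen $t\in\{s:i_s=1\}$, apply the scalar bound $|\mathbb{E}(u(X_t)v(Y_t))|\le\psi\|u\|\|v\|$ for zero-mean $u,v$, and iterate using the tower property.

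Finally, I would substitute this diagonal bound into the Cauchy--Schwarz chain used at the end of the proof of Theorem \ref{th:sec3} (namely $\sum_{\mathbf{i}}|\mathbb{E}(\tilde{e}_{\mathbf{i}}\tilde{f}_{\mathbf{i}})|\le\sum_{\mathbf{i}}C_{\mathbf{i}}\mathbf{P}_{\mathbf{i}}^{1/2}\mathbf{Q}_{\mathbf{i}}^{1/2}$ together with $\sum_{\mathbf{i}}\mathbf{P}_{\mathbf{i}}^{1/2}\mathbf{Q}_{\mathbf{i}}^{1/2}\le\sqrt{\sum_{\mathbf{i}}\mathbf{P}_{\mathbf{i}}}\sqrt{\sum_{\mathbf{i}}\mathbf{Q}_{\mathbf{i}}}$) to arrive at the stated two-sided bound. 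The main obstacle is the tensorized maximal-correlation estimate: the SVD argument is conceptually slick but needs a short verification that $T^{\otimes n}$ genuinely has norm $\psi^{N_{\mathbf{i}}}$ on each invariant subspace indexed by $\mathbf{i}$, and the non-binary setting, unlike the binary one, has $\mathcal{I}_{X,1}$ and $\mathcal{I}_{Y,1}$ multi-dimensional, so this requires handling the full SVD of $T$ (or a bi-orthogonal basis $\{\tilde h_l\},\{\tilde g_m\}$ aligned with its singular directions) rather than the single scalar basis used in Theorem \ref{th:sec3}.
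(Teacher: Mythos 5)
Your proposal is correct and follows essentially the same route as the paper: its proof of Theorem \ref{th:sec3_non_bin} likewise keeps Steps 2 and 3 of Theorem \ref{th:sec3} intact and reduces everything to the diagonal estimate $|\mathbb{E}(\tilde{e}_{\mathbf{i}}\tilde{f}_{\mathbf{i}})|\le\psi^{N_{\mathbf{i}}}\mathbf{P}_{\mathbf{i}}^{1/2}\mathbf{Q}_{\mathbf{i}}^{1/2}$, which it establishes by writing $\mathbb{E}_{Y^n|X^n}(\tilde{f}_{\mathbf{i}}|X^n)$ as a composition of single-coordinate conditional-expectation operators $\Pi_{X_t}$ (your operator $T$), identifying the norm of $\Pi_{X_1}$ on zero-mean functions with $\psi$ via the variational definition, tensorizing the norm over the $N_{\mathbf{i}}$ active coordinates, and finishing with Cauchy--Schwarz exactly as you do. Your worry about needing the full SVD of $T$ is unnecessary: only the operator norm of $T$ restricted to $\mathcal{I}_{Y,1}$ is needed, and that equals $\psi$ directly from the supremum definition.
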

\begin{proof}
 Pleas refer to the Appendix.
\end{proof}
\begin{Remark}
 In Lemma \ref{Lem:init} it was shown that for binary random variables $X$ and $Y$, with $P(X\neq Y)=\epsilon$, we have $\psi=1-2\epsilon$. So, the bounds in Theorem \ref{th:sec3} and Theorem \ref{th:sec3_non_bin} are the same for binary inputs. 
\end{Remark}
\begin{Remark}
 The value of $\psi$ is in the interval $[0,1]$.  $\psi$ is equal to one if and only if $X=Y$. The proof is straightforward and follows from the Cauchy-Schwarz inequality.
\end{Remark}


%

\section{Correlation in Single Letter Coding Ensembles}\label{sec:SLCS}

In this section, we provide the definition of a coding ensemble, and identify three properties which are shared among many of the coding ensembles available for multi-terminal communications. We call the group of coding ensembles which share these properties the Single-letter Coding Ensembles (SLCE). These coding ensembles are the ones used in schemes such as Shannon's PtP source coding scheme, the Berger-Tung coding scheme for distributed source coding \cite{Markov}, the Zhang-Berger multiple-descriptions coding scheme \cite{ZB}, and the Cover-El Gamal-Salehi coding scheme \cite{CES} for transmission of correlated sources over the multiple-access channel. We use the results in the previous section to bound the correlation between the outputs of two encoding functions produced using a SLCE. The proof involves several steps. First, it is shown that SLCEs produce encoding functions which have most of their variance either on the single-letter components of their dependency spectrum or on the components with asymptotically large blocklengths. This along with Theorem \ref{th:sec3_non_bin} are used to prove that such schemes are inefficient in preserving correlation. 
\subsection{Single Letter Coding Ensembles}
Traditionally, in information theory, coding ensembles are used to show the existence of good encoding and decoding functions.
A coding ensemble is characterized by a probability distribution on the set of all encoding functions. 
We define the coding ensemble as follows:
\begin{Definition}
 A coding ensemble $\mathscr{S}$ is characterized by a probability measure $P_{\mathscr{S}}(\underline{e}_1,\underline{e}_2,\cdots,\underline{e}_t)$ on the set of encoding  functions $\underline{e}_k: \mathcal{X}^{n_k}\to \{0,1\}^{r_k}, k\in [1,t]$.
\end{Definition}
As an example, in Shannon's point-to-point coding ensemble, the probability of choosing any encoding and decoding function pair is determined by first using a single-letter distribution to assign probabilities to codebooks, and then using typicality rules to determine the encoding and decoding functions.  Whenever the choice of the coding ensemble is clear, we denote the distribution by $P_{\underline{E}}(\underline{e})$. The following defines the SLCEs:
\begin{Definition} \label{def:SLCE}
 The coding ensemble characterized by $P_{\mathscr{S}}$ is called an SLCE if the following constraints are satisfied. For an arbitrary $k\in [1,t]$, let $\underline{E}=\underline{E}_k$, then\footnote{Recall that the $i$th component of the vector of encoding functions $\underline{E}$ is denoted by $E_i$.}

 1) $ \exists \delta_X>0 \text{ such that }  \forall x^n, \exists B_n(x^n)\subset \mathcal{X}^n$ such that the following holds: 
\begin{align*}
&P_{X^n}(X^n\in B_n(x^n))\leq 2^{-n\delta_{X}},\text{ and }\forall \tilde{x}^n\notin B_n(x^n)
\\&(1-2^{-n\delta_X})P(\underline{E}(x^n))P(\underline{E}(\tilde{x}^n))<P(\underline{E}(x^n),\underline{E}(\tilde{x}^n))<(1+2^{-n\delta_X})P(\underline{E}(x^n))P(\underline{E}(\tilde{x}^n)) .
\end{align*}
 
 
 2) $ \forall \delta>0, \exists n \text{ such that }\mathbb{N}\ni \forall m>n, \forall x^m\in \{0,1\}^m , v \in \{0,1\}, \forall i\in [1,m]$:
 \\ $|P_{\mathscr{S}}(E_i(X^m)=v|X^m=x^m)-P_{\mathscr{S}}(E_i(X^m)=v|X_i=x_i)|<\delta$.

3) $\forall \pi \in S_n: P_{\underline{E}}(\underline{E})=P_{\underline{E}}(\underline{E}_\pi)$, where $\underline{E}_\pi(X^n)=\pi^{-1}(\underline{E}(\pi(X^n)))$, where $S_n$ is the symmetric group of length n.

\end{Definition}
\begin{Remark}
 The first condition can be interpreted as follows: for an arbitrary sequence $x^n$, let $B_n(x^n)$ be the set of sequences $\tilde{x}^n$, such that the set of encoding functions which map $x^n$ and $\tilde{x}^n$ to the same sequence has non-zero probability with respect to the measure $P_{\mathscr{S}} $ (e.g. in typicality encoding this requires $x^n$ and $\tilde{x}^n$ to be jointly typical based on some fixed distribution). Then, the condition requires that the probability of the set $B_n(x^n)$ goes to $0$ exponentially as $n\to \infty$. Furthermore, if $\tilde{x}^n\notin B_n(x^n)$, then it is mapped to a codeword which is chosen independent of $\underline{E}(x^n)$ (i.e. codewords are chosen pairwise independently). 
 \end{Remark}
 \begin{Remark}
The second condition can be interpreted as follows: the joint distribution of the input sequence and the output sequence of the encoding function averaged over all possible encoding functions approaches a product distribution in variational distance as $n\to \infty$.
\end{Remark}
\begin{Remark}
The explanation for the third condition is that the probability that a vector $x^n$ is mapped to $y^n$ depends only on their joint type and is equal to the probability that the permuted sequence $\pi(x^n)$ is mapped to $\pi(y^n)$. As an example, this condition holds in typicality encoding. 
\end{Remark}
In the next example we show that Shannon's coding ensemble for point-to-point source coding satisfies the above conditions.

\subsection{Examples of Single-letter Coding Ensembles}

In the following examples, we show that Shannon's point-to-point source coding strategy \cite{Shannon} and the Cover-El Gamal Salehi (CES) \cite{CES}  scheme for the transmission of correlated sources over the multiple access channel are SLCEs. 
\paragraph{Point-to-Point Source Coding}

 \begin{figure}[!t]
\centering
\includegraphics[width=0.7\columnwidth,draft=false]{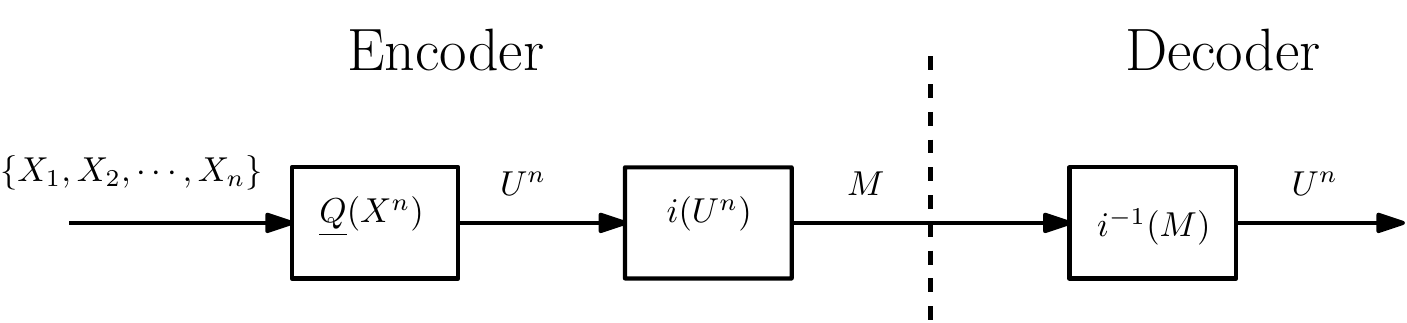}
\caption{Point-to-point source coding example}
\label{fig:PtPS}
\end{figure}
Consider the PtP source coding problem depicted in Figure \ref{fig:PtPS}. A discrete memoryless source $X$ is being fed to an encoder. The encoder utilizes the mapping $\underline{E}:\mathcal{X}^n\to \mathcal{U}^n$ to compress the source sequence. The image of $\underline{E}$ is indexed by the bijection $i:\textit{Im}(\underline{E})\to [1,|\textit{Im}(\underline{E})|]$. The index $M\triangleq{i(E(X^n))}$ is sent to the decoder. The decoder reconstructs the compressed sequence $U^n\triangleq i^{-1}(M)=\underline{E}(X^n)$.  The efficiency of the reconstruction is evaluated based on the separable distortion criteria $d_n:\mathcal{X}^n\times\mathcal{U}^n\to [0,\infty)$. The separability property means that $d_n(x^n,u^n)=\sum_{i\in[1,n]}d_1(x_i,u_i)$. We assume that the alphabets $\mathcal{X}$ and $\mathcal{U}$ are both binary. The rate of transmission is defined as $R\triangleq \frac{1}{n}\log{|\textit{Im}(\underline{E})|}$, and the average distortion is defined as $\frac{1}{n}\mathbb{E}(d_n(X^n, U^n))$. The goal is to choose $\underline{E}$ such that the rate-distortion tradeoff is optimized. Note that the choice of the bijection `$i$' is irrelevant to the performance of the system. The following Lemma gives the achievable RD region for this setup.  
\begin{Lemma}\cite{Shannon}\label{lem:PtPs}
 For the source $X$ and distortion criteria $d_1:\{0,1\}\times\{0,1\}\to [0,\infty)$, fix a conditional distribution  $p_{U|X}(u|x),x,u\in \{0,1\}$. 
 The rate-distortion pair $(R,D)=\left(r, \mathbb{E}_{X,U}(d_1(X,U)\right)) $ is achievable for all $r>I(U;X)$. 
\end{Lemma}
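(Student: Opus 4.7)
The plan is to use Shannon's random coding argument with joint-typicality encoding, which is the standard route to the direct part of the point-to-point rate-distortion theorem. Fix any $r > I(U;X)$, choose $R$ with $r > R > I(U;X)$, and for large $n$ pick a small $\epsilon>0$ with $R > I(U;X) + 3\epsilon$.

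First I would construct the random codebook. Draw $2^{nR}$ codewords $u^n(m)$, $m \in [1, 2^{nR}]$, independently and identically distributed, each component i.i.d.\ according to the marginal $P_U$ induced by $P_X \cdot P_{U|X}$. The encoder $\underline{E}$ is then defined by joint-typicality search: on input $x^n$, let $\underline{E}(x^n) = u^n(M)$ where $M$ is the smallest index with $(x^n, u^n(M)) \in A^{(n)}_\epsilon(X,U)$, the $\epsilon$-typical set for the joint distribution $P_{X,U}$. If no such index exists, output $u^n(1)$ (an error event). The rate is at most $R$ by construction, and since $R < r$, it falls within the allowed budget.

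Next I would bound the probability of the encoding-failure event averaged over the random codebook and source. By the joint AEP, for a given typical $x^n$ a randomly drawn $U^n \sim P_U^{\otimes n}$ lies in $A^{(n)}_\epsilon$ jointly with $x^n$ with probability at least $2^{-n(I(U;X)+\epsilon)}$. Since the $2^{nR}$ codewords are independent, the probability that none is jointly typical is at most
\[
\bigl(1 - 2^{-n(I(U;X)+\epsilon)}\bigr)^{2^{nR}} \le \exp\!\bigl(-2^{n(R-I(U;X)-\epsilon)}\bigr),
\]
which tends to zero doubly exponentially because $R - I(U;X) - \epsilon > 0$. Adding the vanishing probability that $X^n$ itself is atypical, the total failure probability is bounded by some $\delta_n \to 0$.

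Finally I would translate this into a distortion bound. On the success event, $(X^n, U^n) \in A^{(n)}_\epsilon$, so by the typical average lemma the empirical distortion $\frac{1}{n}d_n(X^n,U^n) = \frac{1}{n}\sum_i d_1(X_i,U_i)$ is at most $(1+\epsilon)\mathbb{E}[d_1(X,U)]$. On the failure event the distortion is bounded by $d_{\max} \triangleq \max_{x,u} d_1(x,u)$, which is finite since the alphabets are binary. Hence the expected per-letter distortion, averaged over the source and the random codebook, is at most $(1+\epsilon)\mathbb{E}[d_1(X,U)] + \delta_n d_{\max}$. Taking $n$ large and $\epsilon$ small makes this arbitrarily close to $D = \mathbb{E}[d_1(X,U)]$, so by a standard averaging argument there exists at least one deterministic codebook (and corresponding $\underline{E}$) achieving rate $r$ and distortion arbitrarily close to $D$. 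The only non-routine step is the covering-style estimate for joint-typical encoding, but this is standard and follows directly from the joint AEP.
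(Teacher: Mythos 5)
Your proof is correct and follows essentially the same route as the paper: a randomly generated codebook together with joint-typicality encoding, with the achievability following from the covering bound of the joint AEP. The only cosmetic differences are that you draw codewords i.i.d.\ from $P_U^{\otimes n}$ rather than uniformly from $A_\epsilon^n(U)$ (a variation the paper itself notes is equivalent) and you actually carry out the failure-probability and distortion estimates that the paper leaves as a standard assertion.
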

\begin{proof}
In order to verify the properties of the SLCE's in the coding ensemble used in this problem, we give an outline of the scheme. Fix $n\in \mathbb{N}$, and $\epsilon>0$. Define $P_U(u)=\mathbb{E}_{X}\{P_{U|X}(u|X))\}$. Proving achievability is equivalent to showing the existence of a suitable encoding function $\underline{E}(X^n)$. In \cite{Shannon}, a randomly generated encoding function is constructed with the aid of a set of vectors called the codebook, and an assignment rule called typicality encoding. We construct the codebook $\mathcal{C}$ as follows.  Let ${A}_{\epsilon}^n(U)\triangleq \{u^n\big||\frac{1}{n}w_{H}(u^n)-P_{U}(1)|<\epsilon\}$ be the set of $n$-length binary vectors which are $\epsilon$-typical with respect to $P_U$. Choose $\ceil{2^{nR}}$ vectors from ${A}_{\epsilon}^n(U)$ randomly and uniformly. Let $\mathcal{C}\subset {A}_{\epsilon}^n(U)$ be the set of these vectors. The encoder constructs the encoding function $\underline{E}(X^n)$ as follows.  For an arbitrary sequence $x^n\in \{0,1\}^n$, define $A_{\epsilon}^n(U|x^n)$ as the set of vectors in $\mathcal{C}$ which are jointly $\epsilon$-typical with $x^n$ based on $P_{U|X}$. The vector $\underline{E}(x^n)$ is chosen randomly and uniformly from $A_{\epsilon}^n(U|x^n)\cap \mathcal{C}$. The probabilistic choice of the codewords as well as the quantization, puts a distribution on the random function $\underline{E}$. It can be shown that as $n$ becomes larger codes produced based on this distribution $P(\underline{E})$ achieve the rate-distortion vector (R,D) with probability approaching one. 
 \end{proof}
 \label{Ex:PtPS}
\begin{Remark}
 It is well-known that in the above scheme, the codebook generation process could be altered in the following way: instead of choosing the codewords randomly and uniformly from the set of typical sequences ${A}_{\epsilon}^n(U)$, the encoder can produce each codeword independent of the others and with the distribution $P_{U^n}(u^n)=\Pi_{i\in[1,n]}P_{U}(u_i)$. However, the discussion that follows remains unchanged regardless of which of these codebook generation methods are used.  
\end{Remark}

\begin{Lemma}\label{lem:SLCE_def}
 The ensemble described in the proof of Lemma \ref{lem:PtPs} is a SLCE.
\end{Lemma}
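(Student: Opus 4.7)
The plan is to verify the three conditions in Definition \ref{def:SLCE} one by one for Shannon's point-to-point source coding ensemble described in the proof of Lemma \ref{lem:PtPs}. I will use the alternative codebook construction mentioned in the Remark after Lemma \ref{lem:PtPs}, where each codeword in $\mathcal{C}$ is drawn independently according to $P_{U^n}(u^n)=\prod_{i\in[1,n]}P_{U}(u_i)$, since this gives cleaner pairwise independence and the statement is unaffected.

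I would begin with condition 3 (permutation invariance), which is the easiest: the codebook is a multiset of i.i.d.\ samples under a product distribution, so its law is permutation-invariant; the typicality relation $u^n\in A_\epsilon^n(U|x^n)$ depends only on the joint type of $(x^n,u^n)$ and is therefore permutation equivariant, i.e.\ $u^n\in A_\epsilon^n(U|x^n)\Leftrightarrow\pi(u^n)\in A_\epsilon^n(U|\pi(x^n))$. Combining these two invariances gives $P_{\underline E}(\underline E)=P_{\underline E}(\underline E_\pi)$ exactly.

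Next I would handle condition 2 via the conditional typicality lemma. Conditioned on $X^m=x^m$, the encoder output $\underline E(x^m)$ is (up to an event of probability $o(1)$) uniformly distributed on the set $A_\epsilon^m(U|x^m)\cap\mathcal{C}$; for $m$ large the codebook covers this conditional typical set uniformly, so the law of $\underline E(x^m)$ is close in variational distance to the uniform distribution on $A_\epsilon^m(U|x^m)$. The conditional typicality lemma then states that the induced marginal of the $i$-th coordinate satisfies $|P(E_i(X^m)=v\mid X^m=x^m)-P_{U|X}(v\mid x_i)|\to 0$ uniformly in $x^m$ and $i$. Taking the same conditional expectation on the event $\{X_i=x_i\}$ yields the same limit $P_{U|X}(v\mid x_i)$, which establishes condition 2.

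For condition 1, the right choice of bad set is $B_n(x^n)\triangleq\{\tilde x^n:\exists\,u^n\in A_\epsilon^n(U|x^n)\cap A_\epsilon^n(U|\tilde x^n)\}$, i.e.\ the sequences that can possibly share a jointly typical codeword with $x^n$. For any $\tilde x^n\notin B_n(x^n)$, the conditional-typical sets $A_\epsilon^n(U|x^n)$ and $A_\epsilon^n(U|\tilde x^n)$ are disjoint, so $\underline E(x^n)$ and $\underline E(\tilde x^n)$ are selected from disjoint subcollections of $\mathcal{C}$; since the codewords in $\mathcal{C}$ are independently drawn, the selections are independent, and a standard polynomial-in-$n$ correction (from the uniform-covering error) gives the multiplicative $1\pm 2^{-n\delta_X}$ bound required. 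The probability bound $P_{X^n}(X^n\in B_n(x^n))\le 2^{-n\delta_X}$ is a standard method-of-types estimate: membership in $B_n(x^n)$ forces the joint type $\hat P_{x^n,\tilde x^n}$ to admit an extension $P_{X\tilde X U}$ whose two marginals equal $P_{XU}$, which is a strictly atypical event for the product source $P_X\times P_X$ provided the target distribution $P_{U|X}$ is nontrivial, yielding an exponent $\delta_X>0$ via Sanov's theorem.

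The main obstacle is the quantitative control in condition 1: one must pin down $\delta_X$ so that both the pairwise-independence error $2^{-n\delta_X}$ (arising from the random codebook fluctuations around the typical covering size) and the measure of the bad set $B_n(x^n)$ decay at the same exponential rate. This requires choosing $\epsilon$ and the codebook rate carefully so that the large-deviation exponent of $\{(X,\tilde X)\in B_n(x^n)\}$ dominates the subexponential corrections from the codebook-concentration argument, and then taking the minimum of the two resulting exponents as $\delta_X$. The remaining verifications are largely mechanical consequences of the method of types.
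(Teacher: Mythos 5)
Your proposal follows essentially the same route as the paper: the same bad set $B_n(x^n)$ bounded via the method of types (with exponent governed by $I(\tilde X;X)$ through the shared codeword), the same disjointness-of-conditional-typical-sets argument for approximate pairwise independence, a conditional-type counting argument for condition 2 (which the paper carries out explicitly rather than citing a lemma), and the same joint-type symmetry for condition 3. The only imprecision is calling the codebook-concentration correction ``polynomial-in-$n$''; to obtain the required $(1\pm 2^{-n\delta_X})$ multiplicative bound this deviation must be (and, as in the paper's $\binom{|\mathcal{C}|}{2}/\binom{|A_\epsilon^n(U)|}{2}$ computation, is) exponentially small.
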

\begin{proof}
 Please refer to the Appendix.
\end{proof}
\paragraph{Transmission of Correlated Sources over he Multiple Access Channel (CS-MAC)}

Consider the problem of the lossless transmission of the sources $U$ and $V$ over MAC depicted in Figure \ref{fig:CS-MACgen}. The largest known transmissible region for this problem is achieved using the CES scheme. The following lemma gives the transmissible region using the CES scheme in absence of common components (i.e. when there is no random variable $W$ such that (a) $H(W)>0$, (b) $W=f(U)=g(V)$.)
\begin{figure}[!h]
\centering
\includegraphics[width=0.6\columnwidth,draft=false]{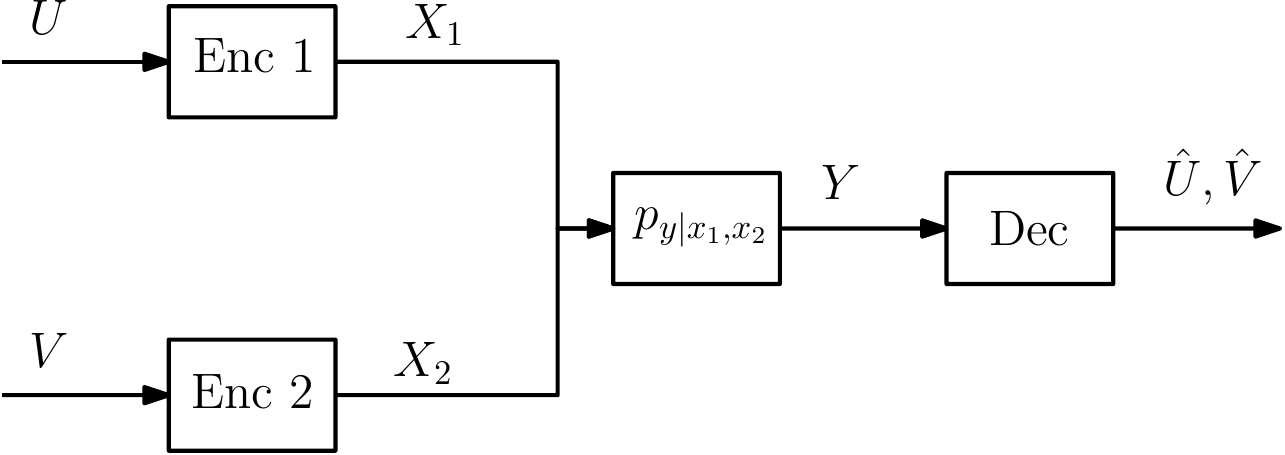}
\caption{Transmission of Sources over MAC}
\label{fig:CS-MACgen}
\end{figure}
\begin{Lemma}\cite{CES}
The sources $U$, and $V$ are transmissible over a a CS-MAC with channel input alphabets $\mathcal{X}_1$ and $\mathcal{X}_2$, and output alphabet $\mathcal{Y}$, and channel transition probability $p(y|x_1,x_2)$, if there exists a probability mass
function $p(x_1|u)p(x_2|v)$ such that:
\begin{align*}
 &H(U|V)< I(X_1;Y|X_2,V),\\
 &H(V|U)< I(X_2;Y|X_1,U),\\
 &H(U,V)< I(X_1,X_2;Y),
\end{align*}
where 
\begin{align*}
 p(u,v,x_1,x_2)=p(u,v)p(x_1|u)p(x_2|v).
\end{align*}
\end{Lemma}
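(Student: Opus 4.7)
The plan is to prove achievability by the standard CES random coding argument, with the joint distribution $p(u,v,x_1,x_2,y)=p(u,v)p(x_1|u)p(x_2|v)p(y|x_1,x_2)$ fixed. I would first describe codebook generation: for each $u^n\in\mathcal{U}^n$, independently draw a codeword $X_1^n(u^n)$ from $\prod_{i=1}^n p(x_{1,i}|u_i)$, and for each $v^n\in\mathcal{V}^n$, independently draw $X_2^n(v^n)$ from $\prod_{i=1}^n p(x_{2,i}|v_i)$. Encoding is deterministic: terminal 1 observes $U^n$ and transmits $X_1^n(U^n)$; terminal 2 observes $V^n$ and transmits $X_2^n(V^n)$. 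The decoder, upon receiving $Y^n$, searches for the unique pair $(\hat{u}^n,\hat{v}^n)$ such that $(\hat{u}^n,\hat{v}^n,X_1^n(\hat{u}^n),X_2^n(\hat{v}^n),Y^n)$ lies in the jointly $\epsilon$-typical set $A_{\epsilon}^n$ with respect to the fixed distribution.

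Next I would set up the error event decomposition. Let $E_0$ be the event that the genuine quintuple $(U^n,V^n,X_1^n(U^n),X_2^n(V^n),Y^n)$ is not jointly typical; by the AEP, $P(E_0)\to 0$. Let $E_1$ be the event that some $\tilde{u}^n\neq U^n$ satisfies $(\tilde{u}^n,V^n,X_1^n(\tilde{u}^n),X_2^n(V^n),Y^n)\in A_{\epsilon}^n$; let $E_2$ be the symmetric event with $\tilde{v}^n\neq V^n$; and let $E_3$ be the event that some $(\tilde{u}^n,\tilde{v}^n)$ with $\tilde{u}^n\neq U^n$ and $\tilde{v}^n\neq V^n$ gives joint typicality. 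The overall error probability is bounded by $P(E_0)+P(E_1)+P(E_2)+P(E_3)$.

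The core of the argument is then to bound each of $P(E_1), P(E_2), P(E_3)$. For $E_1$, conditioned on $(U^n,V^n,X_2^n(V^n),Y^n)$ being typical, the number of $\tilde{u}^n$ jointly typical with $V^n$ under $p(u,v)$ is at most $2^{n(H(U|V)+\delta)}$, and for each such $\tilde{u}^n$ the independently drawn $X_1^n(\tilde{u}^n)$ is jointly typical with $(V^n,X_2^n(V^n),Y^n)$ with probability at most $2^{-n(I(X_1;Y|V,X_2)-\delta)}$. A union bound forces the first rate condition $H(U|V)<I(X_1;Y|X_2,V)$. An entirely symmetric analysis of $E_2$ produces the second condition. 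For $E_3$ the number of competing pairs is at most $2^{n(H(U,V)+\delta)}$, and pairwise independence of codewords gives a per-pair typicality probability of at most $2^{-n(I(X_1,X_2;Y)-\delta)}$, yielding the third condition.

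The main obstacle, and the place where care is required, is in the counting step for $E_3$: one must keep track of the fact that $\tilde{u}^n$ and $\tilde{v}^n$ are both different from the true sequences, so the codewords $X_1^n(\tilde{u}^n)$ and $X_2^n(\tilde{v}^n)$ are drawn independently of the channel output $Y^n$ and of each other, and the joint typicality probability factorizes cleanly; if only one of them differs, the relevant mutual information is conditional (as in $E_1, E_2$), and the three conditions reflect the three combinations of "which index is wrong." After taking $\epsilon,\delta\to 0$ and $n\to\infty$, the expected error probability over the random code goes to zero, so there exists at least one deterministic codebook achieving the claimed transmissibility.
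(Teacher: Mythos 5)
The paper does not actually prove this lemma---it is imported verbatim from the cited Cover--El Gamal--Salehi reference, and the surrounding text only sketches the codebook construction in order to verify the SLCE properties---and your argument is precisely the standard CES random-coding/joint-typicality proof of that result, so it is essentially the same approach as the source the paper relies on. The only informality is the exponent accounting for $E_1$ (and symmetrically $E_2$): the tight count is $2^{nH(U|V,X_2,Y)}$ surviving candidates, each succeeding with probability about $2^{-nI(X_1;Y|U,V,X_2)}$, but since $H(U|V,X_2,Y)-I(X_1;Y|U,V,X_2)=H(U|V)-I(X_1;Y|X_2,V)$ under the Markov structure $p(u,v)p(x_1|u)p(x_2|v)p(y|x_1,x_2)$, your product gives the correct condition.
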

Similar to the previous example achievability is proved by providing a coding ensemble which specifies
a probability distribution $P_{\mathscr{S}}(\mathbf{e}_1,\mathbf{e}_2)$ on the set of pairs of encoding functions at the two transmitters. 
The CES scheme generates the encoding functions independently (i.e. $P_{\mathscr{S}}(\mathbf{e}_1,\mathbf{e}_2)=P_{\mathscr{S}}(\mathbf{e}_1)P_{\mathscr{S}}(\mathbf{e}_2)$). Each of the encoding functions is generated by a method similar to the previous example. Hence, the marginals $P_{\mathscr{S}}(\mathbf{e}_i), i\in \{1,2\}$ each satisfy the conditions in Definition \ref{def:SLCE}. So, the coding ensemble is a SLCE. 

\begin{Remark}
 Similar arguments can be given to show that the Zhang-Berger coding scheme for multiple-descriptions coding \cite{ZB} uses a SLCE.
\end{Remark}

\subsection{Bounds on Output Correlation for SLCEs}
Our goal is to analyze the correlation preserving properties of SLCE's. For a randomly generated encoding function $\underline{E}= (E_1,E_2,\cdots,E_n)$, denote the decomposition of the real function corresponding to the $k$th element into the form in Equation \eqref{eq:Hil_Dec2} as $\tilde{E}_k= \sum_{\mathbf{i}}\tilde{E}_{k, \mathbf{i}}, k\in [1,n]$. Let $\mathbf{P}_{k,\mathbf{i}}$ be the variance of $\tilde{E}_{k, \mathbf{i}}$. The next theorem states one of the main results of this section.

\begin{Theorem}
 For any $k\in \mathbb{N}, m\in \mathbb{N}, \gamma>0$, $P_{\mathscr{S}}(\sum_{{\mathbf{i}}:N_{\mathbf{i}}\leq m, \mathbf{i}\neq \mathbf{i}_k} \mathbf{P}_{k,{\mathbf{i}}}\geq \gamma)\to 0$, as $n\to \infty$, where, $\mathbf{i}_k$ is the $k$th standard basis element.
 \label{th:sec4}
\end{Theorem}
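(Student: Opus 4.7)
The plan is to apply Markov's inequality: since every $\mathbf{P}_{k,\mathbf{i}}\ge 0$, it suffices to show $\mathbb{E}_{\mathscr{S}}\!\bigl[\sum_{\mathbf{i}:\,N_{\mathbf{i}}\le m,\,\mathbf{i}\ne \mathbf{i}_k}\mathbf{P}_{k,\mathbf{i}}\bigr]\to 0$ as $n\to\infty$. By Proposition~\ref{pr:partfun} together with the binary specialization of Lemma~\ref{lem:one} (the general-alphabet extension being analogous), I would represent each variance as a squared Fourier-like coefficient, $\mathbf{P}_{k,\mathbf{i}}=Z_{k,\mathbf{i}}^{2}/(q(1-q))^{N_{\mathbf{i}}}$, where
\[
Z_{k,\mathbf{i}}(\underline{E})\;:=\;\mathbb{E}_{X^n}\!\Bigl[\tilde{E}_k(X^n)\prod_{t:\,i_t=1}\tilde{h}(X_t)\Bigr].
\]
Squaring and exchanging the expectations over $X^n$ and $\mathscr{S}$ turns $\mathbb{E}_{\mathscr{S}}[Z_{k,\mathbf{i}}^2]$ into the integral of the two-point function $\mathbb{E}_{\mathscr{S}}[\tilde{E}_k(x^n)\tilde{E}_k(\tilde x^n)]$ against $\prod_{t}\tilde{h}(x_t)\tilde{h}(\tilde x_t)\,P(x^n)P(\tilde x^n)$; the whole argument then comes down to controlling this two-point function.

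The SLCE axioms are exactly what pin this object down. Condition~(2) gives $\mathbb{E}_{\mathscr{S}}[\tilde{E}_k(x^n)]=\tilde{h}_k(x_k)+\eta(x^n)$ with $|\eta|\le\delta$, where $\tilde{h}_k$ depends on $x_k$ only; condition~(1) forces $|\mathrm{Cov}_{\mathscr{S}}[\tilde{E}_k(x^n),\tilde{E}_k(\tilde x^n)]|=O(2^{-n\delta_X})$ whenever $\tilde x^n\notin B_n(x^n)$, while the atypical set itself has source-probability at most $2^{-n\delta_X}$. Splitting the two-point function as product of means plus covariance and substituting, the principal contribution becomes $\bigl(\mathbb{E}[\tilde{h}_k(X_k)\prod_{t\in\mathrm{supp}(\mathbf{i})}\tilde{h}(X_t)]\bigr)^2$. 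Using DMS independence and $\mathbb{E}[\tilde{h}(X)]=0=\mathbb{E}[\tilde{h}_k(X)]$, this expectation factorises with at least one zero-mean factor for every $\mathbf{i}\ne \mathbf{i}_k$, $\mathbf{i}\ne\mathbf{0}$: if $k\notin\mathrm{supp}(\mathbf{i})$, the isolated factor $\mathbb{E}[\tilde{h}_k(X_k)]=0$ kills the product; otherwise $k\in\mathrm{supp}(\mathbf{i})$ and $N_{\mathbf{i}}>1$, so some $t\ne k$ in the support contributes $\mathbb{E}[\tilde{h}(X_t)]=0$. Crucially, the linear-in-$\eta$ cross terms inherit the same vanishing (they still factor through the zero expectation of $\tilde{h}_k(X_k)\prod_t\tilde{h}(X_t)$), so only the quadratic-in-$\eta$ piece survives, producing
\[
\mathbb{E}_{\mathscr{S}}[\mathbf{P}_{k,\mathbf{i}}]\;\le\; C_{m,q}\bigl(\delta^2+2^{-n\delta_X}\bigr)
\]
uniformly over $\mathbf{i}\ne \mathbf{i}_k$ with $N_{\mathbf{i}}\le m$.

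Finally, summation: there are at most $(n+1)^m$ vectors $\mathbf{i}$ with $N_{\mathbf{i}}\le m$, so
\[
\mathbb{E}_{\mathscr{S}}\!\Biggl[\sum_{\mathbf{i}:\,N_{\mathbf{i}}\le m,\,\mathbf{i}\ne \mathbf{i}_k}\mathbf{P}_{k,\mathbf{i}}\Biggr]\;\le\; (n+1)^m\,C_{m,q}\,\bigl(\delta^2+2^{-n\delta_X}\bigr)\;\xrightarrow[n\to\infty]{}\;0,
\]
and Markov's inequality closes the proof. The main obstacle is precisely this interplay between the polynomial count of low-order Fourier coefficients and the per-term decay rate: the exponential factor $2^{-n\delta_X}$ from condition~(1) absorbs the $(n+1)^m$ count outright, and the quadratic-in-$\delta$ mean bound — made possible by the cross-term cancellation noted above — is what renders the condition-(2) contribution negligible for each fixed $m$. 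Permutation symmetry (condition~(3)) provides cleaner bookkeeping: $\mathbb{E}_{\mathscr{S}}[\mathbf{P}_{k,\mathbf{i}}]$ depends only on the pair $(N_{\mathbf{i}},\mathbf{1}\{k\in\mathrm{supp}(\mathbf{i})\})$, so the sum collapses to two scalar sequences weighted by $\binom{n-1}{s}$ and $\binom{n-1}{s-1}$, which one then feeds into the reduction above.
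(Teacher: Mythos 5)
Your skeleton matches the paper's: Markov's inequality, permutation symmetry to fix $k$, condition (1) to kill the ensemble covariance of $\tilde{E}_k(x^n)$ and $\tilde{E}_k(\tilde{x}^n)$ off the exponentially small set $B_n(x^n)$, and condition (2) plus the zero-mean factorization over the DMS to make the ensemble-mean contribution of each $\tilde{E}_{k,\mathbf{i}}$ with $\mathbf{i}\neq\mathbf{i}_k$ small. The per-term estimate $\mathbb{E}_{\mathscr{S}}[\mathbf{P}_{k,\mathbf{i}}]\le C_{m,q}(\delta^2+2^{-n\delta_X})$ is essentially sound.

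The gap is in the last step, where you multiply this per-term bound by the count $|\{\mathbf{i}:N_{\mathbf{i}}\le m\}|\sim n^m$ and assert that $(n+1)^m C_{m,q}\delta^2\to 0$ for fixed $m$. Condition (2) is purely qualitative: for each $\delta>0$ there is some $n(\delta)$ beyond which the approximation holds, but it gives no rate at which the best achievable $\delta_n$ decays. So you face a circularity: to make $\delta$ small you must take $n$ large, and by then the prefactor $(n+1)^m$ may have overwhelmed $\delta_n^2$ (e.g.\ nothing in Definition \ref{def:SLCE} rules out $\delta_n=1/\log n$). The exponential piece $(n+1)^m 2^{-n\delta_X}$ is fine; the $\delta^2$ piece is not. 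The paper avoids this by never summing term-by-term: it bounds $\sum_{\mathbf{i}:N_{\mathbf{i}}\le m}\bar{P}_{\mathbf{i}}$ by the \emph{full} sum $\sum_{\mathbf{i}}\bar{P}_{\mathbf{i}}=\mathbb{E}_{X^n}\bigl(\mathbb{E}^2_{\tilde{E}|X^n}(\tilde{E}|X^n)\bigr)$, which equals the total variance of the ensemble-mean function, and applies condition (2) \emph{once} to that aggregate to get $\bar{P}_{\mathbf{i}_1}+O(\delta)$ — a single additive error with no polynomial multiplicity. Your argument is repairable in the same spirit: the quantities $\mathbb{E}_{X^n}[\eta\prod_{t\in\mathrm{supp}(\mathbf{i})}\tilde{h}(X_t)]$ are (up to the normalization $(q(1-q))^{N_{\mathbf{i}}}$) precisely the orthogonal components of $\eta$ in the decomposition \eqref{eq:Hil_Dec2}, so by Parseval their normalized squares sum over \emph{all} $\mathbf{i}$ to $\mathrm{Var}(\eta)\le\delta^2$, with no $n^m$ factor. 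As written, however, the concluding limit does not follow.
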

\begin{Remark}
 Theorem \ref{th:sec4} shows that SLCE's distribute most of the variance of $\tilde{E}_k$ on $ \tilde{E}_{k, \mathbf{i}}$'s  which operate on large blocks. Hence, the encoders generated using such ensembles have high expected effective-lengths. This along with Theorem \ref{th:sec3} gives an upper bound on the correlation preserving properties of SLCE's.
\end{Remark}
\begin{proof}
The following proposition shows that the probability $P_{\mathscr{S}}(\sum_{{\mathbf{i}}:N_{\mathbf{i}}\leq m, \mathbf{i}\neq \mathbf{i}_k} \mathbf{P}_{k,{\mathbf{i}}}\geq \gamma)$ is independent of the index $k$. This is due to property 3) in the Definition \ref{def:SLCE} of SLCE's.
 \begin{Proposition}\label{prop:5.1}
 $P(\sum_{{\mathbf{i}}:N_{\mathbf{i}}\leq m, \mathbf{i}\neq 00\cdots01} \mathbf{P}_{k,{\mathbf{i}}}\geq \gamma)$ is constant as a function of $k$. 
\end{Proposition}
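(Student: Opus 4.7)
The plan is to exploit the permutation symmetry of the ensemble (condition 3 in Definition \ref{def:SLCE}) together with the permutation invariance of the i.i.d.~source distribution. Fix two indices $k_1, k_2 \in [1,n]$ and pick any permutation $\pi \in S_n$ with $\pi(k_1)=k_2$. I will show that the sum $\sum_{{\mathbf{i}}:N_{\mathbf{i}}\leq m,\mathbf{i}\neq \mathbf{i}_{k_1}} \mathbf{P}_{k_1,{\mathbf{i}}}$ evaluated at the permuted encoder $\underline{E}_\pi$ equals the sum $\sum_{{\mathbf{j}}:N_{\mathbf{j}}\leq m,\mathbf{j}\neq \mathbf{i}_{k_2}} \mathbf{P}_{k_2,{\mathbf{j}}}$ evaluated at $\underline{E}$. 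Since property 3) guarantees that $\underline{E}$ and $\underline{E}_\pi$ are identically distributed under $P_{\mathscr{S}}$, the distribution of the random variable in the Proposition is the same for $k_1$ and $k_2$, which proves constancy in $k$.

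To carry out the identification I first establish how $\pi$ acts on the decomposition of Proposition \ref{Lem:unique}. Let $\tau_\pi:\{0,1\}^n\to\{0,1\}^n$ denote the bit-coordinate permutation induced by $\pi$, so $\tau_\pi$ rearranges the entries of $\mathbf{i}$ according to $\pi$. Clearly $N_{\tau_\pi(\mathbf{i})}=N_{\mathbf{i}}$, and $\tau_\pi$ maps the standard basis vector $\mathbf{i}_{k_1}$ to $\mathbf{i}_{k_2}$. By the definition $(\underline{E}_\pi)_{k_1}(X^n)=E_{k_2}(\pi(X^n))$, and because the tensor decomposition
\begin{equation*}
\mathcal{H}_{X,n}=\bigoplus_{\mathbf{i}\in\{0,1\}^n}\mathcal{G}_{i_1}\otimes\cdots\otimes\mathcal{G}_{i_n}
\end{equation*}
is equivariant under coordinate permutation of $X^n$, substituting $\pi(X^n)$ for $X^n$ takes the component $\tilde{E}_{k_2,\mathbf{j}}$ into the subspace indexed by $\tau_\pi^{-1}(\mathbf{j})$. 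Invoking the uniqueness part of Proposition \ref{Lem:unique}, the $\mathbf{i}$-component of $(\tilde E_\pi)_{k_1}$ is therefore $\tilde{E}_{k_2,\tau_\pi(\mathbf{i})}(\pi(X^n))$. Since $X^n$ is i.i.d., the random vector $\pi(X^n)$ has the same law as $X^n$, hence the variances match:
\begin{equation*}
\mathbf{P}_{k_1,\mathbf{i}}(\underline{E}_\pi)=\mathbf{P}_{k_2,\tau_\pi(\mathbf{i})}(\underline{E}).
\end{equation*}

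Applying this identity and then changing variables $\mathbf{j}=\tau_\pi(\mathbf{i})$ in the sum,
\begin{equation*}
\sum_{\substack{\mathbf{i}:\,N_{\mathbf{i}}\leq m\\ \mathbf{i}\neq \mathbf{i}_{k_1}}}\mathbf{P}_{k_1,\mathbf{i}}(\underline{E}_\pi)
=\sum_{\substack{\mathbf{j}:\,N_{\mathbf{j}}\leq m\\ \mathbf{j}\neq \tau_\pi(\mathbf{i}_{k_1})}}\mathbf{P}_{k_2,\mathbf{j}}(\underline{E})
=\sum_{\substack{\mathbf{j}:\,N_{\mathbf{j}}\leq m\\ \mathbf{j}\neq \mathbf{i}_{k_2}}}\mathbf{P}_{k_2,\mathbf{j}}(\underline{E}),
\end{equation*}
where the two identities use $N_{\tau_\pi(\mathbf{i})}=N_{\mathbf{i}}$ and $\tau_\pi(\mathbf{i}_{k_1})=\mathbf{i}_{k_2}$. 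Combining with the distributional equality $\underline{E}\stackrel{d}{=}\underline{E}_\pi$ from property 3) yields the claimed equality of probabilities.

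The only delicate step is the second paragraph, where I translate the action of the input-coordinate permutation $\pi$ into a relabeling $\tau_\pi$ of the indices of the Hilbert-space decomposition. The argument relies on the fact that $\pi$ sends $\mathcal{G}_{i_1}\otimes\cdots\otimes\mathcal{G}_{i_n}$ to $\mathcal{G}_{i_{\pi^{-1}(1)}}\otimes\cdots\otimes\mathcal{G}_{i_{\pi^{-1}(n)}}$ (an immediate consequence of how tensor products behave under coordinate permutations) and on the uniqueness statement in Proposition \ref{Lem:unique}; everything else is bookkeeping and the already-used i.i.d.~invariance of $X^n$.
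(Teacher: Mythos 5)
Your proof is correct and follows essentially the same route as the paper's: both rest on property 3) of Definition \ref{def:SLCE} to identify the law of $\underline{E}$ with that of $\underline{E}_\pi$, and on the observation that a coordinate permutation of the input relabels the components of the orthogonal decomposition by the induced bit-permutation $\tau_\pi$ (which preserves Hamming weight and sends $\mathbf{i}_{k_1}$ to $\mathbf{i}_{k_2}$) while leaving each variance unchanged because $X^n$ is i.i.d. The only cosmetic difference is that the paper uses the transposition $\pi_{k\to k'}$ and states the component relabeling without justification, whereas you allow an arbitrary $\pi$ with $\pi(k_1)=k_2$ and justify the relabeling via the uniqueness part of Proposition \ref{Lem:unique}, which is a slightly more complete write-up of the same argument.
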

\begin{proof}
 Please refer to the Appendix. 
\end{proof}

Using the previous proposition, it is enough to show the theorem holds for $k=1$. For ease of notation we drop the subscript k for the rest of the proof and denote $\mathbf{P}_{1,\mathbf{i}}$ by $\mathbf{P}_{\mathbf{i}}$. 
By the Markov inequality, we have the following:
\begin{align}
 P_{\mathscr{S}}(\sum_{{\mathbf{i}}:N_{\mathbf{i}}\leq m, \mathbf{i}\neq \mathbf{i}_1} \mathbf{P}_{{\mathbf{i}}}\geq \gamma)\leq \frac{\sum_{{\mathbf{i}}:N_{\mathbf{i}}\leq m, \mathbf{i}\neq \mathbf{i}_1}\mathbb{E}_{\mathscr{S}}(\mathbf{P}_{\mathbf{i}})}{\gamma}.
 \label{eq:Markov}
\end{align}
So, we need to show that $\sum_{{\mathbf{i}}:N_{\mathbf{i}}\leq m, \mathbf{i}\neq \mathbf{i}_1}\mathbb{E}_{\mathscr{S}}(\mathbf{P}_{\mathbf{i}})$ goes to 0 for all fixed $m$. We first prove the following claim.

\begin{Claim}
Fix $\mathbf{i}$, the following holds:
\begin{align*}
 \mathbb{E}_{\tilde{E}, X_{\mathbf{i}}}(\mathbb{E}^2_{X^n|X_{\mathbf{i}}}(\tilde{E}|X_{\mathbf{i}}))= \mathbb{E}_{ X_{\mathbf{i}}}(\mathbb{E}^2_{\tilde{E}, X^n|X_{\mathbf{i}}}(\tilde{E}|X_{\mathbf{i}}))+O(e^{-n\delta_X}).
 \end{align*}

\label{claim:expo}
\end{Claim}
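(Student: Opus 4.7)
The plan is to rewrite the square as the expectation of a product over an independent copy of the source and then reduce the claim to the approximate pairwise independence of codewords, which is exactly what property (1) of the SLCE definition provides. Specifically, introduce $\tilde{X}^n$ to be conditionally independent of $X^n$ given $X_{\mathbf{i}}$, so that $\tilde{X}^n$ and $X^n$ agree on the coordinates indexed by $\mathbf{i}$ and are drawn independently on the remaining coordinates. Then
\begin{equation*}
\mathbb{E}^2_{X^n|X_{\mathbf{i}}}(\tilde{E}|X_{\mathbf{i}}) = \mathbb{E}_{X^n,\tilde{X}^n|X_{\mathbf{i}}}\!\left[\tilde{E}(X^n)\tilde{E}(\tilde{X}^n)\,\big|\,X_{\mathbf{i}}\right],
\end{equation*}
and applying the outer expectation and Fubini yields
\begin{equation*}
\mathbb{E}_{\tilde{E},X_{\mathbf{i}}}\!\left[\mathbb{E}^2_{X^n|X_{\mathbf{i}}}(\tilde{E}|X_{\mathbf{i}})\right]=\mathbb{E}_{X_{\mathbf{i}}}\mathbb{E}_{X^n,\tilde{X}^n|X_{\mathbf{i}}}\mathbb{E}_{\tilde{E}}\!\left[\tilde{E}(X^n)\tilde{E}(\tilde{X}^n)\right].
\end{equation*}
The right-hand side of the claim admits the analogous expansion with $\mathbb{E}_{\tilde{E}}[\tilde{E}(X^n)\tilde{E}(\tilde{X}^n)]$ replaced by $\mathbb{E}_{\tilde{E}}[\tilde{E}(X^n)]\,\mathbb{E}_{\tilde{E}}[\tilde{E}(\tilde{X}^n)]$, so the claim reduces to showing that the ensemble-covariance of $\tilde{E}(X^n)$ and $\tilde{E}(\tilde{X}^n)$ is $O(e^{-n\delta_X})$ on average under the joint law of $(X^n,\tilde{X}^n)$.

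Next, for each fixed $X^n$ split the inner average over $\tilde{X}^n$ according to whether $\tilde{X}^n\notin B_n(X^n)$ or $\tilde{X}^n\in B_n(X^n)$. On the good event, property (1) of Definition \ref{def:SLCE} gives $P(\underline{E}(X^n),\underline{E}(\tilde{X}^n))=(1\pm 2^{-n\delta_X})P(\underline{E}(X^n))P(\underline{E}(\tilde{X}^n))$, and because the real-valued encoder component $\tilde{E}$ is bounded in $[-1,1]$ (from its construction in Section \ref{sec:eff}), this multiplicative deviation translates into
\begin{equation*}
\left|\mathbb{E}_{\tilde{E}}[\tilde{E}(X^n)\tilde{E}(\tilde{X}^n)]-\mathbb{E}_{\tilde{E}}[\tilde{E}(X^n)]\,\mathbb{E}_{\tilde{E}}[\tilde{E}(\tilde{X}^n)]\right|\le 2^{-n\delta_X}.
\end{equation*}
On the bad event $\tilde{X}^n\in B_n(X^n)$, both terms are bounded by $1$ in absolute value, so its total contribution is controlled entirely by the probability of that event under the joint distribution of $(X^n,\tilde{X}^n)$.

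The main obstacle is the final step: showing that the expected probability of $\tilde{X}^n\in B_n(X^n)$ under the conditional joint law (in which $X^n$ and $\tilde{X}^n$ share the components $X_{\mathbf{i}}$) is still $O(2^{-n\delta_X})$. Property (1) provides this bound only for the unconditional measure $P_{X^n}(X^n\in B_n(x^n))$, so one has to leverage the permutation symmetry of property (3) together with the type-based structure of $B_n(\cdot)$ induced by typicality-style encoding to reduce the conditional statement to the unconditional one. Once that is in place, combining the pointwise good-event estimate with the bad-event probability bound and integrating over $X_{\mathbf{i}}$, $X^n$, and $\tilde{X}^n$ delivers the claimed $O(e^{-n\delta_X})$ error.
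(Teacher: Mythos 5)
Your proposal follows essentially the same route as the paper's proof: expand the conditional square as an expectation over an independent copy $\tilde{X}^n$ of $X^n$ sharing the coordinates $X_{\mathbf{i}}$, split according to whether $\tilde{X}^n\in B_n(X^n)$, apply the approximate pairwise independence of property (1) together with the boundedness of $\tilde{E}$ on the good event, and absorb the bad event into its probability. The one obstacle you flag --- that property (1) bounds only the unconditional probability $P_{X^n}(X^n\in B_n(x^n))$ while the argument needs the conditional probability $P(\tilde{X}^n\in B_n(X^n)\mid \tilde{X}_{\mathbf{i}}=X_{\mathbf{i}})$ --- is a genuine subtlety that the paper's own proof silently asserts at step (b) without justification, so your treatment is, if anything, more candid about where the remaining work lies.
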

\begin{proof}
 Please refer to the Appendix.
\end{proof}
 Define $\bar{E}_{\mathbf{i}}= \mathbb{E}_{\tilde{E}}(\tilde{E}_{\mathbf{i}})= \mathbb{E}_{\tilde{E}|X_{\mathbf{i}}}(\tilde{E}|X_{\mathbf{i}})-\sum_{\mathbf{j}<\mathbf{i}}\bar{E}_{\mathbf{j}}$, and also define $\bar{P}_{\mathbf{i}}\triangleq Var(\bar{E}_{\mathbf{i}}) $. Using the above claim we have:
\begin{align}
 P_{\mathscr{S}}(\sum_{{\mathbf{i}}:N_{\mathbf{i}}\leq m, \mathbf{i}\neq \mathbf{i}_1} \mathbf{P}_{{\mathbf{i}}}\geq \gamma)\leq \frac{\sum_{{\mathbf{i}}:N_{\mathbf{i}}\leq m, \mathbf{i}\neq \mathbf{i}_1}\mathbb{E}_{\mathscr{S}}(\mathbf{P}_{\mathbf{i}})}{\gamma}
 \leq \frac{2^mO(e^{-n\delta_X})+     \sum_{{\mathbf{i}}:N_{\mathbf{i}}\leq m}\mathbb{E}_{\mathscr{S}}(\bar{P}_{\mathbf{i}})-\mathbb{E}_{\mathscr{S}} (\bar{P}_{\mathbf{i}_1})}{\gamma}.
 \label{eq:follow}
\end{align}

Using the arguments from the proof of Lemma \ref{prop:belong2}, we can see that the properties stated in that Proposition hold for $\bar{E}_{\mathbf{i}}$ as well.  By the same results as in Lemma \ref{Lem:power} and Corollary \ref{Cor:power}, we have that $\sum_{\mathbf{i}\in\{0,1\}^n}\bar{P}_{\mathbf{i}}=\bar{P}_{\mathbf{\underline{1}}}$. Following the calculations in \eqref{eq:follow}:
\begin{align*}
 P_{\mathscr{S}}(\sum_{{\mathbf{i}}:N_{\mathbf{i}}\leq m, \mathbf{i}\neq \mathbf{i}_1} \mathbf{P}_{{\mathbf{i}}}\geq \gamma)&
 \leq \frac{2^mO(e^{-n\delta_X})+     \sum_{{\mathbf{i}}:N_{\mathbf{i}}\leq m}\mathbb{E}_{\mathscr{S}}(\bar{P}_{\mathbf{i}})-\mathbb{E}_{\mathscr{S}} (\bar{P}_{\mathbf{i}_1})}{\gamma}\\
& 
 \leq \frac{2^mO(e^{-n\delta_X})+     \sum_{{\mathbf{i}}\in \{0,1\}^n}\mathbb{E}_{\mathscr{S}}(\bar{P}_{\mathbf{i}}) -\mathbb{E}_{\mathscr{S}} (\bar{P}_{\mathbf{i}_1})}{\gamma} \\
 &=
 \frac{2^mO(e^{-n\delta_X})+    \mathbb{E}_{\mathscr{S}}( \sum_{{\mathbf{i}}\in \{0,1\}^n}\bar{P}_{\mathbf{i}}) -\mathbb{E}_{\mathscr{S}} (\bar{P}_{\mathbf{i}_1})}{\gamma} \\
 &= \frac{2^mO(e^{-n\delta_X})+    \mathbb{E}_{X^n}\left(\mathbb{E}^2_{\tilde{E}|X^n}(\tilde{E}(X^n)|X^n)\right) -\mathbb{E}_{\mathscr{S}} (\bar{P}_{\mathbf{i}_1})}{\gamma}\\
 &\leq \frac{2^mO(e^{-n\delta_X})+    \mathbb{E}_{\mathscr{S}}(\bar{P}_{\mathbf{i}_1})+O(\epsilon) -\mathbb{E}_{\mathscr{S}} (\bar{P}_{\mathbf{i}_1})}{\gamma}\\
 &= \frac{2^mO(e^{-n\delta_X})+O(\epsilon) }{\gamma},
\end{align*}
where in the last inequality we have used the second property in Definition \ref{def:SLCE}. The last line goes to 0 as $n\to\infty$. This completes the proof.
\end{proof}

Theorem \ref{th:sec4} shows that SLCE's distribute most of the variance of $\tilde{E}_k$ on $ \tilde{E}_{k, \mathbf{i}}$'s  which operate on asymptotically large blocks of the input, and on the single-letter component $\tilde{E}_{k, \mathbf{i}_k}$. Hence, the encoders generated using such ensembles have high expected variance for decomposition elements with large effective-lengths. This along with Theorem \ref{th:sec3} gives an upper bound on the correlation preserving properties of SLCE's. The following theorem states this upper bound:

\begin{Theorem}
Let $(X,Y)$ be a pair of DMS's, with $P(X=Y)=1-\epsilon$. Also, assume that the pair of BBE's $\underline{E},\underline{F}$ are produced using SLCE's. Define $E\triangleq E_{1}$, and $F\triangleq F_1$.
Then,
\begin{align*}
 \forall{\delta>0}:P_{\mathscr{S}}\left(P_{X^n,Y^n}\left(E(X^n)\neq F(Y^n)\right)> 2\mathbf{P}^{\frac{1}{2}}\mathbf{Q}^{\frac{1}{2}}- 2(1-2\epsilon)\mathbf{P}_{\mathbf{i}_1}^{\frac{1}{2}}\mathbf{Q}_{\mathbf{i}_1}^{\frac{1}{2}}-\delta\right)\to 1,
\end{align*}
as $n\to \infty$. Where $\mathbf{P}_{\mathbf{i}}\triangleq Var(\tilde{E}_{\mathbf{i}})$, $\mathbf{Q}_{\mathbf{i}}\triangleq Var(\tilde{F}_{\mathbf{i}})$, $\mathbf{P}\triangleq Var(\tilde{E})$, and $ \mathbf{Q}\triangleq Var(\tilde{F})$.
    \label{th:main}\end{Theorem}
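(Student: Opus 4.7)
My plan is to combine the deterministic lower bound of Theorem~\ref{th:sec3} with the concentration result of Theorem~\ref{th:sec4}, and then argue that under any SLCE every term of $\sum_{\mathbf{i}} C_{\mathbf{i}} \mathbf{P}_{\mathbf{i}}^{1/2}\mathbf{Q}_{\mathbf{i}}^{1/2}$ other than the single-letter term $\mathbf{i}=\mathbf{i}_1$ vanishes with probability approaching one. Concretely, I would first apply Theorem~\ref{th:sec3} to the Boolean functions $E=E_1$ and $F=F_1$ and use $\sum_{\mathbf{i}}\mathbf{P}_{\mathbf{i}}=\mathbf{P}$ and $\sum_{\mathbf{i}}\mathbf{Q}_{\mathbf{i}}=\mathbf{Q}$ from Corollary~\ref{Cor:power} to obtain the pointwise bound
\[
P_{X^n,Y^n}(E(X^n) \neq F(Y^n)) \;\geq\; 2\sqrt{\mathbf{P}\mathbf{Q}} - 2(1-2\epsilon)\sqrt{\mathbf{P}_{\mathbf{i}_1}\mathbf{Q}_{\mathbf{i}_1}} - 2R_n,
\]
where $R_n \triangleq \sum_{\mathbf{i}\neq \mathbf{i}_1} C_{\mathbf{i}}\,\mathbf{P}_{\mathbf{i}}^{1/2}\mathbf{Q}_{\mathbf{i}}^{1/2}$ and $C_{\mathbf{i}_1}=1-2\epsilon$ has been separated out. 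The theorem then reduces to proving that $R_n$ converges to zero in $P_{\mathscr{S}}$-probability, at which point picking $\delta$ in place of $2R_n$ on an event of probability tending to $1$ completes the argument.

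The main step is a truncation argument. Given a threshold $m\in\mathbb{N}$, I would split $R_n = R_n^{\leq m} + R_n^{>m}$ according to whether $N_{\mathbf{i}}\leq m$ or $N_{\mathbf{i}}>m$. For the high-weight tail, I would use $C_{\mathbf{i}}=(1-2\epsilon)^{N_{\mathbf{i}}}\leq(1-2\epsilon)^m$ together with Cauchy--Schwarz and the uniform bound $\mathbf{P},\mathbf{Q}\leq \tfrac14$ (the variance of any $\{0,1\}$-valued random variable) to get the \emph{deterministic} estimate
\[
R_n^{>m} \;\leq\; (1-2\epsilon)^m \sum_{\mathbf{i}} \sqrt{\mathbf{P}_{\mathbf{i}}\mathbf{Q}_{\mathbf{i}}} \;\leq\; \tfrac{1}{4}(1-2\epsilon)^m,
\]
which for any $\delta>0$ can be made smaller than $\delta/4$ by fixing $m$ large but independent of $n$. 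For the low-weight part, Cauchy--Schwarz gives
\[
R_n^{\leq m} \;\leq\; \Bigl(\sum_{\substack{\mathbf{i}\neq \mathbf{i}_1 \\ N_{\mathbf{i}}\leq m}} \mathbf{P}_{\mathbf{i}}\Bigr)^{1/2}\Bigl(\sum_{\substack{\mathbf{i}\neq \mathbf{i}_1 \\ N_{\mathbf{i}}\leq m}} \mathbf{Q}_{\mathbf{i}}\Bigr)^{1/2},
\]
and applying Theorem~\ref{th:sec4} with $k=1$ to the two marginals of the joint SLCE sends each factor to zero in probability; a union bound then produces $R_n^{\leq m}<\delta/4$ with $P_{\mathscr{S}}$-probability approaching one. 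Combining the two parts gives $R_n < \delta/2$ with high probability.

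I expect the main obstacle to be one of ordering: the truncation level $m$ must be chosen \emph{before} letting $n\to\infty$, because Theorem~\ref{th:sec4} is an ``in probability as $n\to\infty$'' statement valid only for each fixed $m$, while the tail estimate $R_n^{>m}\leq \tfrac14(1-2\epsilon)^m$ is uniform in $n$. A secondary point to verify is that Theorem~\ref{th:sec4} applies to $\underline{F}$ as well as to $\underline{E}$; this is legitimate because the three conditions of Definition~\ref{def:SLCE} are imposed on each encoder individually, so the marginals of the joint ensemble $P_{\mathscr{S}}(\underline{e},\underline{f})$ are themselves SLCEs to which Theorem~\ref{th:sec4} applies directly.
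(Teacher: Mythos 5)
Your proposal is correct and follows essentially the same route as the paper's proof: a pointwise application of Theorem~\ref{th:sec3} combined with Corollary~\ref{Cor:power}, then a truncation of $\sum_{\mathbf{i}\neq\mathbf{i}_1}C_{\mathbf{i}}\mathbf{P}_{\mathbf{i}}^{1/2}\mathbf{Q}_{\mathbf{i}}^{1/2}$ at a weight threshold $m$ fixed before $n\to\infty$, with the high-weight tail killed deterministically by $C_{\mathbf{i}}\leq(1-2\epsilon)^{m}$ plus Cauchy--Schwarz and the low-weight part killed in $P_{\mathscr{S}}$-probability by Theorem~\ref{th:sec4} applied to each marginal. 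The ordering caveat you flag (fix $m$ and $\gamma$ first, then let $n\to\infty$) is exactly how the paper's argument is organized.
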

\begin{Remark}
 Note that in this theorem we consider a pair of BBEs produced using SLCEs. The bound is presented as function of the dependency spectra of the two BBEs. The two SLCEs can have arbitrary correlation. As an example, $E$ and $F$ can be taken to be either independent or exactly equal to each other. 
\end{Remark}
\begin{proof}
  From Theorem \ref{th:sec3}, we have:
  \begin{align*}
     \mathbf{P}^{\frac{1}{2}}\mathbf{Q}^{\frac{1}{2}} -2\sum_{\mathbf{i}}C_{\mathbf{i}}\mathbf{P}^{\frac{1}{2}}_{\mathbf{i}}\mathbf{Q}^{\frac{1}{2}}_{\mathbf{i}}\leq  P(E(X^n)\neq F(Y^n)).
\end{align*}
From Theorem \ref{th:sec4} we have:
\begin{align}
& \forall  m\in \mathbb{N}, \gamma>0, P_{\mathscr{S}}(\sum_{{\mathbf{i}}:N_{\mathbf{i}}\leq m, \mathbf{i}\neq \mathbf{i}_1} \mathbf{P}_{{\mathbf{i}}}< \gamma)\to 1, \qquad P_{\mathscr{S}}(\sum_{{\mathbf{i}}:N_{\mathbf{i}}\leq m, \mathbf{i}\neq \mathbf{i}_1} \mathbf{Q}_{{\mathbf{i}}}< \gamma)\to 1.
\label{eq:bound}
\end{align}
Note that:
\begin{align}
 \sum_{{\mathbf{i}}:N_{\mathbf{i}}\leq m, \mathbf{i}\neq \mathbf{i}_1} \mathbf{P}_{{\mathbf{i}}}< \gamma ,
 \sum_{{\mathbf{i}}:N_{\mathbf{i}}\leq m, \mathbf{i}\neq \mathbf{i}_1} \mathbf{Q}_{{\mathbf{i}}}< \gamma
   \Rightarrow
 \sum_{\mathbf{i}}C_{\mathbf{i}}\mathbf{P}^{\frac{1}{2}}_{\mathbf{i}}\mathbf{Q}^{\frac{1}{2}}_{\mathbf{i}} > (1-2\epsilon)(\mathbf{P}_{\mathbf{i}_1}+\gamma)^{\frac{1}{2}}(\mathbf{Q}_{\mathbf{i}_1}+\gamma)^{\frac{1}{2}}+ (1-2\epsilon)^m \mathbf{P}^{\frac{1}{2}}\mathbf{Q}^{\frac{1}{2}},
 \label{eq:bound2}
\end{align}
which converges to $(1-2\epsilon)\mathbf{P}_{\mathbf{i}_1}^{\frac{1}{2}}\mathbf{Q}_{\mathbf{i}_1}^{\frac{1}{2}}+(1-2\epsilon)^m\mathbf{P}^{\frac{1}{2}}\mathbf{Q}^{\frac{1}{2}}$ as $\gamma\to 0$. Also $C_{\mathbf{i}}$ is decreasing in $N_{\mathbf{i}}$ and goes to 0 as $N_{\mathbf{i}}\to \infty$. Choose $\gamma$ small enough and $m$ large enough such that $(1-2\epsilon)(\mathbf{P}_{\mathbf{i}_1}+\gamma)^{\frac{1}{2}}(\mathbf{Q}_{\mathbf{i}_1}+\gamma)^{\frac{1}{2}}+(1-2\epsilon)^m \mathbf{P}^{\frac{1}{2}}\mathbf{Q}^{\frac{1}{2}}-(1-2\epsilon)\mathbf{P}_{\mathbf{i}_1}^{\frac{1}{2}}\mathbf{Q}_{\mathbf{i}_1}^{\frac{1}{2}}< \delta $. Then Equations \eqref{eq:bound} and \eqref{eq:bound2} gives 
\[
P_{\mathscr{S}}\left(P_{X^n,Y^n}\left(E(X^n)\neq F(Y^n)\right)<2\mathbf{P}^{\frac{1}{2}}\mathbf{Q}^{\frac{1}{2}}- 2(1-2\epsilon)\mathbf{P}_{\mathbf{i}_1}^{\frac{1}{2}}\mathbf{Q}_{\mathbf{i}_1}^{\frac{1}{2}}-\delta\right)\to 0.\]
This is equivalent to the statement of the theorem.
\end{proof}
\begin{Remark}
 The previous theorem gives a bound on the correlation preserving properties on SLCE's. The theorem shows that in order to increase correlation in these schemes the encoder needs to put more variance on the element $\tilde{E}_{k,\mathbf{i}_k}, k\in [1,n]$. This would require more correlation between the input and output of the encoder, which itself would require more rate. As an example consider the extreme case where $Var(\tilde{E}_k)=Var(\tilde{E}_{k,\mathbf{i}_k})$, which requires ${E}_k(X^n)=X_k$. This means that in order to achieve maximum correlation, the encoder must use uncoded transmission. 
\end{Remark}
\begin{Remark}
 In the case when $X=Y$, there is common-information \cite{ComInf1} available at the encoders. If the encoders use the same encoding function $E$, their outputs would be equal. Whereas from theorem \ref{th:main}, for any non-zero $\epsilon$, the output correlation is bounded away from 0 (except when doing uncoded transmission). So, the correlation between the outputs of SLCE's is discontinuous as a function of $\epsilon$.
\end{Remark}

\section{Multi-terminal Communication Examples}\label{sec:ex}
In this section, we provide two examples of multi-terminal communication problems where SLCEs have suboptimal performance. We use the discontinuity mentioned in the previous section to show the sub-optimality of SLCEs. 

\paragraph{Transmission of Correlated Sources over the Interference Channel}
Consider the problem of transmission of correlated sources over the interference channel (CS-IC) described in \cite{ICcorr}. We examine the specific CS-IC setup shown in Figure \ref{fig:ICcorr}. We are restricting our attention to bandwidth expansion factor equal to one.
 Here, the sources $X$ and $Y$ are Bernoulli random variables with parameters $\alpha_X$ and $\alpha_Y$, and $Z$ is a $q$-ary random variable with distribution $P_Z$. $X$ and $Z$ are independent. $Y$ and $Z$ are also independent. Finally, $X$ and $Y$ are correlated, and $P(X\neq Y)=\epsilon$. The random variable $N_{\delta}$ is Bernoulli with parameter $\delta$. Decoder one reconstructs $X$ and decoder two reconstructs $Z$ losslessly.
 The first transmitter transmits the binary input $X_1$, and the second transmitter transmits the pair of inputs $(X_{21}, X_{22})$, where $X_{21}$ is $q$-ary and $X_{22}$ is binary. Receiver 1 receives $Y_1=X_1\oplus_2N_{\delta}$, and receiver 2 receives $Y_2$ which is given below:
\begin{align}\label{eq:y2}
 Y_2= 
 \begin{cases}
 X_{21}, \qquad \qquad &\text{if }X_{22}=X_1,\\
 e,     & otherwise.
\end{cases}
\end{align}
\begin{figure}[!t]
\centering
\includegraphics[height=1.2in, draft=false]{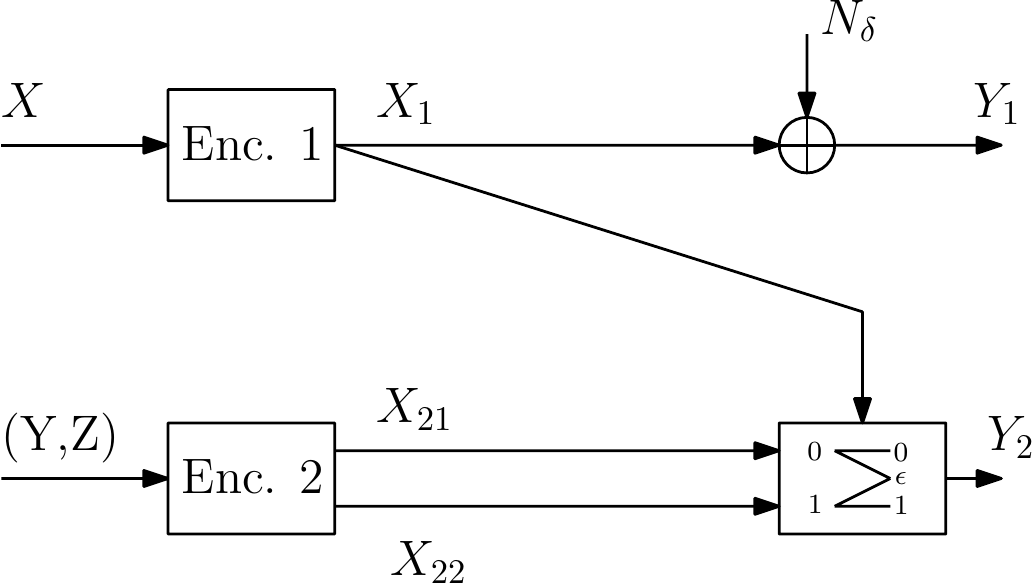}
 \caption{An CS-IC example where SLCEs are suboptiomal.}
\label{fig:ICcorr}
\end{figure}
So, the second channel outputs $X_{21}$ noiselessly if the second encoder `guesses' the first encoder's output correctly (i.e. $X_{22}=X_1$), otherwise an erasure is produced. The following proposition gives a set of sufficient conditions for the transmission of correlated sources over this interference channel:
\begin{Proposition}\label{prop:ICcorrach}
 The sources $X$ and $Z$ are transmissible if there exist $\epsilon, \gamma,d >0$, and $n\in \mathbb{N}$ such that:
 \begin{align*}
 & H(X)\leq (1-h_b(\delta))\left(1- \frac{h_b(\gamma+d)}{1-h_b(\delta)}\right)+O\left(\frac{1+ \sqrt{nV+k\mathcal{V}(d)}Q^{-1}(\gamma)}{\sqrt{n}}\right),\\
 &  H(Z)\leq \left((1-\epsilon)^k\log{q}\right)\left(1- \frac{h_b(\gamma+d)}{1-h_b(\delta)}\right),
\end{align*}
 where $h_b(\cdot)$ is the binary entropy function, $V=\delta(1-\delta)log_2{(\frac{1-\delta}{\delta})}$ is the channel dispersion, and $\mathcal{V}(d)$ is the rate-dispersion function as in \cite{KostinaSC}, and $Q(.)$ is the Gaussian complementary cumulative distribution function.
  
 For a fixed $n$, $\epsilon$ and $\gamma$, we denote the set of pairs $(H(X), H(Z))$ which satisfy the bounds by $S(n,\epsilon,\gamma)$.
 \end{Proposition}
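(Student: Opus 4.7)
The plan is to construct a joint source-channel coding scheme with a two-scale block structure: an inner scale at blocklength $k$ that enforces coordination between encoders~$1$ and~$2$, and an outer scale at blocklength $n$ that governs the overall error performance via finite-blocklength source and channel coding. Partition the $n$ channel uses into $n/k$ disjoint inner blocks of length $k$. Within each inner block, encoder~$1$ applies a deterministic joint source-channel map $\Phi:\{0,1\}^k\to\{0,1\}^k$ to $X^k$, producing $X_1^k=\Phi(X^k)$, and encoder~$2$ applies the \emph{same} map $\Phi$ to $Y^k$, producing $X_{22}^k=\Phi(Y^k)$. Because $\Phi$ is deterministic and shared, $X_1^k=X_{22}^k$ whenever $X^k=Y^k$, an event of probability $(1-\epsilon)^k$ by the memoryless property of $(X,Y)$. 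This is the crucial step that forces the coordination probability to be exactly $(1-\epsilon)^k$ and motivates the short effective-length of the inner encoder.

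To build $\Phi$, I will compose a Kostina-Verd\'u rate-distortion code for $X$ at Hamming distortion $d$ with a Polyanskiy-Poor-Verd\'u channel code for BSC$(\delta)$ with decoding error probability $\gamma$. The achievable rate for source $X$ over channel~$1$ is then $R_1\approx (1-h_b(\delta))-h_b(\gamma+d)$, where $h_b(\gamma+d)$ is a Fano-type rate loss absorbing the combined source-distortion event and channel-decoding-error event, and the correction $O\!\left((1+\sqrt{nV+k\mathcal{V}(d)}\,Q^{-1}(\gamma))/\sqrt{n}\right)$ appears because the BSC dispersion $V$ is paid across the outer blocklength $n$ while the source rate-dispersion $\mathcal{V}(d)$ is paid across the inner blocklength $k$. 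Dividing by the channel capacity $1-h_b(\delta)$ gives the first bound of the proposition.

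For channel~$2$, each inner block in which coordination succeeds delivers the $k$ symbols of $X_{21}^k$ without erasure, whereas failed blocks are erased entirely. Aggregated across the $n/k$ inner blocks, channel~$2$ then acts as a $q$-ary erasure channel with erasure probability $1-(1-\epsilon)^k$ and per-use capacity $(1-\epsilon)^k\log q$. Compressing $Z^n$ and channel-coding it over this effective erasure channel yields the claimed rate for $H(Z)$, with the same multiplicative factor $1-h_b(\gamma+d)/(1-h_b(\delta))$ arising because the fraction of channel~$1$'s capacity committed to source-$X$ transmission (and hence the fraction of inner blocks available for the coordination–erasure mechanism at its nominal rate) is exactly this ratio.

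The principal obstacle is correctly combining the two dispersions: the inner source-coding variance contributes $k\mathcal{V}(d)$ while the outer channel-coding variance contributes $nV$, so a Berry-Esseen argument applied to the sum of per-block information densities is needed to obtain the $\sqrt{nV+k\mathcal{V}(d)}\,Q^{-1}(\gamma)$ normal approximation without double-counting. A careful union bound over the coordination-failure event at the inner scale, the source-quantization event at the inner scale, and the channel-decoding event at the outer scale then yields the stated sufficient conditions, and $S(n,\epsilon,\gamma)$ is obtained as the projection onto the $(H(X),H(Z))$ plane of all pairs admissible under some feasible choice of $d$ and the implicit rate split.
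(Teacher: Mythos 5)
Your central idea --- both encoders applying the \emph{same} deterministic finite-blocklength joint source-channel map so that their channel inputs agree whenever their length-$k$ source blocks agree, which happens with probability $(1-\epsilon)^k$ and turns channel~2 into a $q$-ary erasure channel of per-use capacity $(1-\epsilon)^k\log q$ --- is exactly the mechanism the paper uses, and it is the part of the argument that matters for the later sub-optimality claim. But there are two concrete gaps in the construction. First, decoder~1 must recover $X$ \emph{losslessly}, whereas your inner map only delivers $X$ to within Hamming distortion $d+\gamma$; the ``Fano-type rate loss'' you invoke is an intuition, not a mechanism. The paper adds an explicit second phase: the source sequence is randomly binned at rate $h_b(d+\gamma)$ and the bin index is transmitted over the BSC with a capacity-achieving code, consuming a fraction $h_b(\gamma+d)/(1-h_b(\delta))$ of the channel uses. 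That phase is precisely where the multiplicative factor $1-h_b(\gamma+d)/(1-h_b(\delta))$ in \emph{both} bounds comes from, since encoder~2's coordination-based erasure channel is unavailable while encoder~1 is sending refinements. Second, your channel-use accounting does not close under the bandwidth-expansion-factor-one constraint: you partition \emph{all} $n$ channel uses into length-$k$ coordination blocks driven by a rate-one map $\Phi:\{0,1\}^k\to\{0,1\}^k$, which leaves no channel uses for the refinement. The paper resolves this by making the inner code a bandwidth-\emph{compressing} $(k,n)$ source-channel code with $k=n\bigl(1-h_b(d+\gamma)/(1-h_b(\delta))\bigr)^{-1}>n$, so that each block of $k$ source symbols occupies only $n$ channel uses in the coordinated phase and the remaining $k-n$ per block are exactly enough for the binning phase; your inequality ``$k<n$, $k\mid n$'' has the roles of the two blocklengths reversed.

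A smaller point: the term $\sqrt{nV+k\mathcal{V}(d)}\,Q^{-1}(\gamma)$ does not need to be re-derived by a Berry--Esseen argument over sums of per-block information densities; it is the finite-blocklength joint source-channel dispersion of a \emph{single} $(k,n)$ code from the cited Kostina--Verd\'u result, which the paper simply invokes. Under your architecture of $n/k$ independent length-$k$ inner codes, the quantization and decoding failures accumulate across blocks and the normal approximation would not take the stated form, so if you keep your block structure the dispersion analysis has to be redone and will generally not reproduce the bound in the proposition.
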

 \begin{proof}
First we provide an outline of the coding strategy. Fix $n,m\in \mathbb{N}, d, \gamma\in \mathbb{R}$, where $n\ll m$. Let $k=n\left(1-\frac{h_b(d+\gamma)}{1-h_b(\delta)}\right)^{-1}$. The encoders send $km$ bits of the compressed input at each block of transmission.
The first encoder transmits its source in two steps. First, it uses a fixed blocklength source-channel code \cite{KostinaSC} with parameters $(k,n,d,\gamma)$ . The code maps $k$-length blocks of the source to $n$-length blocks of the channel input, and the average distortion resulting from the code is less than $d+\gamma$. In this step, the encoder transmits the source in $m$ blocks of length $k$. A total of $nm$ channel uses are needed (note that n<k). In the second step, the encoder uses a large blocklength code to correct the errors in the previous step. 
The code has rate close to $\frac{h_b(\gamma+d)}{1-h_b(\delta)}$, and its input length is equal to $km$. 

The second encoder only transmits messages in the first step of transmission. It uses the same fixed blocklength code as the first encoder and the source sequence $Y^k$ to estimate the outcome of the first encoder. It sends this estimate of the first encoder's output on $X^n_{22}$. Since $P(X^k=Y^k)=(1-\epsilon)^k$, we conclude that  $X^k_1$ and $X^k_{22}$ are equal at least with probability $(1-\epsilon)^k$. The encoder sends the source $Z$ using $X_{21}$ over the resulting $q$-ary erasure channel which has probability of erasure at most $(1-\epsilon)^k$. The following provides a detailed descriptions of the coding strategy:
\\\textbf{Codebook Generation:} Fix $n, \epsilon, d$. Let $k=n\left(1-\frac{h_b(d+\gamma)}{1-h_b(\delta)}\right)^{-1}$. 
Let $C_k$ be the optimal source-channel code with parameters $(k,n,d,\gamma)$ for the point-to-point transmission of a binary source over the binary symmetric channel, as described in \cite{KostinaSC}. The code transmits $k$-length blocks of the source using $n$-length blocks of the channel input; and guarantees that the resulting distortion at each block is less than $d$ with probability $(1-\epsilon)$ (i.e $P(d_H(X^n,\hat{X}^n)>d)\leq \gamma$, where $\hat{X}$ is the reconstruction of the binary source $X$ at the decoder). In \cite{KostinaSC}, it is shown that the parameters of the code satisfy:
\begin{align*}
 n(1-h_b(\delta))-&k\left(H(X)-h_b(\alpha_X\ast d)\right)
 \\&= \sqrt{nV+k\mathcal{V}(d)}Q^{-1}(\gamma)+O({log(n)}).
\end{align*}
Since $P(d_H(X^n,\hat{X}^n)>d)\leq \gamma$, it is straightforward to show that the average distortion is less than or equal to $\gamma+d$. Also, construct a family of good channel codes $C'_m, m\in \mathbb{N}$ for the binary symmetric channel with rate $R_m=1-h_b(\delta)-\lambda_m$, where $\lambda_m\to 0$ as $m\to \infty$. Next, construct a family of good channel codes $C''_m, m\in \mathbb{N}$ for the $q$-ary erasure channel with rate $R_m=(1-\epsilon)^klog(q)-\lambda_m$. Finally, randomly and uniformly bin the space of binary vectors of length $kn$ with rate $R'=h_n(d+\gamma)$. More precisely, generate a binning function $B:\{0,1\}^{km}\to \{0,1\}^{kmR'}$, by mapping any vector $\mathbf{i}$ to a value chosen uniformly from $\{0,1\}^{kmR'}$.
\\\textbf{Encoding:} Fix $m$. At each block the encoders transmit $km$ symbols of the source input. Let the source sequences be denoted by $X(1:k,1:m), Y(1:k,1:m), Z(1:k,1:m)$, where we have broken the source vectors into $m$ blocks of length $k$. In this notation $X(i,j)$ is the $i$th element of the $j$th block, and $X(1:k,j), j\in [1,m]$ is the $j$th block.
\\\textbf{Step 1:} Encoder 1 uses the code $C_k$ to transmit each of the blocks $X(1:k,i), i\in [1,m]$ to the decoder. The second encoder finds the output of the code $C_k$ when $Y(1:k,i)$ is fed to the code, and transmits the output vector on $X_{22}(1:n,i)$. The encoder uses an interleaving method similar to the one in \cite{FinLen} to transmit $Z$. For the sequence $Z(1:k,1:m)$, it finds the output of $C''_{km}$ for this input and transmits it on $X_{21}(1:n,1:m)$.
\\\textbf{Step 2:} The first encoder transmits $B(X(1:k,1:m))$ to the decoder losslessly using $C'_{kmR'}$.
\\\textbf{Decoding:} In the first step, the first decoder reconstructs $X(1:k,1:m)$ with average distortion at most $\gamma+d$. In the second step, using the bin number $B(X(1:k,1:m))$ it can losslessly reconstruct the source, since $C'_{kmR'}$ is a good channel code. Decoder 2 also recovers $Z(1:k,1:m)$ losslessly using $Y_2(1:k,1:m)$ since $C''_{km}$ is a good channel code. 

The conditions for successful transmission is given as follows:
\begin{align*}
 n(1-h_b(\delta))-&k\left(H(X)-h_b(\alpha_X\ast d)\right)
 \\&\geq \sqrt{nV+k\mathcal{V}(d)}Q^{-1}(\gamma)+O({log(n)}),\\
 &n(1-\epsilon)^klog(q)\geq kH(Z).
\end{align*}
Simplifying these conditions by replacing $k=n\left(1-\frac{h_b(d+\gamma)}{1-h_b(\delta)}\right)^{-1}$ proves the proposition.
\end{proof}
 
 The bound provided in Proposition \ref{prop:ICcorrach} is not calculable without the exact characterization of the $O(\frac{log(n)}{n})$ term. However, we use this bound to prove the sub-optimality of SLCEs. First, we argue that the transmissible region is `continuous' as a function of $\epsilon$. Note that for $\epsilon=0$, sources with parameters $(H(X),H(Z))=(1-h_b(\delta), \log{q})$ are transmissible. The region in Proposition \ref{prop:ICcorrach} is continuous in the sense that as $\epsilon$ approaches 0, the pairs $(H(X),H(Z))$ in the neighborhood of $(1-h_b(\delta), \log{q})$ satisfy the bounds given in the proposition (i.e. the corresponding sources are transmissible).
\begin{Proposition}
For all $\lambda>0$, there exist $\epsilon_0, \gamma_0>0$, and $n_0\in \mathbb{N}$ such that: 
 \begin{align*}
 \forall \epsilon<\epsilon_0: (1-h_b(\delta)-\lambda, \log{q}-\lambda)\in S(n_0,\epsilon, \gamma_0).
\end{align*}
\end{Proposition}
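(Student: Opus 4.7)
The plan is to substitute $(H(X), H(Z)) = (1-h_b(\delta)-\lambda,\; \log q - \lambda)$ into the two inequalities that define $S(n_0, \epsilon, \gamma_0)$ in Proposition \ref{prop:ICcorrach} and to choose the parameters in a staged order so that both hold for some auxiliary distortion $d>0$. After simplification the first inequality reads $h_b(\gamma + d) \leq \lambda + O\!\left(\tfrac{1 + \sqrt{nV + k\mathcal{V}(d)}\,Q^{-1}(\gamma)}{\sqrt{n}}\right)$, and the second reads $(1-\epsilon)^k \geq (\log q - \lambda)\big/\bigl[\log q \cdot \bigl(1 - h_b(\gamma+d)/(1-h_b(\delta))\bigr)\bigr]$, with $k=n\bigl(1-h_b(d+\gamma)/(1-h_b(\delta))\bigr)^{-1}$.

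The staged choice of quantifiers is $d,\gamma_0 \rightarrow n_0 \rightarrow \epsilon_0$. First I would fix $d>0$ and $\gamma_0 \in (0, 1/2)$ small enough that $h_b(\gamma_0 + d) \leq \min\bigl\{\lambda/2,\; \lambda(1-h_b(\delta))/(2\log q)\bigr\}$. Because $\gamma_0 < 1/2$ the factor $Q^{-1}(\gamma_0)$ is positive, so the $O(\cdot)$ correction in the first bound only helps, and the first inequality is slack for all sufficiently large $n$; pick such an $n_0$. At this stage $k=k(n_0)=n_0\bigl(1-h_b(d+\gamma_0)/(1-h_b(\delta))\bigr)^{-1}$ is a fixed positive integer, so the right-hand side of the second inequality is a constant strictly less than one. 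I would then choose $\epsilon_0>0$ small enough that $(1-\epsilon_0)^{k(n_0)}$ exceeds that constant; monotonicity of $\epsilon\mapsto(1-\epsilon)^k$ then gives the second inequality for every $\epsilon<\epsilon_0$.

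The only delicate point is the interaction between $n$ and $\epsilon$ inside $(1-\epsilon)^k$: since $k$ scales linearly with $n$, if $\epsilon_0$ were chosen before $n_0$ this factor would collapse to zero and defeat the second bound. The staged order above sidesteps this obstruction by freezing $k$ at $k(n_0)$ before $\epsilon_0$ is selected. Beyond that, the proof is a routine continuity calculation on the bound of Proposition \ref{prop:ICcorrach}; no new technical ingredient is required.
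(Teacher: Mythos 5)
Your proposal is correct and matches the paper's approach: the paper's entire proof is the one line ``Follows directly from Proposition \ref{prop:ICcorrach},'' and your argument is exactly the direct-substitution verification that this line presupposes. The one substantive point you add --- freezing $n_0$ (hence $k(n_0)$) before choosing $\epsilon_0$ so that $(1-\epsilon)^{k(n_0)}$ can be driven to $1$ --- is precisely the quantifier ordering built into the statement of the proposition, so you have simply made explicit what the paper leaves implicit.
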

\begin{proof}
 Follows directly from Proposition \ref{prop:ICcorrach}.
\end{proof}
For an arbitrary encoding scheme operating on blocks of length $n$, let the encoding functions be as follows: $X_1^n=\underline{e}_1(X^n)$, and $(X_{21}^n, X_{22}^n)=(\underline{e}_{21}(Y^n,Z^n), \underline{e}_{22}(Y^n,Z^n))$. The following lemma gives an outer bound on $H(Z)$ as a function of the correlation between the outputs of $\underline{e}_1$ and $\underline{e}_{21}$.
\begin{Lemma}
 For a coding scheme with encoding functions $\underline{e}_1(X^n), \underline{e}_{21}(Y^n,Z^n), \underline{e}_{22}(Y^n,Z^n)$, the following holds:
 \begin{align}
 \label{eq:Hz}
 H(Z)\leq \frac{1}{n}\sum_{i=1}^n P(e_{1,i}(X^n)=e_{22,i}(Y^n,Z^n))+1.
\end{align}
\label{lem:corrIC}
\end{Lemma}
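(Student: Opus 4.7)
The plan is to bound $H(Z)$ via Fano's inequality applied to the reconstruction at decoder~$2$, and then bound the entropy of the received sequence $Y_2^n$ by treating the link from $X_{21}$ to $Y_2$ as a $q$-ary erasure channel whose per-letter erasure probability is precisely $P(e_{1,i}(X^n)\neq e_{22,i}(Y^n,Z^n))$, by the definition of $Y_2$ in \eqref{eq:y2}.

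First, since $Z$ is reconstructed losslessly from $Y_2^n$, Fano's inequality gives $H(Z^n\mid Y_2^n)\le n\epsilon_n$ with $\epsilon_n\to 0$, so
\[
nH(Z)=H(Z^n)\le I(Z^n;Y_2^n)+n\epsilon_n\le H(Y_2^n)+n\epsilon_n.
\]
Next, I would introduce, for each coordinate $i\in[1,n]$, the erasure-indicator random variable $E_i\triangleq \mathbbm{1}\{e_{1,i}(X^n)\neq e_{22,i}(Y^n,Z^n)\}$. From the channel law \eqref{eq:y2}, $Y_{2,i}=e$ exactly when $E_i=1$, and $Y_{2,i}=X_{21,i}\in\{1,\ldots,q\}$ when $E_i=0$. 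Hence $E_i$ is a deterministic function of $Y_{2,i}$, and the chain rule gives
\[
H(Y_{2,i})=H(Y_{2,i},E_i)=H(E_i)+H(Y_{2,i}\mid E_i)\le 1 + P(E_i=0)\log q,
\]
using $H(E_i)\le 1$, $H(Y_{2,i}\mid E_i=1)=0$, and $H(Y_{2,i}\mid E_i=0)\le\log q$ from the $q$-ary alphabet.

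Summing with the subadditivity of entropy, $H(Y_2^n)\le\sum_{i=1}^n H(Y_{2,i})\le n+\log q\sum_{i=1}^n P(e_{1,i}(X^n)=e_{22,i}(Y^n,Z^n))$, and combining with the Fano step yields (up to a vanishing $\epsilon_n$) the claimed bound \eqref{eq:Hz}, where the $+1$ comes from summing the $n$ binary erasure indicators and normalizing by $n$, and the sum term encodes the expected number of non-erased channel uses in which $X_{21}$ can carry information about $Z$.

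The only non-routine part is justifying the per-letter decomposition: one must be sure that $E_i$ is indeed determined by $Y_{2,i}$ alone (so that the chain rule is tight in the sense used above) and that, conditional on $E_i=0$, the conditional entropy of $Y_{2,i}$ is uniformly bounded by $\log q$ without any dependence on the past channel outputs. Both follow immediately from the memoryless per-symbol structure in \eqref{eq:y2}, so no additional machinery is needed, and the argument is essentially complete once this decomposition is made explicit.
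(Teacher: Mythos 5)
Your proof is correct and follows essentially the same route as the paper's: Fano's inequality followed by splitting off the per-coordinate erasure indicator $E_i=\mathbbm{1}\{Y_{2,i}=e\}$, bounding its entropy by $n$ and the non-erased contribution by $P(E_i=0)\log q$; whether one expands $I(Y_2^n;Z^n)$ as $I(E^n;Z^n)+I(Y_2^n;Z^n\mid E^n)$ (as the paper does) or passes through $H(Y_2^n)$ and single-letter subadditivity (as you do) is immaterial. Note that both arguments in fact yield $H(Z)\le \frac{\log q}{n}\sum_{i=1}^n P(e_{1,i}(X^n)=e_{22,i}(Y^n,Z^n))+1$, i.e.\ with the $\log q$ factor that appears in the paper's displayed chain of inequalities but is dropped in the lemma statement as printed.
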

\begin{proof}
 Since $Z^n$ is reconstructed losslessly at the decoder, by Fano's inequality the following holds:
 \begin{align*}
 &H(Z^n)\approx I(Y_2^n;Z^n)\stackrel{(a)}{=} I(E^n,Y_2^n;Z^n)= I(E^n;Z^n)+I(Y_2^n;Z^n|E^n)\\
 &\stackrel{(b)}\leq H(E^n)+\sum_{i=1}^n P(e_{1,i}(X^n)=e_{22,i}(Y^n,Z^n))\log{q}
 \\&\stackrel{(c)}{\leq} n+\sum_{i=1}^n P(e_{1,i}(X^n)=e_{22,i}(Y^n,Z^n))\log{q},
\end{align*}
where in (a) we have defined $E^n$ as the indicator function of the event that $Y_2=e$, in (b) we have used Equation \ref{eq:y2}
and in (c) we have used the fact that $E^n$ is binary.
 \end{proof}
Using Theorem \ref{th:main}, we show that if the encoding functions are generated using SLCEs, $P(e_{1,i}(X^n)=e_{22,i}(Y^n,Z^n))$ is discontinuous in $\epsilon$. The next proposition shows that SLCEs are sub-optimal:
\begin{Proposition}
 There exists $\lambda>0$, and $q\in \mathbb{N}$, such that sources with $(H(X),H(Z))=(1-h_b(\delta)-\lambda, \log{q}-\lambda)$ are not transmissible using SLCEs. 
\end{Proposition}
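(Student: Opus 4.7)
The plan is to proceed by contradiction, combining the outer bound of Lemma~\ref{lem:corrIC} with the SLCE correlation bound of Theorem~\ref{th:main}, and then taking $q$ large enough to force a contradiction. First I would fix $\epsilon>0$ small and suppose that, for every prescribed $\lambda>0$ and every $q\in\mathbb{N}$, there exist SLCE encoders $\underline{e}_1(X^n)$ and $(\underline{e}_{21},\underline{e}_{22})(Y^n,Z^n)$ achieving $(H(X),H(Z))=(1-h_b(\delta)-\lambda,\log q-\lambda)$. The proof of Lemma~\ref{lem:corrIC} actually establishes the sharper statement $nH(Z)\le n+\log q\cdot\sum_i P(e_{1,i}=e_{22,i})$, so the average per-position agreement probability
\[
p_n\triangleq\frac{1}{n}\sum_{i=1}^n P\bigl(e_{1,i}(X^n)=e_{22,i}(Y^n,Z^n)\bigr)
\]
satisfies $p_n\ge 1-(\lambda+1)/\log q$, which for large $q$ forces $p_n$ arbitrarily close to one.

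Next I would push this lower bound against Theorem~\ref{th:main}. By the permutation invariance built into Definition~\ref{def:SLCE} (Property~3, as already used in Proposition~\ref{prop:5.1}), the SLCE correlation bound of Theorem~\ref{th:main} applies to every pair $(e_{1,i},e_{22,i})$ with $\mathbf{i}_i$ replacing $\mathbf{i}_1$. Because $Z$ is independent of $(X,Y)$, a direct calculation via conditioning shows that the maximum single-letter correlation between $X$ and the compound source $(Y,Z)$ equals $\psi(X;Y)=1-2\epsilon$, so the non-binary-input extension of Theorem~\ref{th:sec3_non_bin} would yield, with ensemble probability tending to one,
\[
P(e_{1,i}=e_{22,i})\le 1-2\mathbf{P}_i^{1/2}\mathbf{Q}_i^{1/2}+2(1-2\epsilon)\mathbf{P}_{i,\mathbf{i}_i}^{1/2}\mathbf{Q}_{i,\mathbf{i}_i}^{1/2}+o(1),
\]
where $\mathbf{P}_i,\mathbf{Q}_i$ are the variances of $\tilde{e}_{1,i},\tilde{e}_{22,i}$ and $\mathbf{P}_{i,\mathbf{i}_i},\mathbf{Q}_{i,\mathbf{i}_i}$ are those of their single-letter components at position~$i$.

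The decisive step is to bound the right-hand side uniformly by $1-2c$ for some constant $c=c(\delta)>0$ independent of $q$. Theorem~\ref{th:sec4} implies that for SLCEs each variance $\mathbf{P}_{i,\mathbf{i}_i}$ asymptotically concentrates at either $0$ or $\mathbf{P}_i$. In the single-letter alternative $e_1$ becomes essentially a symbol-wise map, so the channel-one output $X_1\oplus N_\delta$ cannot losslessly convey $X$ through the BSC at any positive rate; the requirement $H(X)=1-h_b(\delta)-\lambda>0$ therefore rules this alternative out on almost all positions, and symmetrically for $\mathbf{Q}_{i,\mathbf{i}_i}$. At the same time, approaching BSC capacity with an SLCE forces, via Property~2 of Definition~\ref{def:SLCE}, the per-position output marginal of $e_{1,i}(X^n)$ to be close to uniform, so $\mathbf{P}_i$ (and, by an analogous argument, $\mathbf{Q}_i$) is bounded below by a positive constant on a $(1-o(1))$-fraction of indices. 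Assembling these two facts gives $p_n\le 1-2c+o(1)$, and choosing $q$ so that $\log q>(\lambda+1)/(2c)$ contradicts $p_n\ge 1-(\lambda+1)/\log q$, completing the argument.

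The hard part will be making the coded-versus-uncoded dichotomy of the previous paragraph quantitative: I need to convert the in-probability asymptotic of Theorem~\ref{th:sec4} into a uniform vanishing of $\mathbf{P}_{i,\mathbf{i}_i}$ on almost all positions, while simultaneously producing a quantitative lower bound on $\mathbf{P}_i$ driven purely by the capacity-approaching constraint. This is precisely the tension between large-blocklength channel coding and single-letter correlation preservation that the preceding sections have set up, and packaging it into a constant $c(\delta)>0$ that survives the final choice of $q$ is where the real technical work sits; the rest of the proof is a routine comparison of the two bounds on $p_n$.
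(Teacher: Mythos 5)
Your proposal is correct and follows essentially the same route as the paper's own proof: combine the Fano-type bound of Lemma~\ref{lem:corrIC} (forcing the average agreement probability toward one as $q$ grows) with Theorem~\ref{th:main} to conclude that the single-letter variance components must dominate, i.e.\ the first encoder must be essentially uncoded, contradicting lossless reconstruction over the BSC. If anything, your treatment is more quantitative than the paper's, which asserts the ``$\mathbf{P}_{j,\mathbf{i}_j}\approx 1$ requires uncoded transmission'' step without the uniform constant $c(\delta)$ you rightly identify as the remaining technical work.
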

\begin{proof}
 Let $X_1^n=\underline{E}_{1}(X^n)$, and $X_{22^n}=\underline{E}_{22,z^n}(Y^n), z^n\in \{0,1\}^n$ be the encoding functions used in the two encoders to generate $X_1$ and $X_{22}$. If $H(Z)\approx \log(q)$, from \eqref{eq:Hz}, we must have $P(E_{1,j}(X^n)=E_{22,z^n,j}(Y^n))\approx 1 $ for almost all of the indices $j\in[1,n]$. From Theorem \ref{th:main}, this requires $\mathbf{P}_{j,\mathbf{i}_{j}}\approx1$, which requires uncoded transmission (i.e. $X_1^n=\underline{E}(X^n)\approx X^n$). However, uncoded transmission contradicts the lossless reconstruction of the source at the first decoder. 
  \end{proof}
The proof is not restricted to any particular scheme, rather it shows that any SLCE would have sub-optimal performance.

\paragraph{Transmission of Correlated Sources over the Multiple Access Channel (CS-MAC)}
\begin{figure}[!t]
\centering
\includegraphics[height=1.2in, draft=false]{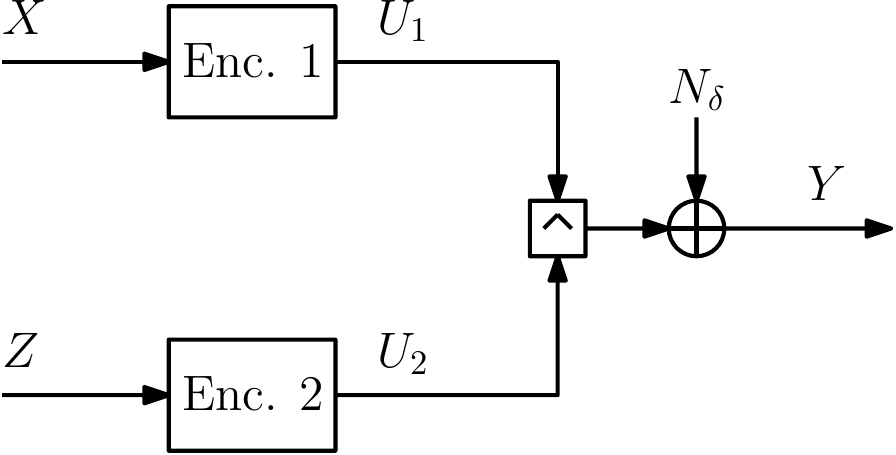}
 \caption{A CS-MAC example where SLCEs are suboptiomal.}
\label{fig:MACcorr}
\end{figure}

 We examine the CS-MAC setup shown in Figure \ref{fig:MACcorr}. Again, we restrict our attention to bandwidth expansion factor equal to one. 
 Here, the source $X$ is a $q$-ary source. The source $Z$ is defined as $Z=X\oplus_q N_{\epsilon}$, where $N_{\epsilon}$ is a $q$-ary random variable with 
\begin{align*}
 P(N_{\epsilon}=i)=
 \begin{cases}
 1-\epsilon, \qquad &\text{if } i=0,\\
 \frac{\epsilon}{q-1} \qquad &\text{if } i\in \{1,2,\cdots,q-1\},
\end{cases}
\end{align*}
and
 \begin{align*}
 P(N_{\delta}=i)=
 \begin{cases}
 1-\delta, \qquad &\text{if } i=0,\\
 \frac{\delta}{q-1} \qquad &\text{if } i\in \{1,2,\cdots,q-1\},
 \end{cases}
\end{align*}
 The output is:
\begin{align*}
Y=U_1\wedge U_2\oplus_q N_\delta=
 \begin{cases}
 U_2\oplus_qN_\delta, \qquad &\text{if } U_1=U_2,\\
 N_\delta \qquad  &\text{if } U_1\neq U_2,
\end{cases}
\end{align*}
 where $U_1$ and $U_2$ are the outputs of Encoder 1 and  Encoder 2, respectively. The goal is to transmit both sources $X$ and $Z$ losslessly to the decoder.
 
 In this setup, there are two strategies available to the encoders. The first strategy is for both encoders to transmit the sources simultaneously. In this case, the encoders must have equal outputs. Otherwise, the decoder receives the noise $N_{\delta}$. So, in this strategy, the encoders must `guess' each other's outputs.  The second strategy is to make a binary symmetric channel with noise $\delta$ for one of the encoders, while the other encoder does not transmit any messages. For example, in order to create such a channel for Encoder 1,
 encoder two transmits a constant sequence $U_2^n=(j,j,\cdots,j), j\in [1,q-1]$. Then, Encoder 2 can transmit a binary codeword using alphabet $\{0,j\}$. The rates of transmission for this strategy is:
 \begin{align*}
& R_{s,1}=\max_{p(U_1)} I(U_1;Y)= h\left(\frac{1}{2}\left(1-\frac{(q-2)\delta}{q-1}\right),\frac{1}{2}\left(1-\frac{(q-2)\delta}{q-1}\right), \frac{\delta}{q-1},\cdots, \frac{\delta}{q-1}\right)
- h\left(1-\delta,  \frac{\delta}{q-1}, \cdots, \frac{\delta}{q-1}\right),\\
 &R_{s,2}=0.
\end{align*}
 The following Proposition gives a condition under which the sources are transmissible:
 \begin{Proposition}
 There exists positive reals $\lambda_\epsilon, \epsilon\in (0,\frac{1}{2}]$, with $\lim_{\epsilon\to 0}\lambda_{\epsilon}=0$, such that the sources $X$ and $Y$ are transmissible if the following condition is satisfied:
 \begin{align*}
 H(X)\leq \log{q}-H(N_\delta)-\lambda_{\epsilon}.
\end{align*}
\end{Proposition}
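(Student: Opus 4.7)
The plan is to construct an explicit joint source--channel coding scheme that exploits the high correlation between $X$ and $Z$ when $\epsilon$ is small. Since $Z = X \oplus_q N_\epsilon$ with $H(N_\epsilon) \to 0$ as $\epsilon \to 0$, it suffices to deliver $X$ to the decoder at a rate approaching the maximum-cooperation capacity $\log q - H(N_\delta)$, and then to convey $Z$ via Slepian--Wolf coding at Encoder~2 at the small additional rate $H(Z|X) = H(N_\epsilon)$. The challenge is that the two encoders observe only $X$ and $Z$ respectively, not a common signal, so perfect cooperation is impossible.

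Concretely, I would mirror the two-step construction of Proposition~\ref{prop:ICcorrach}. Fix integers $n \ll m$ and a block length $k = k(\epsilon)$ with $k \to \infty$ and $k\epsilon \to 0$ as $\epsilon \to 0$. In Step~1, both encoders apply a common fixed-blocklength $q$-ary source--channel code $C_k$ to their respective source blocks: Encoder~1 transmits $C_k(X^{(j)})$ and Encoder~2 transmits $C_k(Z^{(j)})$ for $j \in [1,m]$. On blocks where $X^{(j)} = Z^{(j)}$---which occur with probability $(1-\epsilon)^k \ge 1 - k\epsilon$---the two channel inputs coincide, so the channel $Y = U_1 \wedge U_2 \oplus_q N_\delta$ reduces to $Y = C_k(X^{(j)}) \oplus_q N_\delta$ and the decoder recovers that block (up to small distortion) via $C_k$. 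Blocks with $X^{(j)} \ne Z^{(j)}$ are effectively erased. In Step~2, using a small secondary set of channel uses (during which Encoder~2 transmits a fixed symbol, turning the channel into a clean $q$-ary symmetric one for Encoder~1), Encoder~1 sends a large-blocklength channel-coded description that cleans up the residual Step~1 errors and erasures, yielding $X^L$ losslessly at the decoder. In parallel, Encoder~2 uses part of the secondary block to transmit a Slepian--Wolf-coded description of $Z^L$ at rate $H(Z|X) + o(1) = H(N_\epsilon) + o(1)$, from which the decoder recovers $Z^L$ given its reconstruction of $X^L$.

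The rate analysis bundles three vanishing overheads into a single term $\lambda_\epsilon$: (i) the gap between the rate of $C_k$ and the capacity $\log q - H(N_\delta)$, which tends to $0$ as $k \to \infty$; (ii) the fraction of channel uses consumed by the Step~2 cleanup, which scales with the per-block erasure rate $1 - (1-\epsilon)^k = O(k\epsilon)$; and (iii) the channel uses carrying the Slepian--Wolf description of $Z^L$, scaling as $H(N_\epsilon)/\log q$. Each vanishes as $\epsilon \to 0$, so $\lambda_\epsilon \to 0$, delivering the stated sufficient condition.

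The main obstacle is the joint choice of $k = k(\epsilon)$ together with a rigorous end-to-end error analysis. The parameter $k$ must grow so that the inner source--channel code $C_k$ approaches capacity $\log q - H(N_\delta)$, yet slowly enough that $(1-\epsilon)^k$ stays close to $1$; a choice like $k = \lfloor \epsilon^{-1/2} \rfloor$ resolves this tension. Making this precise requires controlling the finite-blocklength penalty of $C_k$ (in the spirit of the dispersion bounds of~\cite{KostinaSC} used in Proposition~\ref{prop:ICcorrach}) and allocating the bandwidth-one channel uses among the inner code, the outer cleanup, and the Slepian--Wolf transmission without any of them crowding the others out.
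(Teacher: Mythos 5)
Your proposal is correct and follows essentially the same route as the paper's own argument: a two-phase scheme in which both encoders first apply a common finite-blocklength source--channel code so that their channel inputs agree with probability $(1-\epsilon)^k$, followed by a cleanup phase (one encoder silent) that corrects the rare disagreement blocks and conveys the residual information $H(Z|X)=H(N_\epsilon)$, with all three overheads absorbed into $\lambda_\epsilon$. If anything, your explicit coupling $k=k(\epsilon)$ with $k\to\infty$ and $k\epsilon\to 0$ is slightly more careful than the paper's outline, which fixes $n$ and leaves the balancing of the finite-blocklength gap against the agreement probability implicit.
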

\begin{proof}
The ideas in this proof are similar to the ones in Proposition \ref{prop:ICcorrach}. We provide an outline of the proof here. There are two steps for the transmission of the sources. First, the first strategy described above is used to transmit at a rate close to $\log{q}-H(N_\delta)$. In this step, the encoders use a finite blocklength code to maximize their probability of agreement. In the second step, the encoders use the second strategy described above to correct the errors from the first step. The errors in the first step vanish as $\epsilon\to 0$, since the sources become equal with probability going to one. So, the rate of transmission approaches the rate of the first step which is close to $\log{q}-H(N_\delta)$. We provide a more detailed summary of the proof:
Fix $n$. Both encoders use an finite blocklength source-channel code for the $q$-ary symmetric channel with noise $N_{\delta}$ to transmit the sources. Let the blocklength of this code be equal to $n$, and the rate be equal to $\log{q}-H(N_\delta)+O(\frac{1}{\sqrt{n}})$. From the problem statement $P(X^n= Z^n)=P(N^n_{\epsilon}= 0^n)=(1-\epsilon)^n$. Since $U_{1}^n$ is a function of $X^n$, a and $U_2^n$ is a function of $Z^n$, we conclude that $P(U_1^n=U_2^n)\geq (1-\epsilon)^n$. The encoders then take turns to send refinements to the decoder. This is done using the second strategy described above. The rate required for this part of the transmission is  $\frac{(1-\epsilon)^n}{R_{s,1}}+\frac{H(N_{\epsilon})}{R_{s,1}}$. Note that $\frac{(1-\epsilon)^n}{R_{s,1}}+\frac{H(N_{\epsilon})}{R_{s,1}}$ goes to $0$ as $\epsilon\to 0$. 
This completes the proof. 
\end{proof}
 
 The following lemma provides an upper-bound to the entropy of $X$ as a function of $\epsilon$ and the correlation between $U_1$ and $U_2$.
 
 \begin{Lemma}
 For a coding scheme with encoding functions $U_1^n=\underline{e}_1(X^n), U_2^n=\underline{e}_{2}(Z^n)$, the following holds:
 \begin{align}
 \label{eq:Hx}
 H(X)\leq \frac{1}{n}\sum_{i=1}^n P(U_{1,i}=U_{2,i})\left(\log{q}-H(N_{\delta})\right)+1.
\end{align}
\end{Lemma}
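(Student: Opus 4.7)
The plan is to mimic the proof of Lemma \ref{lem:corrIC}, but with a sharper analysis that extracts a per-symbol rate of $\log q - H(N_\delta)$ rather than just $\log q$. I begin with Fano's inequality: since $X^n$ is reconstructed losslessly from $Y^n$, we have $H(X^n)\leq I(Y^n;X^n)+n\epsilon_n$ for some $\epsilon_n \to 0$, and using that $X$ is a DMS, $H(X^n) = n H(X)$. Introduce the side-information sequence $E^n$ where $E_i\triangleq \mathbbm{1}\{U_{1,i}=U_{2,i}\}$, which records whether the $i$th transmission slot delivers useful information or pure noise. Then
\begin{align*}
I(Y^n;X^n)\leq I(Y^n,E^n;X^n)=I(E^n;X^n)+I(Y^n;X^n\mid E^n)\leq H(E^n)+I(Y^n;X^n\mid E^n)\leq n+I(Y^n;X^n\mid E^n),
\end{align*}
where the final step uses that $E^n$ is binary.

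Next I would bound $I(Y^n;X^n\mid E^n)=H(Y^n\mid E^n)-H(Y^n\mid X^n,E^n)$ from above and below respectively in a single-letter fashion. For the conditional entropy of $Y^n$ given $E^n$, I use sub-additivity: $H(Y^n\mid E^n)\leq \sum_i H(Y_i\mid E_i)$, and then the piecewise definition of the channel gives $H(Y_i\mid E_i=1)\leq \log q$ and $H(Y_i\mid E_i=0)=H(N_\delta)$. The key observation, and the slightly delicate step, is that $H(Y^n\mid X^n,E^n)=nH(N_\delta)$: conditioned on $E_i=0$ we have $Y_i=N_{\delta,i}$, while conditioned on $E_i=1$ we know $U_{2,i}=U_{1,i}$ and $U_{1,i}$ is a deterministic function of $X^n$, so $Y_i=U_{1,i}\oplus_q N_{\delta,i}$ has residual entropy exactly $H(N_\delta)$. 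Since the $N_{\delta,i}$'s are i.i.d.\ and independent of $(X^n,E^n)$, summing over $i$ gives the claimed value.

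Combining these ingredients,
\begin{align*}
I(Y^n;X^n\mid E^n)\leq \sum_{i=1}^n\bigl[P(E_i=1)\log q + P(E_i=0)H(N_\delta)\bigr]-nH(N_\delta)=\sum_{i=1}^n P(U_{1,i}=U_{2,i})\bigl(\log q-H(N_\delta)\bigr).
\end{align*}
Plugging this into the chain above and dividing by $n$ yields
\begin{align*}
H(X)\leq 1+\frac{1}{n}\sum_{i=1}^n P(U_{1,i}=U_{2,i})\bigl(\log q-H(N_\delta)\bigr)+\epsilon_n,
\end{align*}
which is the stated bound up to the vanishing Fano term (absorbed as in Lemma \ref{lem:corrIC}). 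The only non-routine point is verifying that $H(Y^n\mid X^n,E^n)$ truly equals $nH(N_\delta)$; everything else is a direct parallel to the CS-IC lemma, so I anticipate no real obstacle beyond being careful about the dependence of $U_1^n$ on $X^n$ versus $U_2^n$ on $Z^n$ when conditioning.
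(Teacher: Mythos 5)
Your proof is correct and follows essentially the same route as the paper's: Fano's inequality, introduction of the agreement indicators $E_i=\mathbbm{1}\{U_{1,i}=U_{2,i}\}$ at a cost of at most $n$ bits, and the observation that each channel use contributes at most $\log q - H(N_\delta)$ when the inputs agree and nothing when they disagree. The only cosmetic difference is that you condition on the full vector $E^n$ and single-letterize the output entropies (computing $H(Y^n\mid X^n,E^n)=nH(N_\delta)$ exactly), whereas the paper first single-letterizes $I(U_1^nU_2^n;Y^n)\leq\sum_i I(U_{1,i}U_{2,i};Y_i)$ and then splits off $E_i$ letter by letter; both yield the identical bound.
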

\begin{proof}
Similar to the proof of Lemma  \ref{lem:corrIC}, we use Fano's inequality to prove the lemma.
Since $X^n$ is reconstructed losslessly at the decoder, by Fano's inequality the following holds:
 \begin{align*}
 &H(X^n)\approx I(U_1^nU_2^n;Y^n)\leq \sum_{i=1}^n  I(U_{1,i}U_{2,i};Y_{i})
 \stackrel{(a)}{=} \sum_{i=1}^n I(E_i,U_{1,i}U_{2,i};Y_n)=\sum_{i=1}^n I(E_i;Y_i)+I(U_{1,i}U_{2,i};Y_i|E_i)\\
 &\stackrel{(b)}\leq H(E_i)+\sum_{i=1}^n P(U_{1,i}=U_{2,i})(Y^n,Z^n))\left(\log{q}-H(N_\delta)\right)
 \\&\stackrel{(c)}{\leq} n+\sum_{i=1}^n P(U_{1,i}=U_{2,i})\left(\log{q}-H(N_\delta)\right),
\end{align*}
where in (a) we have defined $E_i$ as the indicator function of the event that $U_{1,i}=U_{2_i}, i\in[1,n]$, in (b) we have used that $I(U_{1,i}U_{2,i};Y_i|E_i)= P(E_i=0)\cdot 0+ P(E_i=1)I(U_{1,i};Y|U_{1,i}=U_{2,i})$ and in (c) we have used the fact that $E^n$ is binary.

\end{proof}
Since for SLCE's  $P(U_{1,i}=U_{2,i})$ is bounded away from 1 for $\epsilon\neq 0$, we conclude that there exists $q$ and $N_{\delta}$ such that $\frac{1}{n}\sum_{i=1}^n P(U_{1,i}=U_{2,i})\left(\log{q}-H(N_{\delta})\right)+1\leq \log{q}-H(N_{\delta}$. So, SLCE's are suboptimal in this example as well.

\section{Conclusion}\label{sec:conc}

We derived a new bound on the maximum correlation between Boolean functions operating on pairs of sequences of random variable. The bound was presented as a function of the dependency spectrum of the functions. We developed a new mathematical apparatus for analyzing Boolean functions, provided formulas for decomposing the Boolean function into additive components, and for calculating the dependency spectrum of these functions. The new bound may find applications in security, control and information theory.

Next, we characterized a set of properties which are shared between the SLCEs used in the literature. We showed that ensembles which have these properties produce encoding functions which are inefficient in preserving correlation. We derived a probabilistic upper-bound on the correlation between the outputs of random encoders generated using SLCEs. We showed that the correlation between the outputs of such encoders is discontinuous with respect to the input distribution. We used this discontinuity to show that all SLCSs are sub-optimal in two specific multi-terminal communications problem involving the transmission of correlated source.

\section{Appendix}
\subsection{Proof of Lemma \ref{prop:belong1}}
\begin{proof}
By definition, any element of $\mathcal{G}_{i_1}\otimes \mathcal{G}_{i_2}\otimes\dotsb \otimes \mathcal{G}_{i_n}$ satisfies the conditions in the proposition. Conversely, we show that any function satisfying the conditions $(1)$ and $(2)$ is in the tensor product.  Let $\tilde{f}=\sum_{\mathbf{j}}\tilde{f}_{\mathbf{j}}, \tilde{f}_{\mathbf{j}}\in\mathcal{G}_{j_1}\otimes \mathcal{G}_{j_2}\otimes\dotsb \otimes \mathcal{G}_{j_n}$ be an arbitrary function satisfying conditions $(1)$ and $(2)$ . Assume $i_k=1$ for some $k\in [1,n]$. Then: 
 \begin{align*}
 0\stackrel{(2)}{=}\mathbb{E}_{X^n|X_{\sim i_k}}(\sum_{\mathbf{j}}\tilde{f}_{\mathbf{j}}|X_{\sim i_k})\stackrel{(a)}{=}\sum_{\mathbf{j}}\mathbb{E}_{X^n|X_{\sim i_k}}(\tilde{f}_{\mathbf{j}}|X_{\sim i_k})\stackrel{(1)}{=} 
 \sum_{\mathbf{j}: j_k=0}\mathbb{E}_{X^n|X_{\sim i_k}}(\tilde{f}_{\mathbf{j}}|X_{\sim i_k})\stackrel{(2)}{=}\sum_{\mathbf{j}: j_k=0}\tilde{f}_{\mathbf{j}}, 
 \end{align*}
 where we have used linearity of expectation in (a), and the last two equalities use the fact that $\tilde{f}_{\mathbf{j}}\in \mathcal{G}_{j_1}\otimes \mathcal{G}_{j_2}\otimes\dotsb \otimes \mathcal{G}_{j_n}$ which means it satisfies properties $(1)$ and $(2)$. So far we have shown that $\tilde{f}=\sum_{\mathbf{j}\geq \mathbf{i}}\tilde{f}_{\mathbf{j}}$. Recall that $\mathbf{i}$ is given in the statement of the proposition. Now assume $i_{k'}=0$. Then:
 \begin{align*}
 \sum_{\mathbf{j}\geq\mathbf{i}}\tilde{f}_{\mathbf{j}}=\tilde{f}\stackrel{(1)}{=}\mathbb{E}_{X^n|X_{\sim i_{k'}}}(\sum_{\mathbf{j}\geq\mathbf{i}}\tilde{f}_{\mathbf{j}}|X_{\sim i_{k'}})=
 \sum_{\mathbf{j}\geq\mathbf{i}}\mathbb{E}_{X^n|X_{\sim i_{k'}}}(\tilde{f}_{\mathbf{j}}|X_{\sim i_{k'}})=  \sum_{\mathbf{j}\geq\mathbf{i}: j_{k'}=0}\tilde{f}_{\mathbf{j}} \Rightarrow 
 \sum_{\mathbf{j}\geq\mathbf{i}:j_{k'}=1}\tilde{f}_{\mathbf{j}}=0.
 \end{align*}
So, $\tilde{f}=\sum_{\mathbf{i}\geq\mathbf{j}\geq\mathbf{i}} \tilde{f}_{\mathbf{j}}=\tilde{f}_{\mathbf{i}}$. By assumption we have $\tilde{f}_{\mathbf{i}}\in\mathcal{G}_{i_1}\otimes \mathcal{G}_{i_2}\otimes\dotsb \otimes \mathcal{G}_{i_n}$. 
\end{proof}

\subsection{Proof of Lemma \ref{prop:belong2}}

\begin{proof}
 1) For two n-length binary vectors $\mathbf{i}$, and $\mathbf{j}$, we write $\mathbf{i}\leq \mathbf{j}$ if $i_k\leq j_k, \forall k\in[1,n]$. The set $\{0,1\}^n$ equipped with $\leq$ is a well-founded set (i.e. any subset of $\{0,1\}^n$ has at least one minimal element).  The following presents the principle of Noetherian induction on well-founded sets:
 \begin{Proposition}[Principle of Noetherian Induction \cite{noetherian}]
 Let $(A,\preccurlyeq)$ be a well-founded set. To prove the property $P(x)$ is true for all elements $x$ in $A$, it is sufficient to prove the following
 \\1) \textbf{Induction Basis:} $P(x)$ is true for all minimal elements in $A$. 
 \\2) \textbf{Induction Step:} For any non-minimal element $x$ in $A$, if $P(y)$ is true for all minimal $y$ such that $y\prec x$, then it is true for $x$.
 \end{Proposition}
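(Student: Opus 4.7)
The plan is to prove the Principle of Noetherian Induction by contradiction, using only the defining property of a well-founded set that was stated just above the proposition: every nonempty subset of $A$ has at least one minimal element with respect to $\preccurlyeq$. This is the only nontrivial ingredient; the rest is bookkeeping.

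First I would let $S \triangleq \{x \in A : P(x) \text{ is false}\}$ and assume for contradiction that $S \neq \emptyset$. By well-foundedness of $(A,\preccurlyeq)$, the nonempty set $S$ admits a minimal element $x_0 \in S$. I would then split into two cases according to whether $x_0$ is minimal in the ambient set $A$.

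In the first case, $x_0$ is minimal in $A$. The Induction Basis hypothesis then directly gives $P(x_0)$, contradicting $x_0 \in S$. In the second case, $x_0$ is not minimal in $A$, so the set $\{y \in A : y \prec x_0\}$ is nonempty. For any such $y \prec x_0$, we must have $y \notin S$, since otherwise $y$ would be an element of $S$ strictly below $x_0$, contradicting the minimality of $x_0$ in $S$. Hence $P(y)$ holds for every $y \prec x_0$, which is strictly stronger than the hypothesis required by the Induction Step (the step only requires $P$ to hold on minimal predecessors). Applying the Induction Step then yields $P(x_0)$, again a contradiction. In either case we reach a contradiction, so $S = \emptyset$ and $P$ holds throughout $A$.

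I do not expect any serious obstacle: the argument is a single-page classical proof and reduces entirely to the definition of well-foundedness invoked once on $S$. The only point deserving a brief remark in the write-up is the mild mismatch between the strong inductive hypothesis one actually has at $x_0$ (namely $P(y)$ for all $y\prec x_0$) and the weaker hypothesis ($P(y)$ for minimal $y\prec x_0$) under which the Induction Step is assumed to hold; since the stronger hypothesis trivially implies the weaker, the step is applicable and no issue arises.
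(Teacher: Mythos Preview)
Your proof is correct and is the standard argument for Noetherian induction. Note, however, that the paper does not supply its own proof of this proposition at all: it is stated with a citation to an external reference and is used as a black box inside the proof of Lemma~\ref{prop:belong2}. So there is nothing to compare your argument against in the paper itself; you have simply filled in a proof where the paper defers to the literature.

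One small remark worth keeping in your write-up (you already flagged it): the Induction Step as stated in the paper only assumes $P(y)$ for \emph{minimal} $y\prec x$, which is weaker than the usual Noetherian hypothesis ``$P(y)$ for all $y\prec x$.'' Your contradiction argument naturally produces the stronger hypothesis at the minimal counterexample $x_0$, and since the stronger hypothesis implies the weaker one, the step applies. This is exactly right, and it is good that you pointed it out explicitly, because the paper's phrasing is nonstandard.
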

 We will use Noetherian induction to prove the result.
  Let $\mathbf{i}_j, j\in [1,n]$ be the $j$th element of the standard basis. Then $\tilde{e}_{\mathbf{i}_j}= \mathbb{E}_{X^n|X_{ j}}(\tilde{e}|X_{ j})$. By the smoothing property of expectation, $\mathbb{E}_{X^n}(\tilde{e}_{\mathbf{i}_j})=\mathbb{E}_{X^n}(\tilde{e})=0$. Assume that $\forall \mathbf{j}<\mathbf{i}$,  $\mathbb{E}_{X^n}(\tilde{e}_{\mathbf{j}})=0$. Then,
\begin{align*}
 \mathbb{E}_{X^n}(\tilde{e}_{\mathbf{i}})
&=\mathbb{E}_{X^n}\left(\mathbb{E}_{X^n|X_{\mathbf{i}}}(\tilde{e}|X_{\mathbf{i}})-\sum_{\mathbf{j}< \mathbf{i}} \tilde{e}_{\mathbf{j}}\right)
\\&= 
\mathbb{E}_{X^n}(\tilde{e})- \sum_{\mathbf{j}< \mathbf{i}}  \mathbb{E}_{X^n}(\tilde{e}_{\mathbf{j}})
=0- \sum_{\mathbf{j}< \mathbf{i}}  0 =0.
\end{align*}
\\2) This statement is also proved by induction. $\mathbb{E}_{X^n|X_{\mathbf{i}}}(\tilde{e}|X_{\mathbf{i}})$ is a function of $X_{\mathbf{i}}$, so by induction $\tilde{e}_{\mathbf{i}}=\mathbb{E}_{X^n|X_{\mathbf{i}}}(\tilde{e}|X_{\mathbf{i}})-\sum_{\mathbf{j}<\mathbf{i}}\tilde{e}_{\mathbf{k}}$ is also a function of $X_{\mathbf{i}}$. 
\\3) Let $\mathbf{i}_k, k\in [1,n]$ be defined as the $k$th element of the standard basis, and take $j,j'\in [1,n], j\neq j'$. We have:
\begin{align*}
 \mathbb{E}_{X^n}(\tilde{e}_{\mathbf{i}_j}\tilde{e}_{\mathbf{i}_{j'}})=\mathbb{E}_{X^n}(\mathbb{E}_{X^n|X_j}(\tilde{e}|X_j)\mathbb{E}_{X^n|X_{j'}}(\tilde{e}|X_{j'}))
 \stackrel{(a)}{=} \mathbb{E}_{X^n}(\mathbb{E}_{X^n|X_j}(\tilde{e}|X_j)) \mathbb{E}_{X^n}(\mathbb{E}_{X^n|X_{j'}}(\tilde{e}|X_{j'}))
 \stackrel{(b)}{=} \mathbb{E}^2_{X^n}(\tilde{e})=0,
\end{align*}
where we have used the memoryless property of the source in (a) and (b) results from the smoothing property of expectation. We extend the argument by Noetherian induction. Fix $\mathbf{i}, \mathbf{k}$. Assume that $\mathbb{E}_{X^n}(\tilde{e}_{\mathbf{j}}\tilde{e}_{\mathbf{j}'})=\mathbbm{1}(\mathbf{j}=\mathbf{j}')\mathbb{E}_{X^n}(\tilde{e}^2_{\mathbf{j}}), \forall  \mathbf{j}< \mathbf{i}, \mathbf{j}'\leq \mathbf{k}$, and $\forall  \mathbf{j}\leq \mathbf{i}, \mathbf{j}'\leq \mathbf{k}$. Then, we have 
\begin{align*}
  &\mathbb{E}_{X^n}(\tilde{e}_{\mathbf{i}}\tilde{e}_{\mathbf{k}})
  =\mathbb{E}_{X^n}\left(\left(\mathbb{E}_{X^n|X_{\mathbf{i}}}(\tilde{e}|X_{\mathbf{i}})-\sum_{\mathbf{j}< \mathbf{i}} \tilde{e}_{\mathbf{j}}\right)\left(\mathbb{E}_{X^n|X_{\mathbf{k}}}(\tilde{e}|X_{\mathbf{k}})-\sum_{\mathbf{j}'< \mathbf{k}} \tilde{e}_{\mathbf{j}'}\right)\right)\\
  &=\mathbb{E}_{X_n}\left(\mathbb{E}_{X^n|X_{\mathbf{i}}}(\tilde{e}|X_{\mathbf{i}})\mathbb{E}_{X^n|X_{\mathbf{k}}}(\tilde{e}|X_{\mathbf{k}})\right)
  -\sum_{\mathbf{j}< \mathbf{i}} \mathbb{E}_{X^n}\left(\tilde{e}_{\mathbf{j}}\mathbb{E}_{X^n|X_{\mathbf{k}}}(\tilde{e}|X_{\mathbf{k}})\right)
  - \sum_{\mathbf{j}'< \mathbf{k}} \mathbb{E}_{X^n}\left(\tilde{e}_{\mathbf{j}'}\mathbb{E}_{X^n|X_{\mathbf{i}}}(\tilde{e}|X_{\mathbf{i}})\right)
  + \sum_{\mathbf{j}< \mathbf{i}, \mathbf{j}'<\mathbf{k}}\mathbb{E}_{X^n}(\tilde{e}_{\mathbf{j}}\tilde{e}_{\mathbf{j}'}).
\end{align*}
The second and third terms in the above expression can be simplified as follows. First, note that:
\begin{align}
&\tilde{e}_{\mathbf{i}}=\mathbb{E}_{X^n|X_{\mathbf{i}}}(\tilde{e}|X_{\mathbf{i}})-\sum_{\mathbf{j}< \mathbf{i}} \tilde{e}_{\mathbf{j}}\Rightarrow \sum_{\mathbf{j}\leq \mathbf{i}} \tilde{e}_{\mathbf{j}}=\mathbb{E}_{X^n|X_{\mathbf{i}}}(\tilde{e}|X_{\mathbf{i}}).
\label{eq:sum_int}
\end{align}
Our goal is to simplify $\mathbb{E}_{X^n}(\tilde{e}_{\mathbf{j}}\mathbb{E}_{X^n|X_{\mathbf{j}'}}(\tilde{e}|X_{\mathbf{j}'}))$. We proceed by considering two different cases:
\\\textbf{Case 1:} $\mathbf{i}\nleq \mathbf{k}$ and $\mathbf{k}\nleq \mathbf{i}$: 

Let $\mathbf{j}<\mathbf{i}$:
\begin{align*}
\mathbb{E}_{X^n}(\tilde{e}_{\mathbf{j}}\mathbb{E}_{X^n|X_{\mathbf{k}}}(\tilde{e}|X_{\mathbf{k}}))
\stackrel{\eqref{eq:sum_int}}{=} \mathbb{E}_{X^n}(\tilde{e}_{\mathbf{j}}\sum_{\mathbf{l}\leq \mathbf{k}} \tilde{e}_{\mathbf{l}}))=\sum_{\mathbf{l}\leq \mathbf{k}} \mathbb{E}_{X^n}(\tilde{e}_{\mathbf{j}}\tilde{e}_{\mathbf{l}})=\sum_{\mathbf{l}\leq \mathbf{k}}\mathbbm{1}({\mathbf{j}=\mathbf{l}})  \mathbb{E}_{X^n}(\tilde{e}_{\mathbf{j}}^2)=\mathbbm{1}({\mathbf{j}\leq\mathbf{k}})  \mathbb{E}_{X^n}(\tilde{e}_{\mathbf{j}}^2).
\end{align*}
By the same arguments, for $\mathbf{j}'\leq\mathbf{k}$:
\begin{align*}
  \mathbb{E}_{X^n}\left(\tilde{e}_{\mathbf{j}'}\mathbb{E}_{X^n|X_{\mathbf{i}}}(\tilde{e}|X_{\mathbf{i}})\right)=
  \mathbbm{1}({\mathbf{j}'\leq\mathbf{i}})  \mathbb{E}_{X^n}(\tilde{e}_{\mathbf{j}'}^2).
\end{align*}

Replacing the terms in the original equality we get:
\begin{align}
   \mathbb{E}_{X^n}(\tilde{e}_{\mathbf{i}}\tilde{e}_{\mathbf{k}})
   &= \mathbb{E}_{X^n}\left(\mathbb{E}_{X^n|X_{\mathbf{i}}}(\tilde{e}|X_{\mathbf{i}})\mathbb{E}_{X^n|X_{\mathbf{k}}}(\tilde{e}|X_{\mathbf{k}})\right)
   -\sum_{\mathbf{j}< \mathbf{i}}\mathbbm{1}({\mathbf{j}\leq\mathbf{k}})  \mathbb{E}_{X^n}(\tilde{e}_{\mathbf{j}}^2)
   - \sum_{\mathbf{j}'\leq \mathbf{k}} \mathbbm{1}({\mathbf{j}'\leq\mathbf{i}})  \mathbb{E}_{X^n}(\tilde{e}_{\mathbf{j}'}+ \sum_{\mathbf{j}< \mathbf{i}, \mathbf{j}'<\mathbf{k}}\mathbbm{1}(\mathbf{j}=\mathbf{j}')\mathbb{E}_{X^n}(\tilde{e}_{\mathbf{j}}^2).
   \label{eq:fix}
   \end{align}
   
   Note that: 
   \begin{align*}
& \sum_{\mathbf{j}< \mathbf{i}}\mathbbm{1}({\mathbf{j}\leq\mathbf{k}})  \mathbb{E}_{X^n}(\tilde{e}_{\mathbf{j}}^2)=
 \sum_{\mathbf{j}< \mathbf{i}, \mathbf{j}\leq\mathbf{k}} \mathbb{E}_{X^n}(\tilde{e}_{\mathbf{j}}^2)=
 \sum_{\mathbf{j}\leq \mathbf{i}\wedge \mathbf{k}}\mathbb{E}_{X^n}(\tilde{e}_{\mathbf{j}}^2)
 \\&
 \sum_{\mathbf{j}'\leq \mathbf{k}} \mathbbm{1}({\mathbf{j}'\leq\mathbf{i}})  \mathbb{E}_{X^n}(\tilde{e}_{\mathbf{j}'}=
  \sum_{\mathbf{j}'\leq \mathbf{k},{\mathbf{j}'\leq\mathbf{i}}} \mathbb{E}_{X^n}(\tilde{e}_{\mathbf{j}'}=
 \sum_{\mathbf{j}\leq \mathbf{i}\wedge \mathbf{k}}\mathbb{E}_{X^n}(\tilde{e}_{\mathbf{j}}^2)  
 \\&
 \sum_{\mathbf{j}< \mathbf{i}, \mathbf{j}'<\mathbf{k}}\mathbbm{1}(\mathbf{j}=\mathbf{j}')\mathbb{E}_{X^n}(\tilde{e}_{\mathbf{j}}^2=
 \sum_{\mathbf{j}\leq \mathbf{i}\wedge \mathbf{k}}\mathbb{E}_{X^n}(\tilde{e}_{\mathbf{j}}^2) 
\end{align*}
   
   Replacing the terms in \eqref{eq:fix}, we have: 
   \begin{align*}
 &  \mathbb{E}_{X^n}(\tilde{e}_{\mathbf{i}}\tilde{e}_{\mathbf{k}})= \mathbb{E}_{X^n}\left(\mathbb{E}_{X^n|X_{\mathbf{i}}}(\tilde{e}|X_{\mathbf{i}})\mathbb{E}_{X^n|X_{\mathbf{k}}}(\tilde{e}|X_{\mathbf{k}})\right)-\sum_{\mathbf{j}\leq \mathbf{i}\wedge \mathbf{k}}\mathbb{E}_{X^n}(\tilde{e}_{\mathbf{j}}^2)\\
   &\stackrel{(a)}{=}\mathbb{E}_{X^n}(\mathbb{E}_{X^n|X_{ \mathbf{i}\wedge \mathbf{k}}}^2(\tilde{e}(X^n)|X_{ \mathbf{i}\wedge \mathbf{k}}))-\sum_{\mathbf{j}\leq \mathbf{i}\wedge \mathbf{k}}\mathbb{E}_{X^n}(\tilde{e}_{\mathbf{j}}^2)
   \\&\stackrel{(b)}{=}\mathbb{E}_{X^n}(\mathbb{E}_{X^n|X_{ \mathbf{i}\wedge \mathbf{k}}}^2(\tilde{e}(X^n)|X_{ \mathbf{i}\wedge \mathbf{k}}))-\mathbb{E}_{X^n}\left((\sum_{\mathbf{j}\leq \mathbf{i}\wedge \mathbf{k}}\tilde{e}_{\mathbf{j}})^2\right)\stackrel{\eqref{eq:sum_int}}{=}0,
\end{align*}
where in (b) we have used that $\tilde{e}_{\mathbf{i}}$'s are uncorrelated, and (a) is proved below:
\begin{align*}
\mathbb{E}_{X^n}\left(\mathbb{E}_{X^n|X_{\mathbf{i}}}(\tilde{e}|X_{\mathbf{i}})\mathbb{E}_{X^n|X_{\mathbf{k}}}(\tilde{e}|X_{\mathbf{k}})\right)
&= \sum_{x_{ \mathbf{i}\wedge \mathbf{k}}}P(x_{\mathbf{i}\wedge \mathbf{k}})
\left( \left(\sum_{x_{{|\mathbf{i}}- \mathbf{k}|^+}}P(x_{|{\mathbf{i}}- \mathbf{k}|^+})
\mathbb{E}_{X^n|X_{\mathbf{i}}}(\tilde{e}|X_{\mathbf{i}})\right)\left(\sum_{x_{{|\mathbf{k}}- \mathbf{i}|^+}}P(x_{|{\mathbf{k}}- \mathbf{i}|^+})\mathbb{E}_{X^n|X_{\mathbf{k}}}(\tilde{e}|X_{\mathbf{k}})\right)\right)\\
&=\sum_{x_{ \mathbf{i}\wedge \mathbf{k}}}P(x_{\mathbf{i}\wedge \mathbf{k}}) \mathbb{E}_{X^n|X_{\mathbf{i}\wedge\mathbf{k}}}^2(\tilde{e}|x_{\mathbf{i}\wedge\mathbf{k}})\\
&=\mathbb{E}_{X^n}(\mathbb{E}_{X^n|X_{ \mathbf{i}\wedge \mathbf{k}}}^2(\tilde{e}(X^n)|X_{ \mathbf{i}\wedge \mathbf{k}})).
\end{align*}
\\\textbf{Case 2:} Assume $\mathbf{i}\leq \mathbf{k}$:
\begin{align*}
    \mathbb{E}_{X^n}(\tilde{e}_{\mathbf{i}}\tilde{e}_{\mathbf{k}})&=  \mathbb{E}_{X^n}\left(\mathbb{E}_{X^n|X_{\mathbf{i}}}(\tilde{e}|X_{\mathbf{i}})\mathbb{E}_{X^n|X_{\mathbf{k}}}(\tilde{e}|X_{\mathbf{k}})\right)-\sum_{\mathbf{j}< \mathbf{i}}\mathbbm{1}({\mathbf{j}\leq\mathbf{k}})  \mathbb{E}_{X^n}(\tilde{e}_{\mathbf{j}}^2)
    \\&- \sum_{\mathbf{j}'< \mathbf{k}} \mathbbm{1}({\mathbf{j}'\leq\mathbf{i}}) 
     \mathbb{E}_{X^n}(\tilde{e}_{\mathbf{j}'}^2) + \sum_{\mathbf{j}< \mathbf{i}, \mathbf{j}'<\mathbf{k}}\mathbbm{1}(\mathbf{j}=\mathbf{j}')
     \mathbb{E}_{X^n}(\tilde{e}_{\mathbf{j}}^2)\\
    &\stackrel{(a)}{=}\mathbb{E}_{X^n}(\mathbb{E}_{X^n|X_{\mathbf{i}}}^2(\tilde{e}|X_{\mathbf{i}}))-\sum_{\mathbf{j}<\mathbf{i}}\mathbb{E}_{X^n}(\tilde{e}^2_{\mathbf{j}})-\sum_{\mathbf{j}'\leq\mathbf{i}}\mathbb{E}_{X^n}(\tilde{e}^2_{\mathbf{j}'})+\sum_{\mathbf{j}\leq\mathbf{i}}\mathbb{E}_{X^n}(\tilde{e}^2_{\mathbf{j}})\\
    &=0,
\end{align*}
where in (a) we have used (a) proved above.
\\\textbf{Case 3:} When $\mathbf{k}\leq \mathbf{i}$ the proof is similar to case 2.
\\4) Clearly when $|\mathbf{i}|=1$, the claim holds. Assume it is true for all $\mathbf{j}$ such that $|\mathbf{j}|<|\mathbf{i}|$. 
Take $\mathbf{i}\in \{0,1\}^n$ and $t\in [1,n], i_t=1$ arbitrarily. We first prove the claim for $\mathbf{k}=\mathbf{i}-\mathbf{i}_t$:
\begin{align*}
 &\mathbb{E}_{X^n|X_{\mathbf{k}}}(\tilde{e}_{\mathbf{i}}|X_{\mathbf{k}})= \mathbb{E}_{X^n|X_{\mathbf{k}}}\left(\left(\mathbb{E}_{X^n|X_{\mathbf{i}}}(\tilde{e}|X_{\mathbf{i}})-\sum_{\mathbf{j}<\mathbf{i}}\tilde{e}_{\mathbf{j}}\right)|X_{\mathbf{k}}\right)
 = \mathbb{E}_{X^n|X_{\mathbf{k}}}\left(\mathbb{E}_{X^n|X_{\mathbf{i}}}(\tilde{e}|X_{\mathbf{i}})|X_{\mathbf{k}}\right)-\sum_{\mathbf{j}<\mathbf{i}}\mathbb{E}_{X^n|X_{\mathbf{k}}}(\tilde{e}_{\mathbf{j}}|X_{\mathbf{k}})
 \\&\stackrel{(a)}{=}\mathbb{E}_{X^n|X_{\mathbf{k}}}(\tilde{e}|X_{\mathbf{k}})-\sum_{\mathbf{j}<\mathbf{i}}\mathbb{E}_{X^n|X_{\mathbf{k}}}(\tilde{e}_{\mathbf{j}}|X_{\mathbf{k}})
 \stackrel{(b)}{=} \sum_{\mathbf{j}\leq \mathbf{i}-\mathbf{i}_t}\tilde{e}_{\mathbf{j}} -\sum_{\mathbf{j}<\mathbf{i}}\mathbb{E}_{X^n|X_{\mathbf{k}}}(\tilde{e}_{\mathbf{j}}|X_{\mathbf{k}})
 \\&\stackrel{(c)}{=} \sum_{\mathbf{j}\leq \mathbf{i}-\mathbf{i}_t}\mathbb{E}_{X^n|X_{\mathbf{k}}}(\tilde{e}_{\mathbf{j}}|X_{\mathbf{k}}) -\sum_{\mathbf{j}<\mathbf{i}}\mathbb{E}_{X^n|X_{\mathbf{k}}}(\tilde{e}_{\mathbf{j}}|X_{\mathbf{k}})
 =\sum_{s\neq t}\mathbb{E}_{X^n|X_{\mathbf{k}}}(\tilde{e}_{\mathbf{i}-\mathbf{i}_s}|X_{\mathbf{k}})
 \stackrel{(d)} {=}\sum_{s\neq t}\mathbb{E}_{X^n|X_{\mathbf{k}-\mathbf{i}_s}}(\tilde{e}_{\mathbf{i}-\mathbf{i}_s}|X_{\mathbf{k}-\mathbf{i}_s})
 \stackrel{(e)}=0,
\end{align*}
where in (a) we have used $\mathbf{i}>\mathbf{k}$, (b) follows from equation {\eqref{eq:sum_int}}, also (c) follows from $\mathbf{j}<\mathbf{k}$, (e) uses $\mathbf{k}\wedge(\mathbf{i}-\mathbf{i}_s)= \mathbf{k}-\mathbf{i}_s$, and finally, (d) uses the induction assumption. Now we extend the result to general $\mathbf{k}<\mathbf{i}$. Fix  $\mathbf{k}$. Assume the claim is true for all $\mathbf{j}$ such that $\mathbf{k}<\mathbf{j}<\mathbf{i}$ (i.e $\forall \mathbf{k}<\mathbf{j}<\mathbf{i}, \mathbb{E}_{X^n|X_{\mathbf{k}}}(\tilde{e}_{X_{\mathbf{j}}|X_{\mathbf{k}}})=0$). We have:
\begin{align*}
 &\mathbb{E}_{X^n|X_{\mathbf{k}}}(\tilde{e}_{\mathbf{i}}|X_{\mathbf{k}})= \mathbb{E}_{X^n|X_{\mathbf{k}}}\left(\mathbb{E}_{X^n|X_{\mathbf{i}}}(\tilde{e}|X_{\mathbf{i}})-\sum_{\mathbf{j}<\mathbf{i}}\tilde{e}_{\mathbf{j}}|X_{\mathbf{k}}\right)
 = \mathbb{E}_{X^n|X_{\mathbf{k}}}\left(\mathbb{E}_{X^n|X_{\mathbf{i}}}(\tilde{e}|X_{\mathbf{i}})|X_{\mathbf{k}}\right)-\sum_{\mathbf{j}\leq \mathbf{k}}\mathbb{E}_{X^n|X_{\mathbf{k}}}(\tilde{e}_{\mathbf{j}}|X_{\mathbf{k}})
\\& =\mathbb{E}_{X^n|X_{\mathbf{k}}}(\tilde{e}|X_{\mathbf{k}})-\sum_{\mathbf{j}\leq \mathbf{k}}\tilde{e}_{\mathbf{j}}\stackrel{\eqref{eq:sum_int}}{=}0.
\end{align*}
\end{proof}

\subsection{Proof of Proposition \ref{Lem:power}}
\begin{proof}
 \begin{align*}
 \mathbf{P}_{\mathbf{i}}
 &=Var_{X_{\mathbf{i}}}(\tilde{e}_{\mathbf{i}}(X^n))
 = \mathbb{E}_{X_{\mathbf{i}}}(\tilde{e}^2_{\mathbf{i}}(X^n))-\mathbb{E}_{X_{\mathbf{i}}}^2(\tilde{e}_{\mathbf{i}}(X^n))\\
& \stackrel{(a)}{=} \mathbb{E}_{X_{\mathbf{i}}}\left(\left(\mathbb{E}_{X^n|X_{\mathbf{i}}}(\tilde{e}|X_{\mathbf{i}})-\sum_{\mathbf{j}< \mathbf{i}} \tilde{e}_{\mathbf{j}}\right)^2\right)-0
=\mathbb{E}_{X_{\mathbf{i}}}\left(\mathbb{E}_{X^n|X_{\mathbf{i}}}^2(\tilde{e}|X_{\mathbf{i}})\right)-2\sum_{\mathbf{j}<\mathbf{i}}\mathbb{E}_{X_{\mathbf{i}}}\left(\mathbb{E}_{X^n|X_{\mathbf{i}}}(\tilde{e}|X_{\mathbf{i}})\tilde{e}_{\mathbf{j}}\right)+\mathbb{E}_{X_{\mathbf{i}}}\left(\left(\sum_{\mathbf{j}< \mathbf{i}} \tilde{e}_{\mathbf{j}}\right)^2\right)
\\&\stackrel{(b)}=\mathbb{E}_{X_{\mathbf{i}}}\left(\mathbb{E}_{X^n|X_{\mathbf{i}}}^2(\tilde{e}|X_{\mathbf{i}})\right)-2\sum_{\mathbf{j}<\mathbf{i}}\mathbb{E}_{X_{\mathbf{i}}}\left(\mathbb{E}_{X^n|X_{\mathbf{i}}}(\sum_{\mathbf{l}\leq \mathbf{i}}\tilde{e}_{\mathbf{l}}|X_{\mathbf{i}})\tilde{e}_{\mathbf{j}}\right)+\mathbb{E}_{X_{\mathbf{i}}}\left(\left(\sum_{\mathbf{j}< \mathbf{i}} \tilde{e}_{\mathbf{j}}\right)^2\right)
\\&\stackrel{(c)}{=}\mathbb{E}_{X_{\mathbf{i}}}\left(\mathbb{E}_{X^n|X_{\mathbf{i}}}^2(\tilde{e}|X_{\mathbf{i}})\right)-2\sum_{\mathbf{j}<\mathbf{i}}\mathbb{E}_{X_{\mathbf{i}}}\left(\sum_{\mathbf{l}\leq \mathbf{i}}\mathbb{E}_{X^n|X_{\mathbf{i}}}(\tilde{e}_{\mathbf{l}}|X_{\mathbf{i}})\tilde{e}_{\mathbf{j}}\right)+\mathbb{E}_{X_{\mathbf{i}}}\left(\left(\sum_{\mathbf{j}< \mathbf{i}} \tilde{e}_{\mathbf{j}}\right)^2\right)
\\&\stackrel{(d)}{=}\mathbb{E}_{X_{\mathbf{i}}}\left(\mathbb{E}_{X^n|X_{\mathbf{i}}}^2(\tilde{e}|X_{\mathbf{i}})\right)-2\sum_{\mathbf{j}<\mathbf{i}}\mathbb{E}_{X_{\mathbf{i}}}\left(\sum_{\mathbf{l}<\mathbf{i}}\tilde{e}_{\mathbf{l}}\tilde{e}_{\mathbf{j}}\right)+\mathbb{E}_{X_{\mathbf{i}}}\left(\left(\sum_{\mathbf{j}< \mathbf{i}} \tilde{e}_{\mathbf{j}}\right)^2\right)
\\&\stackrel{(e)}{=}\mathbb{E}_{X_{\mathbf{i}}}\left(\mathbb{E}_{X^n|X_{\mathbf{i}}}^2(\tilde{e}|X_{\mathbf{i}})\right)-2\sum_{\mathbf{j}<\mathbf{i}}\sum_{\mathbf{l}<\mathbf{i}}\mathbbm{1}(\mathbf{j}=\mathbf{l})\mathbb{E}_{X_{\mathbf{i}}}\left(\tilde{e}_{\mathbf{l}}\tilde{e}_{\mathbf{j}}\right)+\mathbb{E}_{X_{\mathbf{i}}}\left(\left(\sum_{\mathbf{j}< \mathbf{i}} \tilde{e}_{\mathbf{j}}\right)^2\right)
\\&=\mathbb{E}_{X_{\mathbf{i}}}\left(\mathbb{E}_{X^n|X_{\mathbf{i}}}^2(\tilde{e}|X_{\mathbf{i}})\right)-2\sum_{\mathbf{j}<\mathbf{i}}\mathbb{E}_{X_{\mathbf{j}}}(\tilde{e}^2_{\mathbf{j}})+\mathbb{E}_{X_{\mathbf{i}}}\left(\left(\sum_{\mathbf{j}< \mathbf{i}} \tilde{e}_{\mathbf{j}}\right)^2\right)
 \\&=\mathbb{E}_{X_{\mathbf{i}}}(\mathbb{E}_{X^n|X_{\mathbf{i}}}^2(\tilde{e}|X_{\mathbf{i}}))-2\sum_{\mathbf{j}<\mathbf{i}}\mathbb{E}_{X_{\mathbf{j}}}(\tilde{e}^2_{\mathbf{j}})+\sum_{\mathbf{j}< \mathbf{i}} \sum_{\mathbf{k}<\mathbf{i}}\mathbb{E}_{X_{\mathbf{i}}}(\tilde{e}_{\mathbf{j}}\tilde{e}_{\mathbf{k}})
 \\&
\stackrel{(f)}{=}\mathbb{E}_{X_{\mathbf{i}}}(\mathbb{E}_{X^n|X_{\mathbf{i}}}^2(\tilde{e}|X_{\mathbf{i}}))-2\sum_{\mathbf{j}<\mathbf{i}}\mathbb{E}_{X_{\mathbf{j}}}(\tilde{e}^2_{\mathbf{j}})+\sum_{\mathbf{j}< \mathbf{i}} \sum_{\mathbf{k}<\mathbf{i}}\mathbbm{1}(\mathbf{j}=\mathbf{k})\mathbb{E}_{X_{\mathbf{i}}}(\tilde{e}^2_{\mathbf{j}})
 \\&=\mathbb{E}_{X_{\mathbf{i}}}(\mathbb{E}_{X^n|X_{\mathbf{i}}}^2(\tilde{e}|X_{\mathbf{i}}))-\sum_{\mathbf{j}< \mathbf{i}}\mathbf{P}_{\mathbf{j}},
\end{align*}
\end{proof}
where (a) follows from condition 1) in Lemma \ref{prop:belong2}, b) follows from the decomposition in Equation \eqref{eq:sum_int} in the appendix, (c) uses linearity of expectation, (d) holds from condition 2) in Lemma \ref{prop:belong2}, and in (e) and (f) we have used condition 1) in Lemma \ref{prop:belong2}.

\subsection{Proof of Theorem \ref{th:sec3}}
\begin{proof}
This proof builds upon the results in \cite{ComInf2}. The proof involves three main steps. In the first two steps we prove the lower bound. First, we bound the Pearson correlation \cite{Pearson} between the real-valued functions $\tilde{e}$, and $\tilde{f}$. In the second step, we relate the correlation to the probability that the two functions are equal and derive the necessary bounds. Finally, in the third step we use the lower bound proved in the first two steps to derive the upper bound. 
\\\textbf{Step 1:} Let  $ s\triangleq P_X(e(X^n)=1)$, $r\triangleq P_Y(f(Y^n)=1)$. From Remark \ref{Rem:exp_0}, the expectation of both functions is 0. So, the Pearson correlation is given by 
\[\frac{\mathbb{E}_{X^n,Y^n}(\tilde{e}\tilde{f})}{\left(rs(1-s)(1-r)\right)^{\frac{1}{2}}}.\] Our goal is to bound this value. We have:
\begin{align}
 &\mathbb{E}_{X^n,Y^n}(\tilde{e}\tilde{f})
 \stackrel{(a)}=\mathbb{E}_{X^n,Y^n}\left((\sum_{\mathbf{i}\in\{0,1\}^n}\tilde{e}_{\mathbf{i}})(\sum_{\mathbf{k}\in\{0,1\}^n}\tilde{f}_{\mathbf{k}})\right)
 \stackrel{(b)}{=}\sum_{\mathbf{i}\in\{0,1\}^n}\sum_{\mathbf{k}\in\{0,1\}^n}\mathbb{E}_{X^n,Y^n}( \tilde{e}_{\mathbf{i}}\tilde{f}_{\mathbf{k}}).
 \label{Eq:init1}
\end{align}
In (a) we have used Definition \ref{Rem:Dec}, and in (b) we use linearity of expectation.
 Using the fact that $\tilde{e}_{\mathbf{i}}\in \mathcal{G}_{i_1}\otimes \mathcal{G}_{i_2}\otimes\dotsb \otimes \mathcal{G}_{i_n}$ and Definition \ref{Lem:tensor_dec}, we have:
 
\begin{align}
 \tilde{e}_{\mathbf{i}}=c_{\mathbf{i}}\prod_{t:i_t=1} \tilde{h}(X_{t}),  \tilde{f}_{\mathbf{k}}=d_{\mathbf{k}}\prod_{t:k_t=1} \tilde{g}(Y_{t}).
 \label{eq:init2}
\end{align}
where, 
\begin{align}
\tilde{h}(X)=    \begin{cases}
      1-q, & \text{if } X=1, \\
      -q. & \text{if } X=0,
    \end{cases},\qquad 
\tilde{g}(Y)=    \begin{cases}
      1-r, & \text{if } Y=1, \\
      -r. & \text{if } Y=0,
    \end{cases}
    \label{eq:basis} 
\end{align}

We replace $\tilde{e}_{\mathbf{i}}$ and $ \tilde{f}_{\mathbf{k}}$ in \eqref{Eq:init1}: 
 
\begin{align}
& \mathbb{E}_{X^n,Y^n}( \tilde{e}_{\mathbf{i}}\tilde{f}_{\mathbf{k}})
\stackrel{\eqref{eq:init2}}{=}\mathbb{E}_{X^n,Y^n}\left(\left(c_{\mathbf{i}} \prod_{t:i_t=1} \tilde{h}(X_{t})\right)\left(d_{\mathbf{k}} \prod_{s:k_s=1} \tilde{g}(Y_{s})\right)\right)
\stackrel{(a)}{=}c_{\mathbf{i}}d_{\mathbf{k}}  \mathbb{E}_{X^n,Y^n}\left(\prod_{t:i_t=1}\tilde{h}(X_{t}) \prod_{s:k_s=1} \tilde{g}(Y_{s})\right)\nonumber
\\&\stackrel{(b)}{=}c_{\mathbf{i}}d_{\mathbf{k}} \mathbb{E}_{X^n,Y^n}\left(\prod_{t:i_t=1, k_t=1}\tilde{h}(X_{t})\tilde{g}(Y_{k_t})\right)
\mathbb{E}_{X^n}\left(\prod_{t:i_t=1, k_t=0} \tilde{h}(X_{t})\right)\mathbb{E}_{Y^n}\left( \prod_{t:i_t=0, k_t=1} \tilde{g}(Y_{k_t})\right)\nonumber\\
&\stackrel{(c)}{=}\mathbbm{1}(\mathbf{i}=\mathbf{k})c_{\mathbf{i}}d_{\mathbf{k}}\prod_{t:i_t=1} \mathbb{E}_{X^n,Y^n}\left(\tilde{h}(X_{t}) \tilde{g}(Y_{t})\right)
\stackrel{(d)}{\leq} \mathbbm{1}(\mathbf{i}=\mathbf{k})c_{\mathbf{i}}d_{\mathbf{k}}  (1-2\epsilon)^{N_{\mathbf{i}}} \prod_{t:i_t=1}\mathbb{E}_{X^n}^{\frac{1}{2}}\left(\tilde{e}^2(X_{t})\right)\mathbb{E}_{Y^n}^{\frac{1}{2}}\left( \tilde{g}^2(Y)\right)\nonumber\\
&\stackrel{(e)}{=} \mathbbm{1}(\mathbf{i}=\mathbf{k}) (1-2\epsilon)^{N_{\mathbf{i}}}\mathbf{P}^{\frac{1}{2}}_{\mathbf{i}}\mathbf{Q}^{\frac{1}{2}}_{\mathbf{i}}
=  \mathbbm{1}(\mathbf{i}=\mathbf{k}) C_{\mathbf{i}}\mathbf{P}^{\frac{1}{2}}_{\mathbf{i}}\mathbf{Q}^{\frac{1}{2}}_{\mathbf{i}}.
\label{eq:intemid1}
\end{align}
(a) follows from linearity of expectation. In (b) we have used the fact that in a pair of DMS's, $X_i$ and $Y_j$ are independent for $i\neq j$. (c) holds since from Lemma \ref{prop:belong2}, $\mathbb{E}(\tilde{e}_{i})=\mathbb{E}(\tilde{f}_i)=0, \forall i\in [1,n]$. We prove (d) in Lemma \ref{Lem:init} below. In (e) we have used proposition \ref{pr:partfun}.
 
\begin{Lemma}
\label{Lem:init}
 Let $g(X)$ and $h(Y)$ be two arbitrary zero-mean, real valued functions, then:
 
\begin{align*}
 \mathbb{E}_{X,Y}(g(X)h(Y))\leq (1-2\epsilon)\mathbb{E}_X^{\frac{1}{2}}(g^2(X))\mathbb{E}_Y^{\frac{1}{2}}(h^2(Y)).
\end{align*}
 \end{Lemma}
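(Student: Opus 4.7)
The plan is to exploit the fact that since $X$ is binary and $g$ has zero mean, the space of zero-mean real-valued functions on $\{0,1\}$ is one-dimensional. This reduces the lemma to a direct computation, after which the stated bound follows from the Cauchy--Schwarz inequality. I will work in the (implicit) symmetric binary setting $P(X=1)=P(Y=1)=\tfrac{1}{2}$ with $P(X\neq Y)=\epsilon$, consistent with the $\psi=1-2\epsilon$ identification made in the remark following Theorem \ref{th:sec3_non_bin}.

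First I would parameterize the zero-mean functions. Because $\mathbb{E}(g(X))=\tfrac{1}{2}g(1)+\tfrac{1}{2}g(0)=0$, we have $g(0)=-g(1)$, so $g(X)=g(1)\phi(X)$ where $\phi(1)=1$ and $\phi(0)=-1$. An identical argument gives $h(Y)=h(1)\phi(Y)$. Consequently $\mathbb{E}(g^2(X))=g(1)^2$ and $\mathbb{E}(h^2(Y))=h(1)^2$, and
\begin{align*}
\mathbb{E}_{X,Y}(g(X)h(Y)) = g(1)h(1)\,\mathbb{E}_{X,Y}(\phi(X)\phi(Y)).
\end{align*}

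Next I would compute $\mathbb{E}(\phi(X)\phi(Y))$ directly from the joint distribution: expanding the four joint atoms and collecting yields $\mathbb{E}(\phi(X)\phi(Y))=P(X=Y)-P(X\neq Y)=(1-\epsilon)-\epsilon=1-2\epsilon$. Assuming $\epsilon\leq\tfrac{1}{2}$ (the relevant regime, since otherwise the inequality would require an absolute value on the right-hand side), we have $1-2\epsilon\geq 0$, so
\begin{align*}
\mathbb{E}_{X,Y}(g(X)h(Y)) = (1-2\epsilon)\,g(1)h(1) \leq (1-2\epsilon)\,|g(1)|\,|h(1)| = (1-2\epsilon)\,\mathbb{E}_X^{1/2}(g^2(X))\,\mathbb{E}_Y^{1/2}(h^2(Y)),
\end{align*}
which is the claimed bound.

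An equivalent route, perhaps cleaner in spirit, is through conditional expectation: using $g(0)=-g(1)$ one computes $\mathbb{E}(h(Y)\mid X=x)=(1-2\epsilon)\tilde h(x)$, where $\tilde h$ denotes $h$ viewed as a function on $\{0,1\}$ applied to $X$; since $X$ and $Y$ have the same marginal, $\mathbb{E}(\tilde h^2(X))=\mathbb{E}(h^2(Y))$. Then by the tower property and Cauchy--Schwarz, $\mathbb{E}(g(X)h(Y))=(1-2\epsilon)\mathbb{E}(g(X)\tilde h(X))\leq(1-2\epsilon)\mathbb{E}^{1/2}(g^2(X))\mathbb{E}^{1/2}(h^2(Y))$. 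There is no genuine obstacle here; the only conceptual point to flag is the one-dimensionality of the zero-mean subspace in the binary case, which is exactly what allows the $L^2$ operator norm of conditional expectation to be computed exactly as $1-2\epsilon$. The sign caveat (needing $\epsilon\leq\tfrac{1}{2}$) is the single subtlety worth mentioning explicitly.
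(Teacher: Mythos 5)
Your computation is correct as far as it goes, but it proves a strictly weaker statement than the lemma. You restrict at the outset to the symmetric case $P(X=1)=P(Y=1)=\tfrac12$, whereas the lemma is stated for arbitrary zero-mean functions of a general binary DMS pair $(X,Y)$ subject only to $P(X\neq Y)=\epsilon$, and it is invoked in Step~1 of the proof of Theorem~\ref{th:sec3} with the basis functions of \eqref{eq:basis}, whose underlying marginals $p=P(X=1)$ and $q=P(Y=1)$ are whatever the source dictates. The identification $\psi=1-2\epsilon$ that you cite as justification is itself a consequence of this lemma, not a licence to assume symmetry. The paper's proof spends essentially all of its effort on exactly the part you skipped: it parameterizes the joint law by $(p,q,\epsilon)$, computes the normalized correlation in closed form as $\frac{p+q-2pq-\epsilon}{2\sqrt{pq(1-p)(1-q)}}$, and then optimizes over $(p,q)$ by setting partial derivatives to zero, finding the interior stationary point $p=q=\tfrac12$ at which the ratio equals $1-2\epsilon$. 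Within the symmetric case your two routes (direct expansion, and the conditional-expectation/operator-norm argument) are both fine and match the paper's value at $p=q=\tfrac12$; the gap is everything outside that case.

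Your closing assertion that ``there is no genuine obstacle here'' understates the difficulty, because the asymmetric case is where the real content lies and it is not dispatched by the one-dimensionality observation alone. The feasible set of $(p,q)$ for fixed $\epsilon$ is a polytope, and a stationary-point analysis does not control its boundary. Concretely, take $\epsilon=0.1$, $p=0.5$, $q=0.6$ (so $P(X{=}1,Y{=}0)=0$, a boundary point): then
\begin{align*}
\frac{p+q-2pq-\epsilon}{2\sqrt{pq(1-p)(1-q)}}=\frac{0.4}{2\sqrt{0.06}}\approx 0.8165>0.8=1-2\epsilon,
\end{align*}
so the claimed inequality does not follow from the symmetric computation and indeed requires confronting the boundary of the feasible region. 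This shows the reduction to $p=q=\tfrac12$ is not an innocuous normalization: a complete proof must carry out (and justify) the maximization over general marginals, which is precisely the step missing from your argument.
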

\begin{proof}
This is a well-known result \cite{Ananth}. A proof is provided here for completeness:
 Let the functions be given as follows:
\begin{align*}
g(X)=    \begin{cases}
      \alpha & \quad \text{if } X=0 \\
      \beta & \quad \text{if } X=1.
    \end{cases} 
,\qquad h(Y)=    \begin{cases}
      \gamma &\quad \text{if }  Y=0 \\
      \delta & \quad \text{if } Y=1.
    \end{cases} 
\end{align*}
Also, let $P(X=1)=p$, and $P(Y=1)=q$. The zero-mean condition enforces the following equalities:
\begin{align*}
 &\alpha(1-p)+\beta p=0\Rightarrow \beta=\frac{-(1-p)\alpha}{p},\qquad\qquad\gamma(1-q)+\delta q=0 \Rightarrow \delta=\frac{-(1-q)\gamma}{q}.
\end{align*}
Next, we calculate the joint distribution of $P_{XY}$. Let $P_{i,j}\triangleq P(X=i,Y=j), i,j\in \{0,1\}$. We have the following:
\begin{align*}
 &P_{0,0}+P_{0,1}=P(X=0)=1-p,\qquad P_{0,0}+P_{1,0}=P(Y=0)=1-q,\\
 &P_{0,0}+P_{1,1}=P(X=Y)=1-\epsilon,\qquad P_{0,0}+P_{0,1}+P_{1,0}+P_{1,1}=1.
\end{align*}
Solving the system of equations yields:
\begin{align}
\qquad P_{0,0}=1-\frac{p+q+\epsilon}{2}, \qquad P_{0,1}=\frac{q+\epsilon-p}{2},\qquad P_{1,0}= \frac{p+\epsilon-q}{2},\qquad P_{1,1}=\frac{p+q-\epsilon}{2}.
\label{eq:lem31}
\end{align}
With the following constraint on the variables:
\begin{align*}
 p+\epsilon\geq q,\qquad p+q\geq\epsilon,\qquad q+\epsilon\geq p, \qquad p+q+\epsilon\leq 2.
\end{align*}
We have:
\begin{align}
&\frac{\mathbb{E}_{X,Y}(gh)}{\mathbb{E}_X^{\frac{1}{2}}(g^2)\mathbb{E}_Y^\frac{1}{2}(h^2)}
=\frac{\alpha\gamma\left(P_{0,0}-P_{0,1}\frac{(1-q)}{q}-P_{1,0}\frac{(1-p)}{p}+P_{1,1}\frac{(1-q)(1-p)}{pq}\right)}
{\alpha\gamma\left(\left((1-p)+\frac{(1-p)^2}{p}\right)^\frac{1}{2}\left((1-q)+\frac{(1-q)^2}{q}\right)^\frac{1}{2}\right)}\nonumber\\
& =\frac{P_{0,0}-P_{0,1}\frac{(1-q)}{q}-P_{1,0}\frac{(1-p)}{p}+P_{1,1}\frac{(1-q)(1-p)}{pq}}
{(\frac{1-p}{p})^\frac{1}{2}(\frac{1-q}{q})^\frac{1}{2}}\nonumber\\
& =\frac{P_{0,0}pq-P_{0,1}(1-q)p-P_{1,0}(1-p)q+P_{1,1}(1-q)(1-p)}
{\left(pq(1-p)(1-q)\right)^\frac{1}{2}}\nonumber\\ 
&\stackrel{\eqref{eq:lem31}}{=}\frac{(1-\frac{p+q+\epsilon}{2})pq-(\frac{q+\epsilon-p}{2})(1-q)p-(\frac{p+\epsilon-q}{2})(1-p)q+(\frac{p+q-\epsilon}{2})(1-q)(1-p)}
{\left(pq(1-p)(1-q)\right)^\frac{1}{2}}\nonumber\\
&=\frac{pq+(\frac{p+q}{2})\left((1-p)(1-p)-pq\right)+(\frac{q-p}{2})\left(q(1-p)-p(1-q))\right)}{\left(pq(1-p)(1-q)\right)^\frac{1}{2}}+\nonumber\\
&\frac{\frac{\epsilon}{2}(pq+p(1-q)+q(1-p)+(1-p)(1-q))}
{\left(pq(1-p)(1-q)\right)^\frac{1}{2}}\nonumber\\
&=\frac{pq+\frac{p+q}{2}(1-p-q)-\frac{p-q}{2}(q-p)-\frac{\epsilon}{2}}
{\left(pq(1-p)(1-q)\right)^\frac{1}{2}}\nonumber\\
&=\frac{p+q-2pq-\epsilon}
{2\left(pq(1-p)(1-q)\right)^\frac{1}{2}}.\label{eq:lem32}
\end{align}
We calculate the optimum point by taking partial derivatives:

\begin{align}
 &\frac{\delta}{\delta p}\frac{\mathbb{E}_{X,Y}(gh)}{\mathbb{E}_X^{\frac{1}{2}}(g^2)\mathbb{E}_Y^\frac{1}{2}(h^2)}=0 \Rightarrow 
 2(1-2q){\left(pq(1-p)(1-q)\right)^\frac{1}{2}}-\frac{(1-2p)}{\sqrt{p(1-p)}}\sqrt{q(1-q)}(p+q-2pq-\epsilon)=0\nonumber\\
& \stackrel{(a)}{\Rightarrow} 2(1-2q)p(1-p)-(1-2p)(p+q-2pq-\epsilon)=0\nonumber\\
 &\Rightarrow 2p(1-p)(1-2q)-p(1-2p)(1-2q)-(1-2p)q+(1-2p)\epsilon=0\nonumber\\
 &\Rightarrow p(1-2q)-(1-2p)q+(1-2p)\epsilon=0\nonumber\\
&\Rightarrow p-q+(1-2p)\epsilon=0.
\label{eq:sys1}
\end{align}
Where in (a) we have used $p,q\notin\{0,1\}$ to multiply by $\sqrt{pq(1-p)(1-q)}$. Taking the partial derivative with respect to $q$, by similar calculations we get:
\begin{align}
 &\frac{\delta}{\delta q}\frac{\mathbb{E}_{X,Y}(gh)}{\mathbb{E}_X^{\frac{1}{2}}(g^2)\mathbb{E}_Y^\frac{1}{2}(h^2)}=0 \to q-p+(1-2q)\epsilon\label{eq:sys2}.
\end{align}
In order for \eqref{eq:sys1} and \eqref{eq:sys2} to be satisfied simultaneously, we must have $\epsilon=0, p=q$, or $\epsilon=p+q=1$, or $p=q=\frac{1}{2}$. For $\epsilon\notin \{0,1\}$, we must have $p=q=\frac{1}{2}$ in which case the value in \eqref{eq:lem32} is:
 
\begin{align*}
 \frac{\mathbb{E}_{X,Y}(gh)}{\mathbb{E}_X^{\frac{1}{2}}(g^2)\mathbb{E}_Y^\frac{1}{2}(h^2)}=1-2\epsilon.
\end{align*}
This completes the proof of the Lemma.
\end{proof}
   Using equations \eqref{Eq:init1} and \eqref{eq:intemid1} we get: 
\begin{align*}
 \mathbb{E}_X(\tilde{e}\tilde{f})\leq  \sum_{\mathbf{i}}C_{\mathbf{i}}\mathbf{P}^{\frac{1}{2}}_{\mathbf{i}}\mathbf{Q}^{\frac{1}{2}}_{\mathbf{i}}.
\end{align*}
\textbf{Step 2:} We use the results from step one to derive a bound on $P(e\neq f)$. Define $a\triangleq P(e(X^n)=1,f(Y^n)=1)$, $b\triangleq P(e(X^n)=0,f(Y^n)=1)$, $c\triangleq P(e(X^n)=1,f(Y^n)=0)$, and $d\triangleq P(e(X^n)=0, f(Y^n)=0)$, then
 
\begin{align}
\label{eq:var}
& \mathbb{E}_{X^n,Y^N}(\tilde{e}(X^n)\tilde{f}(Y^n))=a(1-s)(1-r)-bs(1-r)-c(1-s)r+dsr,
\end{align}
 
We write this equation in terms of $\sigma\triangleq P(f\neq g)$, s, and r using the following relations:
\begin{align*}
&1)\quad a+c=s, \qquad 2) \quad b+d=1-s,\qquad  3)\quad  a+b=r,\qquad  4)\quad  c+d=1-r, \qquad 5)\quad  b+c=\sigma.
\end{align*}
 Solving the above we get:
\begin{align}
 a=\frac{s+r-\sigma}{2}, \qquad b=\frac{r+\sigma-s}{2},\qquad c=\frac{s-r+\sigma}{2},\qquad d=1-\frac{s+r+\sigma}{2}.
\label{eq:intermed}
\end{align}
 
We replace $a,b,c$, and $d$ in \eqref{eq:var} by their values in $\eqref{eq:intermed}$:
\begin{align*}
&\frac{\sigma}{2}\geq (\frac{s+r}{2})(1-s)(1-r)+(\frac{s-r}{2})s(1-r)+(\frac{r-s}{2})(1-s)r+sr(1-\frac{s+r}{2})-\sum_{\mathbf{i}}C_\mathbf{i}\mathbf{P}_{\mathbf{i}}^{\frac{1}{2}}\mathbf{Q}_{\mathbf{i}}^{\frac{1}{2}}\\
&\Rightarrow \sigma\geq s+r-2rs-2\sum_{\mathbf{i}}C_\mathbf{i}\mathbf{P}_{\mathbf{i}}^{\frac{1}{2}}\mathbf{Q}_{\mathbf{i}}^{\frac{1}{2}}
\Rightarrow \sigma \geq (\sqrt{s(1-r)}-\sqrt{r(1-s)})^2+2\sqrt{s(1-s)r(1-r)}-2\sum_{\mathbf{i}}C_\mathbf{i}\mathbf{P}_{\mathbf{i}}^{\frac{1}{2}}\mathbf{Q}_{\mathbf{i}}^{\frac{1}{2}}\\
&\Rightarrow \sigma \geq 2\sqrt{s(1-s)r(1-r)}-2\sum_{\mathbf{i}}C_\mathbf{i}\mathbf{P}_{\mathbf{i}}^{\frac{1}{2}}\mathbf{Q}_{\mathbf{i}}^{\frac{1}{2}}
\end{align*}
 On the other hand $\mathbb{E}_X(\tilde{e}^2)=s(1-s)=\sum_{\mathbf{i}}\mathbf{P}_{\mathbf{i}}$, where the last equality follows from the fact that  $\tilde{e}_{\mathbf{i}}$'s are uncorrelated. This proves the lower bound. Next we use the lower bound to derive the upper bound.
\\\textbf{Step 3:} The upper-bound can be derived by considering the function $h(Y^n)$ to be the complement of $f(Y^n)$ (i.e. $h(Y^n)\triangleq 1\oplus_2 f(Y^n)$.) In this case $P({h}(Y^n)=1)=P(f(Y^n)=0)=1-r$. The corresponding real function for $h(Y^n)$ is:
\begin{align*}
 \tilde{h}(Y^n)=
\begin{cases}
 r  \qquad  &\quad \text{if } h(Y^n)=1,\\
-(1-r) \qquad &\quad \text{if }  h(Y^n)=0,
\end{cases}
=
\begin{cases}
 r  \qquad  &\quad \text{if } f(Y^n)=0,\\
-(1-r) \qquad&\quad \text{if }   f(Y^n)=1,
\end{cases}
\Rightarrow \tilde{h}(Y^n)=-\tilde{f}(Y^n).
\end{align*} 
So, $\tilde{h}(Y^n)=-\sum_{\mathbf{i}}\tilde{f}_{\mathbf{i}}$. Using the same method as in the previous step, we have:
\begin{align*}
 \mathbb{E}_{X^n,Y^n}(\tilde{e}\tilde{h})=-\mathbb{E}_{X^n,Y^n}(\tilde{e}\tilde{f})\leq \sum_{\mathbf{i}}C_{\mathbf{i}}\mathbf{P}^{\frac{1}{2}}_{\mathbf{i}}\mathbf{Q}^{\frac{1}{2}}_{\mathbf{i}} \Rightarrow  P(e(X^n)\neq h(Y^n)) \geq2\sqrt{\sum_{\mathbf{i}}\mathbf{P}_{\mathbf{i}}}\sqrt{\sum_{\mathbf{i}}\mathbf{Q}_{\mathbf{i}}}-2\sum_{\mathbf{i}}C_\mathbf{i}\mathbf{P}_{\mathbf{i}}^{\frac{1}{2}}\mathbf{Q}_{\mathbf{i}}^{\frac{1}{2}} 
\end{align*}
On the other hand $P(e(X^n)\neq h(Y^n))=P(e(X^n)\neq 1\oplus f(Y^n))=P(e(X^n)= f(Y^n))=1-P(e(X^n)\neq f(Y^n))$. So, 
\begin{align*}
&1-P(e(X^n)\neq f(Y^n)) \geq 2\sqrt{\sum_{\mathbf{i}}\mathbf{P}_{\mathbf{i}}}\sqrt{\sum_{\mathbf{i}}\mathbf{Q}_{\mathbf{i}}}-2\sum_{\mathbf{i}}C_\mathbf{i}\mathbf{P}_{\mathbf{i}}^{\frac{1}{2}}\mathbf{Q}_{\mathbf{i}}^{\frac{1}{2}} 
\\&\Rightarrow 
P(e(X^n)\neq f(Y^n)) \leq 1- 2\sqrt{\sum_{\mathbf{i}}\mathbf{P}_{\mathbf{i}}}\sqrt{\sum_{\mathbf{i}}\mathbf{Q}_{\mathbf{i}}}+2\sum_{\mathbf{i}}C_\mathbf{i}\mathbf{P}_{\mathbf{i}}^{\frac{1}{2}}\mathbf{Q}_{\mathbf{i}}^{\frac{1}{2}} 
.
\end{align*}
This completes the proof.
\end{proof}

\subsection{Proof of Theorem \ref{th:sec3_non_bin}}

\begin{proof} The proof of Theorem \ref{th:sec3_non_bin} follows similar steps as the proof of Theorem \ref{th:sec3}. The only difference is in the proof of step 1. 
\\\textbf{Step 1:} Let  $ q\triangleq P_X(e(X^n)=1)$, $r\triangleq P_Y(f(Y^n)=1)$. We have:
 \begin{align}
 &\mathbb{E}_{X^n,Y^n}(\tilde{e}\tilde{f})
 \stackrel{(a)}=\mathbb{E}_{X^n,Y^n}\left(\left(\sum_{\mathbf{i}\in\{0,1\}^n}\tilde{e}_{\mathbf{i}}\right)\left(\sum_{\mathbf{k}\in\{0,1\}^n}\tilde{f}_{\mathbf{k}}\right)\right)
 \stackrel{(b)}{=}\sum_{\mathbf{i}\in\{0,1\}^n}\sum_{\mathbf{k}\in\{0,1\}^n}\mathbb{E}_{X^n,Y^n}( \tilde{e}_{\mathbf{i}}\tilde{f}_{\mathbf{k}}).
 \label{Eq:init2}
\end{align}
In (a) we have used Definition \ref{Rem:Dec}, and in (b) we use linearity of expectation.
 Using the fact that $\tilde{e}_{\mathbf{i}}\in \mathcal{G}_{i_1}\otimes \mathcal{G}_{i_2}\otimes\dotsb \otimes \mathcal{G}_{i_n}$ and Lemma \ref{lem:dec_non_bin}, we have:
 
\begin{align}
\tilde{e}_{\mathbf{i}}(X^n)=\sum_{\forall t\in \tau: l_{t}\in [1,|\mathcal{X}|-1]} c_{\mathbf{i},(l_{t})_{t\in\tau}}\prod_{t\in\tau}\tilde{h}_{l_t}(X_{{t}}),
\qquad
 \tilde{f}_{\mathbf{i}}(Y^n)=\sum_{\forall t\in \tau: l_{t}\in [1,|\mathcal{Y}|-1]} d_{\mathbf{i},(l_{t})_{t\in\tau}}\prod_{t\in\tau}\tilde{g}_{l_t}(Y_{{t}}),
\end{align}
where $c_\mathbf{i,(l_{t})_{t\in\tau}}\in \mathbb{R}$, and $\tilde{h}_l(X), l\in \{1,2,\cdots, |\mathcal{X}|-1\}$, and $\tilde{g}_l(Y), l\in \{1,2,\cdots, |\mathcal{Y}|-1\}$ are a basis for $\mathcal{I}_{X,1}$ and $\mathcal{I}_{Y,1}$, respectively.
We have:

\begin{align}
&\nonumber \mathbb{E}_{X^n,Y^n}( \tilde{e}_{\mathbf{i}}\tilde{f}_{\mathbf{k}})
\stackrel{(a)}= \mathbb{E}_{X^n,Y^n}( \tilde{e}_{\mathbf{i}}\tilde{f}_{\mathbf{i}})\mathbbm{1}(\mathbf{i}=\mathbf{k})\\
&\nonumber\stackrel{(b)}{=}  \mathbbm{1}(\mathbf{i}=\mathbf{k})  \mathbb{E}_{X^n}( \tilde{e}_{\mathbf{i}}\mathbb{E}_{Y^n|X^n}(\tilde{f}_{\mathbf{i}}|X^n))
\stackrel{(c)}{\leq}  \mathbbm{1}(\mathbf{i}=\mathbf{k})\mathbb{E}^\frac{1}{2}_{X^n}( \tilde{e}^2_{\mathbf{i}})\mathbb{E}^\frac{1}{2}_{X^n}( \mathbb{E}^2_{Y^n|X^n}(\tilde{f}_{\mathbf{i}}|X^n))
\\& =\mathbbm{1}(\mathbf{i}=\mathbf{k})\mathbf{P}^{\frac{1}{2}}_{\mathbf{i}}\mathbb{E}^\frac{1}{2}_{X^n}( \mathbb{E}^2_{Y^n|X^n}(\tilde{f}_{\mathbf{i}}|X^n)),
\label{eq:dec_inter_0}
\end{align}
(a) follows by the same arguments as the ones in step 1 of the proof of Theorem \ref{th:sec3}, (b) follows from the law of total expectation and the fact that $e_{\mathbf{i}}$ is a function of $X^n$. In (c) we have used the Cauchy-Schwarz inequality. It only remains to find bounds on $\mathbb{E}_{X^n}( \mathbb{E}^2_{Y^n|X^n}(\tilde{f}_{\mathbf{i}}|X^n))$ which are functions of $\mathbf{Q}_{\mathbf{i}}$, $\psi$, and $N_{\mathbf{i}}$. Let $(i_1,i_2,\cdots,i_{N_{\mathbf{i}}})$ be the indices for which the elements of $\mathbf{i}$ are equal to one. Note that:
\begin{align}
\mathbb{E}_{Y^n|X^n}(\tilde{f}_{\mathbf{i}}|X^n)
&\nonumber=\mathbb{E}_{Y_{\mathbf{i}}|X_{\mathbf{i}}}(\tilde{f}_{\mathbf{i}}|X_{\mathbf{i}})
 =\mathbb{E}_{Y_{i_{N_{\mathbf{i}}}}|X_{i_{N_{\mathbf{i}}}}}\left(\mathbb{E}_{Y_{\mathbf{i}-\mathbf{i}_{N_{\mathbf{i}}}}|X_{\mathbf{i}-\mathbf{i}_{N_{\mathbf{i}}}}} \left( \tilde{f}_{\mathbf{i}}|X_{\mathbf{i}-\mathbf{i}_{N_{\mathbf{i}}}}\right)|X_{i_{N_{\mathbf{i}}}}\right)
\\&=\mathbb{E}_{Y_{i_{N_{\mathbf{i}}}}|X_{i_{N_{\mathbf{i}}}}}\left(\mathbb{E}_{Y_{i_{N_{\mathbf{i}}-1}}|X_{i_{N_{\mathbf{i}}-1}}}\left(\cdots \left( \mathbb{E}_{Y_{i_1}|X_{i_1}}\left(\tilde{f}_{\mathbf{i}}|X_{i_1}\right)|X_{i_2}\right)\cdots\right)|X_{i_{N_{\mathbf{i}}}}\right),
\label{eq:smooth}
\end{align}
 where the first equality follows from the fact that $f_\mathbf{i}$ is a function of $Y_{\mathbf{i}}$. The rest of the equalities follow from the discrete and memoryless properties of the input. For ease of notation define the following projection operators for $1\leq i\leq n$:
 \begin{align*}
 \Pi_{X_i}:&\mathcal{I}_{Y,i}\to \mathcal{I}_{X,i},\\
& h(Y_i)\mapsto \mathbb{E}_{Y_i|X_i}(h(Y_i)).
\end{align*}
$\Pi_{X_i}$ can be interpreted as the projector of zero-mean functions of the random variable $Y_i$ onto zero-mean functions of the random variable $X_i$. We can rewrite Equation \eqref{eq:smooth} as follows:
\begin{equation}
\mathbb{E}_{Y^n|X^n}(\tilde{f}_{\mathbf{i}}|X^n)=
\Pi_{X_{i_{N_{\mathbf{i}}}}}\circ \Pi_{X_{i_{N_{\mathbf{i}}-1}}}\circ\cdots \circ \Pi_{X_{i_1}} (f_{\mathbf{i}}).
\end{equation}
We find bounds on $\mathbb{E}_{X^n}( \mathbb{E}^2_{Y^n|X^n}(\tilde{f}_{\mathbf{i}}|X^n))$ as follows:
\begin{align}
&\nonumber\mathbb{E}_{X^n}( \mathbb{E}^2_{Y^n|X^n}(\tilde{f}_{\mathbf{i}}|X^n))
\\\nonumber&= \mathbb{E}_{X^n}\left(\left(\Pi_{X_{i_{N_{\mathbf{i}}}}}\circ \Pi_{X_{i_{N_{\mathbf{i}}-1}}}\circ\cdots \circ \Pi_{X_{i_1}} (f_{\mathbf{i}})\right)^2\right)
\\\nonumber&\stackrel{(a)}{\leq}\mathbf{Q}_\mathbf{i}||\Pi_{X_{i_{N_{\mathbf{i}}}}}\circ \Pi_{X_{i_{N_{\mathbf{i}}-1}}}\circ\cdots \circ \Pi_{X_{i_1}} ||
\\\nonumber&\stackrel{(b)}{=}\mathbf{Q}_\mathbf{i}||\Pi_{X_{i_{N_{\mathbf{i}}}}}||\cdot||\Pi_{X_{i_{N_{\mathbf{i}}-1}}}||\cdots ||\Pi_{X_{i_1}}||
\\&\stackrel{(c)}{=}\mathbf{Q}_\mathbf{i}||\Pi_{X_1}||^n,
\label{eq:dec_inter_1}
\end{align}
where in (a) the operation norm is defined as $||\Pi||=\sup_e \mathbb{E}(\Pi^2(e))$ where the supremum is taken over all zero-mean functions $e$ with unit variance. (b) follows from the discrete memoryless property of the inputs. Finally, (c) holds since the source elements are identically distributed. On the other hand, we have:
\begin{align}
 \nonumber\psi&=sup_{h,g\in \mathcal{L}} \mathbb{E}_{X_1,Y_1}(h(X_1)g(Y_1))\\
\nonumber &=sup_{h,g\in \mathcal{L}} \mathbb{E}_{X_1} (h(X_1)\mathbb{E}_{Y_1|X_1}(g(Y_1)|X_1))
 \\\nonumber& \stackrel{(a)}{=} sup_{g\in \mathcal{L}} \mathbb{E}^\frac{1}{2}_{X_1} (h^2(X_1)  \mathbb{E}^\frac{1}{2}_{X_1}(\mathbb{E}^2_{Y_1|X_1}(g(Y_1)|X_1))
 \\\nonumber&\stackrel{(b)}{=} sup_{g\in \mathcal{L}} \mathbb{E}^\frac{1}{2}_{X_1}(\mathbb{E}^2_{Y_1|X_1}(g(Y_1)|X_1))
\\&\stackrel{(c)}= ||\Pi_{X_1}||,
\label{eq:dec_inter_2}
\end{align}
where $\mathcal{L}$ is the set of all pairs of functions $g(X)$ and $h(Y)$ with zero mean which have unit variance. (a) follows from the Cauchy-Schwarz inequality and the fact that equality is satisfied by taking $g(X_1)=c\mathbb{E}_{Y_1|X_1}(h(Y_1)|X_1)$ where the constant $c$ is chosen properly, so that $g(X_1)$ has unit variance. The quality (b) holds since $h(X_1)$ has unit variance, and (c) holds by the definition of operator norm. Combining equations \eqref{eq:dec_inter_0}, \eqref{eq:dec_inter_1}, \eqref{eq:dec_inter_2} we have:
\begin{align*}
 \mathbb{E}_{X^n,Y^n}( \tilde{e}_{\mathbf{i}}\tilde{f}_{\mathbf{k}})
 \leq \mathbbm{1}(\mathbf{i}=\mathbf{k}) \psi^{N_{\mathbf{i}}}\mathbf{P}_{\mathbf{i}}^{\frac{1}{2}}\mathbf{Q}_{\mathbf{i}}^{\frac{1}{2}}.
\end{align*}
The rest of the proof follows by the exact same arguments as in steps 2 and 3 in the proof of Theorem \ref{th:sec3}.
\end{proof}
\subsection{Proof of Lemma \ref{lem:SLCE_def}}
\begin{proof}
 We provide an outline of the proof that the three properties in Definition \ref{def:SLCE} are satisfied:
\\ 1) Let $B_n(x^n)$ be as follows:
\begin{align*}
 B_n(x^n)=\{\tilde{x}^n|\exists u^n: (x^n,u^n), (\tilde{x}^n,u^n)\in A_{\epsilon}^n(X,U)\}.
\end{align*}
Following the notation in  \cite{csiszarbook}, let $\mathcal{V}(x^n)$ be the set of all conditional types of sequences $\tilde{x}^n$ given $x^n$, and let $T_v(x^n)$ be the set of all sequences $\tilde{x}^n$ which have the conditional type $v\in \mathcal{V}(x^n)$ with respect to the sequence $x^n$.  
 Then:
\begin{align*}
 |B_n(x^n)|= \sum_{v\in \mathcal{V}(x^n)}|B(x^n)\bigcap T_{v}(x^n)|.
\end{align*}
Note that $|B(x^n)\bigcap T_{v}(x^n)|\neq 0$ if and only if there exists a joint conditional type $\tilde v_{U,\tilde{X}|x^n}$ such that $|\tilde{P}_{U,X}-P_{U,X}|<\epsilon$ and $|\tilde{P}_{U,\tilde{X}}- P_{UX}|<\epsilon$ and $\tilde{P}_{U|X}=v$, where $\tilde{P}_{U,X,\tilde{X}}$ is the joint type of the sequences in $\tilde v_{U,\tilde{X}|x^n}$ with $x^n$. As a result we have: 
\begin{align*}
 |B_n(x^n)|= \sum_{\substack{v\in \mathcal{V}(x^n)
 \\ \exists \tilde v_{U,\tilde{X}|x^n}: 
 \\ |\tilde{P}_{U,X}-P_{U,X}|<\epsilon, |\tilde{P}_{U,\tilde{X}}- P_{UX}|<\epsilon
 }}|B(x^n)\bigcap T_{v}(x^n)|.
\end{align*}

By standard type analysis arguments we conclude that:
\begin{align*}
 |B_n(x^n)|\leq  2^{\max n(H(\tilde{X}|X)+\delta_n)},
\end{align*}
where the maximum is taken over all distributions $P_{U,X,\tilde{X}}$ such that $P_{U,X}=P_{U,\tilde{X}}$, and $\delta_n$ is a sequence of positive numbers which converges to 0 as $n\to\infty$. Since all of the sequences in $B_{n}(X^n)$ are typical we have:
\begin{align*}
P(\tilde{X}^n\in B_{n}(x^n))\approx \frac{|B_n(x^n)|}{|A_{\epsilon}^n(x^n)|}\approx 2^{-n(I(\tilde{X};X)-\delta)}\triangleq 2^{-n\delta_X}.
\end{align*}
Next we show that for $\tilde{x}^n\notin B_n(x^n)$:
 \begin{equation} \label{eq:2neq}
 (1-2^{-n\delta_X})P(\underline{E}(x^n))P(\underline{E}(\tilde{x}^n))<P(\underline{E}(x^n),\underline{E}(\tilde{x}^n))<(1+2^{-n\delta_X})P(\underline{E}(x^n))P(\underline{E}(\tilde{x}^n)) .
\end{equation}
Note that by our construction $x^n$ is mapped to a sequence in $\mathcal{C}\bigcap A_{\epsilon}^n(U|x^n)$ randomly and uniformly. So:
\begin{align*}
& P(\underline{E}(x^n)=c^n|\mathcal{C})=\frac{\mathbbm{1}\left(c^n\in \mathcal{C}\bigcap A_{\epsilon}^n(U|x^n)\right)}{|\mathcal{C}\bigcap A_{\epsilon}^n(U|x^n)|}
 = \frac{\mathbbm{1}\left(c^n\in \mathcal{C}\bigcap A_{\epsilon}^n(U|x^n)\right)}{|\mathcal{C}\bigcap A_{\epsilon}^n(U|x^n)- \{c^n\}|+1}
 \\& \Rightarrow  P(\underline{E}(x^n)=c^n)=P\left(c^n\in \mathcal{C}\cap A_{\epsilon}^n(U|x^n)\right)\mathbb{E}_{\mathcal{C}}\left(\frac{1}{|\mathcal{C}\bigcap A_{\epsilon}^n(U|x^n)- \{c^n\}|+1}\right).
\end{align*}
By a similar argument for $\tilde{x}^n\notin B_n(x^n)$ we have:

\begin{align*} 
&P(\underline{E}(x^n)=c^n, \underline{E}(\tilde{x}^n)=\tilde{c}^n)=\\&P\left(c^n\in \mathcal{C}\cap A_{\epsilon}^n(U|x^n)
,\tilde{c}^n\in \mathcal{C}\cap A_{\epsilon}^n(U|\tilde{x}^n)\right)
\mathbb{E}_{\mathcal{C}}\left(\frac{1}{|\mathcal{C}\bigcap A_{\epsilon}^n(U|x^n)- \{c^n\}|+1}
\frac{1}{|\mathcal{C}\bigcap A_{\epsilon^n(U|\tilde{x}^n)}- \{\tilde{c}^n\}|+1}
\right).
\end{align*}
We show the following bound:
\begin{align}\label{eq:neq}
 P\left(c^n\in \mathcal{C}\cap A_{\epsilon}^n(U|x^n)
,\tilde{c}^n\in \mathcal{C}\cap A_{\epsilon}^n(U|\tilde{x}^n)\right)
\leq P\left(c^n\in \mathcal{C}\cap A_{\epsilon}^n(U|x^n)\right)P\left(
\tilde{c}^n\in \mathcal{C}\cap A_{\epsilon}^n(U|\tilde{x}^n)\right)(1+2^{-n\delta_X}). 
\end{align}
Assume that $c^n\in A_{\epsilon}^n(U|{x}^n)$, and $\tilde{c}^n\in A_{\epsilon}^n(U|\tilde{x}^n)$, otherwise the two sides are equal to 0. The following equalities hold by the construction algorithm:
\begin{align*}
& P\left(c^n\in \mathcal{C}\cap A_{\epsilon}^n(U|x^n)\right)= P\left(c^n\in \mathcal{C}\right)=\frac{|\mathcal{C}|}{|A_{\epsilon}^n(U)|},
\\& P\left(\tilde{c}^n\in \mathcal{C}\cap A_{\epsilon}^n(U|\tilde{x}^n)\right)= P\left(\tilde{c}^n\in \mathcal{C}\right)=\frac{|\mathcal{C}|}{|A_{\epsilon}^n(U)|},
\\& P\left(c^n\in \mathcal{C}\cap A_{\epsilon}^n(U|x^n)
,\tilde{c}^n\in \mathcal{C}\cap A_{\epsilon}^n(U|\tilde{x}^n)\right)=
P\left(c^n\in \mathcal{C}
,\tilde{c}^n\in \mathcal{C}\right)
=\frac{{|\mathcal{C}|\choose 2}}{{A_{\epsilon}^n(U) \choose 2}}.
\end{align*}
Using the above it is straightforward to check that the bound in (\ref{eq:neq}) holds. Similarly, it follows that 
\begin{align*}
&\mathbb{E}_{\mathcal{C}}\left(\frac{1}{|\mathcal{C}\bigcap A_{\epsilon}^n(U|x^n)- \{c^n\}|+1}
\frac{1}{|\mathcal{C}\bigcap A_{\epsilon^n(U|\tilde{x}^n)}- \{\tilde{c}^n\}|+1}
\right)\leq
\\& \mathbb{E}_{\mathcal{C}}\left(\frac{1}{|\mathcal{C}\bigcap A_{\epsilon}^n(U|x^n)- \{c^n\}|+1}\right)\mathbb{E}_{\mathcal{C}}\left(
\frac{1}{|\mathcal{C}\bigcap A_{\epsilon^n(U|\tilde{x}^n)}- \{\tilde{c}^n\}|+1}
\right)(1+2^{-n\delta_X}).
\end{align*}
Multiplying the two bound recovers the right-hand side of the inequality in (\ref{eq:2neq}). The left-hand side can be shown by similar arguments. 

2) As $n$ becomes large, the $i$th output element $E_i(X^n)$ is correlated with the input sequence $X^n$ only through the $i$th input element $X_i$:
\begin{align*}
 &\forall \delta>0, \exists n\in \mathbb{N}: m>n
 \Rightarrow \forall x^m\in \{0,1\}^m , v \in \{0,1\},
 \\& |P_{\mathscr{S}}(E_i(X^m)=v|X^m=x^m)-P_{\mathscr{S}}(E_i(X^m)=v|X_i=x_i)|<\delta. 
\end{align*}

The proof is as follows: For a fixed quantization function $\underline{e}:\{0,1\}^m\to\{0,1\}^m$, $\underline{e}(X^m)$ is a function of $X^m$. However,
 without the knowledge that which encoding function is used, $E_i(X^m)$ is related to $X^m$ only through $X_i$. In other words, averaged over all encoding functions, the effects of the rest of the elements diminishes. We provide a proof of this statement below:
 
First, we are required to provide some definitions relating to the joint type of pairs of sequences. For binary strings $u^m, x^m$, define $N(a,b|u^m,x^m)\triangleq
|\{j|u_j=a, x_j=b\}|$, that is the number of indices $j$ for which the value of the pair $(u_j,x_j)$ is $(a,b)$. For $s,t\in \{0,1\}$, define $l_{s,t}\triangleq N(s,t|u^m,x^m)$, the vector $(l_{0,0}, l_{0,1}, l_{1,0}, l_{1,1})$ is called the joint type of $(u^m,x^m)$. For fixed $x^m$ The set of sequences $T_{l_{0,0}, l_{0,1}, l_{1,0}, l_{1,1}}=\{u^m| N(s,t|u^m,x^m)=l_{s,t}, s,t\in\{0,1\}\}$, is the set of vectors which have joint type $(l_{0,0}, l_{0,1}, l_{1,0}, l_{1,1})$ with the sequence $x^m$. Fix $m, \epsilon>0$, and define $\mathcal{L}_{\epsilon,n}\triangleq \{(l_{0,0}, l_{0,1}, l_{1,0}, l_{1,1}): |\frac{l_{s,t}}{m}-P_{U,X}(s,t)|<\epsilon, \forall s,t\}$. Then for the conditional typical set $A_{\epsilon}^n(U|x^m)$ defined above we can write
\begin{align*}
 A_{\epsilon}^n(U|x^m)=\bigcup_{(l_{0,0}, l_{0,1}, l_{1,0}, l_{1,1})\in \mathcal{L}_{\epsilon,n}} T_{l_{0,0}, l_{0,1}, l_{1,0}, l_{1,1}}.
\end{align*}
The type of $x^m$, denoted by $(l_0,l_1)$ is defined in a similar manner.
 Since $E_i(X^m)$ are chosen uniformly from the set $A_{\epsilon}^n(U|x^m)$, we have:
\begin{align*}
&P_{\mathscr{S}}(E_i(X^m)=v|X^m=x^m)=\frac{|\{u^m| u_1=v, u^m\in  A_{\epsilon}^m(U|x^m)\}|}{|\{u^m| u^m\in  A_{\epsilon}^m(U|x^m)\}|} 
\\&=\frac{\sum_{(l_{0,0}, l_{0,1}, l_{1,0}, l_{1,1})\in  \mathcal{L}_{\epsilon,n}} |\{u^m| u_1=v, u^m\in  T_{l_{0,0}, l_{0,1}, l_{1,0}, l_{1,1}}\}|}{\sum_{(l_{0,0}, l_{0,1}, l_{1,0}, l_{1,1})\in  \mathcal{L}_{\epsilon,n}} |\{u^m|  u^m\in  T_{l_{0,0}, l_{0,1}, l_{1,0}, l_{1,1}}\}|}\\
&=\frac{\sum_{(l_{0,0}, l_{0,1}, l_{1,0}, l_{1,1})\in  \mathcal{L}_{\epsilon,n}} {{l_{x_1}-1} \choose {l_{u_1,x_1}-1} }{{l_{\bar{x}_1}} \choose {l_{u_1,\bar{x}_1}} }}{\sum_{(l_{0,0}, l_{0,1}, l_{1,0}, l_{1,1})\in  \mathcal{L}_{\epsilon,n}}{{l_{x_1}} \choose {l_{u_1,x_1}} }{{l_{\bar{x}_1}} \choose {l_{u_1,\bar{x}_1}} }}
\\&=\frac{\sum_{(l_{0,0}, l_{0,1}, l_{1,0}, l_{1,1})\in  \mathcal{L}_{\epsilon,n}} \frac{(l_{x_1}-1)!}{
(l_{u_1,x_1}-1)!(l_{x_1}-l_{u_1,x_1})!}
 \frac{l_{\bar{x}_1}!}{
l_{u_1,\bar{x}_1}!(l_{\bar{x}_1}-l_{u_1,\bar{x}_1})!}}
{\sum_{(l_{0,0}, l_{0,1}, l_{1,0}, l_{1,1})\in  \mathcal{L}_{\epsilon,n}} \frac{l_{x_1}!}{
l_{u_1,x_1}!(l_{x_1}-l_{u_1,x_1})!}
 \frac{l_{\bar{x}_1}!}{
l_{u_1,\bar{x}_1}!(l_{\bar{x}_1}-l_{u_1,\bar{x}_1})!}}
\\&\stackrel{(a)}{=}\frac{\sum_{(l_{0,0}, l_{0,1}, l_{1,0}, l_{1,1})\in  \mathcal{L}_{\epsilon,n}} l_{u_1,x_1} \frac{1}{
l_{u_1,x_1}!(l_{x_1}-l_{u_1,x_1})!}
 \frac{1}{
l_{u_1,\bar{x}_1}!(l_{\bar{x}_1}-l_{u_1,\bar{x}_1})!}}
{l_{x_1}\sum_{(l_{0,0}, l_{0,1}, l_{1,0}, l_{1,1})\in  \mathcal{L}_{\epsilon,n}} \frac{1}{
l_{u_1,x_1}!(l_{x_1}-l_{u_1,x_1})!}
 \frac{1}{
l_{u_1,\bar{x}_1}!(l_{\bar{x}_1}-l_{u_1,\bar{x}_1})!}}
\\&\stackrel{(b)}{\Rightarrow} \frac{P_{U,X}(u_1,x_1)-\epsilon}{P_X(x_1)+\epsilon} \leq P_{\mathscr{S}}(E_i(X^m)=v|X^m=x^m) \leq  \frac{P_{U,X}(u_1,x_1)+\epsilon}{P_X(x_1)-\epsilon}
\\& \Rightarrow \exists m,\epsilon>0: |P_{\mathscr{S}}(E_i(X^m)=v|X^m=x^m)-P_{U|X}(u_1|x_1)|\leq \delta.
\end{align*}
In (a), we use the fact that for fixed $x^m$, $(l_{x_1},l_{\bar{x}_1})$ is fixed to simplify the numerators. In (b) we have used that for jointly typical $\epsilon$-sequences $(u^m,x^m)$, $l_{u_1,x_1}\in [n(P_{U,X}(u_1,x_1)-\epsilon), n(P_{U,X}(u_1,x_1)+\epsilon)]$, and $l_{x_1}\in  [n(P_{X}(x_1)-\epsilon), n(P_{X}(x_1)+\epsilon)]$.

3) The encoder is insensitive to permutations. Due to typicality encoding the probability that a vector $x^n$ is mapped to $y^n$ depends only on their joint type and is equal to the probability that $\pi(x^n)$ is mapped to $\pi(y^n)$.
\end{proof}
\subsection{Proof of Proposition \ref{prop:5.1}}
\begin{proof}

Fix $k,k'\in \mathbb{N}$. Define the permutation ${\pi_{k\to k'}}\in S_n$ as the permutation which switches the $k$th and $k'$th elements and fixes all other elements. Also, let $\mathscr{E}$ be the set of all mappings $e:\{0,1\}^n\to \{0,1\}^n$. 
\begin{align*}
  P_{\mathscr{S}}\left(\sum_{{\mathbf{i}}:N_{\mathbf{i}}\leq m, \mathbf{i}\neq \mathbf{i}_k} \mathbf{P}_{k,{\mathbf{i}}}>\gamma\right)
 &=\sum_{\underline{e}\in \mathscr{E}}
 P_{\mathscr{S}}\left(\underline{e}\right)\mathbbm{1}\left(\sum_{{\mathbf{i}}:N_{\mathbf{i}}\leq m, \mathbf{i}\neq  \mathbf{i}_k|\underline{e}} \mathbf{P}_{k,{\mathbf{i}}}>\gamma\right)
\\&\stackrel{(a)}{=}\sum_{\underline{e}\in \mathscr{E}}
 P_{\mathscr{S}}\left(\underline{e}_{\pi_{k\to k'}}\right)\mathbbm{1}\left(\sum_{{\mathbf{i}}:N_{\mathbf{i}}\leq m, \mathbf{i}\neq \mathbf{i}_k} \mathbf{P}_{k,{\mathbf{i}}}>\gamma|\underline{e}\right)\\
  \\&\stackrel{(b)}{=}\sum_{\underline{g}\in \mathscr{E}}
 P_{\mathscr{S}}\left(\underline{g}\right)\mathbbm{1}\left(\sum_{{\mathbf{i}}:N_{\mathbf{i}}\leq m, \mathbf{i}\neq \mathbf{i}_k} \mathbf{P}_{{\pi_{k\to k'}}{k},{\pi_{k\to k'}}{\mathbf{i}}}>\gamma|\underline{g}\right)
\\&  =\sum_{\underline{g}\in \mathscr{E}}
 P_{\mathscr{S}}\left(\underline{g}\right)\mathbbm{1}\left(\sum_{{\underline{l}}:N_{\underline{l}}\leq m, \underline{l}\neq{\pi_{k\to k'}}\mathbf{i}_k} \mathbf{P}_{k',{\underline{l}}}>\gamma|\underline{g}\right)
\\&  =P_{\mathscr{S}}\left(\sum_{{\mathbf{i}}:N_{\mathbf{i}}\leq m, \mathbf{i}\neq \mathbf{i}_{k'}} \mathbf{P}_{k',{\mathbf{i}}}>\gamma\right),
  \end{align*}
where in (a) we have used property 3) in Definition \ref{def:SLCE}, and in (b) we have defined $\underline{g}\triangleq \underline{e}_{{\pi_{k\to k'}}}$ and used ${\pi^2_{k\to k'}}=1$.
  
\end{proof}

\subsection{Proof of Claim \ref{claim:expo}}
\begin{proof}
\begin{align*}
& \mathbb{E}_{\tilde{E}, X_{\mathbf{i}}}(\mathbb{E}^2_{X^n|X_{\mathbf{i}}}(\tilde{E}|X_{\mathbf{i}}))=\sum_{x_{\mathbf{i}}, \tilde{e}}P(x_{\mathbf{i}})P( \tilde{e})(\sum_{x_{\sim \mathbf{i}}}P(x_{\sim \mathbf{i}})\tilde{e}(x^n))^2\\
 &=\sum_{x_{\mathbf{i}}, \tilde{e}}P(x_{\mathbf{i}})P( \tilde{e})\sum_{x_{\sim \mathbf{i}}}\sum_{y^n: y_{\mathbf{i}}=x_{\mathbf{i}}}P(x_{\sim \mathbf{i}})P(y_{\sim \mathbf{i}})\tilde{e}(x^n)\tilde{e}(y^n)\\
 &=\sum_{x^n}P(x^n)\sum_{y^n:y_{\mathbf{i}}=x_{\mathbf{i}}}P(y_{\sim \mathbf{i}})\mathbb{E}_{\tilde{E}}(\tilde{E}(x^n)\tilde{E}(y^n))\\
 &=\sum_{x^n}P(x^n)\sum_{y^n:y_{\mathbf{i}}=x_{\mathbf{i}}, y^n \in B_n(x^n)}P(y_{\sim \mathbf{i}})\mathbb{E}_{\tilde{E}}(\tilde{E}(x^n)\tilde{E}(y^n))+ 
 \sum_{x^n}P(x^n)\sum_{y^n:y_{\mathbf{i}}=x_{\mathbf{i}}, y^n\notin B_n(x^n)}P(y_{\sim \mathbf{i}})\mathbb{E}_{\tilde{E}}(\tilde{E}(x^n)\tilde{E}(y^n))\\
 &\stackrel{(a)}{\leq} \sum_{x^n}P(x^n)\sum_{y^n:y_{\mathbf{i}}=x_{\mathbf{i}}, y^n \in B_n(x^n)}P(y_{\sim \mathbf{i}})+
 \sum_{x^n}P(x^n)\sum_{y^n:y_{\mathbf{i}}=x_{\mathbf{i}}, y^n\notin B_n(x^n)}P(y_{\sim \mathbf{i}})\mathbb{E}_{\tilde{E}}(\tilde{E}(x^n)\tilde{E}(y^n))\\
 &=P(Y^n\in B_{n}(X^n)|Y_{\mathbf{i}}=X_{\mathbf{i}})+ \sum_{x^n}P(x^n)\sum_{y^n:y_{\mathbf{i}}=x_{\mathbf{i}}, y^n\notin B_n(x^n)}P(y_{\sim \mathbf{i}})\mathbb{E}_{\tilde{E}}(\tilde{E}(x^n)\tilde{E}(y^n))\\
 &\stackrel{(b)}{=}O(e^{-n\delta_X})+\sum_{x^n}P(x^n)\sum_{y^n:y_{\mathbf{i}}=x_{\mathbf{i}}, y^n\notin B_n(x^n)}P(y_{\sim \mathbf{i}})\mathbb{E}_{\tilde{E}}(\tilde{E}(x^n))\mathbb{E}_{\tilde{E}}(\tilde{E}(y^n))\\
 &\leq O(e^{-n\delta_X})+P(Y^n\in B_{n}(X^n)|Y_{\mathbf{i}}=X_{\mathbf{i}})+
\sum_{x_{\mathbf{i}}}P(x_{\mathbf{i}})\sum_{x_{\sim \mathbf{i}}}\sum_{y^n:y_{\mathbf{i}}=x_{\mathbf{i}}}P(x_{\sim \mathbf{i}})P(y_{\sim \mathbf{i}})\mathbb{E}_{\tilde{E}}(\tilde{E}(x^n))\mathbb{E}_{\tilde{E}}(\tilde{E}(y^n))\\
&=O(e^{-n\delta_X})+\mathbb{E}_{X_{\mathbf{i}}}(\mathbb{E}^2_{\tilde{E}, {X^n|X_{\mathbf{i}}}}(\tilde{E}|X_{\mathbf{i}})).\\
\end{align*}
\end{proof}
In (a) we use the fact that $\tilde{E}\leq 1$ by definition, in (b) follows from property 1) in Definition \ref{def:SLCE}.
%
%

\bibliographystyle{unsrt}
  \bibliography{FLReferences}   

%
 \end{document}